\xpretocmd{\part}{\setcounter{section}{0}}{}{}
\def\ps@pprintTitle{%
 \let\@oddhead\@empty
 \let\@evenhead\@empty
 \def\@oddfoot{\centerline{\thepage}}%
 \let\@evenfoot\@oddfoot}
\patchcmd{\@endtheorem}{\@endpefalse}{}{}{}
\patchcmd{\endproof}{\@endpefalse}{}{}{}
\newcolumntype{M}[1]{>{\centering\arraybackslash}m{#1}}
\newcolumntype{P}[1]{>{\centering\arraybackslash}p{#1}}
\theoremstyle{plain}
\newtheorem{theorem}{Theorem}
\newtheorem{lemma}[theorem]{Lemma}
\theoremstyle{definition}
\theoremstyle{remark}
\newtheorem{remark}[theorem]{Remark}
\DeclareSymbolFont{bbold}{U}{bbold}{m}{n}
\DeclareSymbolFontAlphabet{\mathbbold}{bbold}
\numberwithin{equation}{section}
\definecolor{forestgreen}{rgb}{0.33,0.61,0.34}
\newlength\myindention
\begin{document}

\begin{frontmatter}



\title{Fixation dynamics on multilayer networks}


\author[label1]{Ruodan Liu}
\author[label1]{Naoki Masuda\corref{cor1}\fnref{label2,label3}}
\ead{naokimas@gmail.com}
\cortext[cor1]{Corresponding author}
 \address[label1]{Department of Mathematics, State University of New York at Buffalo, Buffalo, NY 14260-2900, USA}
 \address[label2]{Institute for Artificial Intelligence and Data Science, State University of New York at Buffalo, Buffalo, NY 14260-5030, USA}
 \address[label3]{Center for Computational Social Science, Kobe University, Kobe, 657-8501, Japan}

\begin{abstract}
Network structure has a large impact on constant-selection evolutionary dynamics, with which multiple types of fitness (i.e., strength) compete on the network. Here we study constant-selection dynamics on two-layer networks in which the fitness of a node in one layer affects that in the other layer, under birth-death processes and uniform initialization, which are commonly assumed. We show mathematically and numerically that two-layer networks are suppressors of selection, which means that they suppress the effects of the different fitness values among the different types on the final outcomes of the evolutionary dynamics (called fixation probability) relative to the constituent one-layer networks. In fact, many two-layer networks are suppressors of selection relative to the most basic baseline, the Moran process. This result is in stark contrast with the results for conventional one-layer networks for which most networks are amplifiers of selection.
\end{abstract}

\begin{keyword}
Evolutionary dynamics, fixation probability, constant selection, multilayer networks, amplifier, suppressor
\end{keyword}

\end{frontmatter}

\part*{}

\section{Introduction}\label{sec:introduction}

Evolutionary dynamics is a mathematical modeling framework that allows us to investigate how the composition of different traits in a population changes over time under the assumption that fitter individuals tend to reproduce more often. For example, evolutionary game theory focuses on situations in which the fitness is determined by the game interaction between individuals, such as the prisoner's dilemma game \cite{%
%
Nowak2006book,Sigmund2010book,Perc2017PhyRep}. Another example, which we focus on in the present study, is evolutionary graph theory in which one investigates the effects of network structure and possibly its variation over time on the evolution of traits \cite{Lieberman2005nature, Nowak2006book,Szabo2007PhyRep,Shakarian2012Biosys,Perc2013JRSocInterface}.
In particular, studies of evolutionary games on networks have revealed that the conditions under which cooperation occurs in social dilemma games heavily depend on the network structure and that these conditions can be mathematically derived using random walk theory~\cite{Ohtsuki2006Nature,Allen2017nature}.

Let us consider the constant-selection evolutionary dynamics on networks. In this dynamics, different types are assigned different constant fitness values, each node of the given network is occupied by one of these types, and the different types compete for survival. One can view this dynamics as competition between resident and mutant phenotypes in structured populations or as social dynamics of opinions in which people switch between different opinions when influenced by their neighbors in the network.

A core property of constant-selection evolutionary dynamics on networks is the fixation probability. It is the likelihood that the mutant type initially occupying a single node of the network ultimately fixates, i.e., the mutant type eventually occupies all the nodes of the network, under the assumption that there is no mutation (i.e., the type on any node does not spontaneously change during the dynamics except when influenced by their neighbors). The fixation probability depends on the network structure, the fitness of the mutant type (denoted by $r$) relative to the fitness of the resident type (which is normalized to be 1), and on the initial condition~\cite{Lieberman2005nature,Nowak2006book,Shakarian2012Biosys}. 
The mutant type is more likely to fixate if $r$ is large. The extent to which the fixation probability of the mutant type increases with rising $r$ hinges on the network structure. Some networks are known to be amplifiers of selection. By definition, in a network amplifying selection, a single mutant has a larger fixation probability than the case of the well-mixed population with the same number of nodes, which is equivalent to the so-called Moran process, at any $r>1$, and has a lower fixation probability than the case of the Moran process at any $r<1$. In amplifying networks, the effect of the difference between the mutant and resident types in terms of the fitness (i.e., $r$ versus $1$) is magnified by the network. In contrast, other networks are suppressors of selection such that a single mutant has a lower fixation probability than the case of the Moran process at any $r>1$ and vice versa at any $r<1$. Under a standard assumption of the birth-death process with selection on the birth and uniform initialization, it has been shown that most networks are amplifiers of selection~\cite{Hindersin2015PLoSCB, Cuesta2018PlosOne, Allen2021PlosComputBiol}. Suppressors of selection are rare~\cite{Cuesta2017PlosOne,Cuesta2018PlosOne}.

Studies have shown that the amplifiers of selection under the birth-death process are not necessarily common when we introduce additional factors into evolutionary graph dynamics models, such as the nonuniform initialization~\cite{Adlam2015ProcRSocA,Pavlogiannis2018CommBiol}, directed networks~\cite{Masuda2009JTB}, metapopulation models~\cite{Yagoobi2021SciRep,Marrec2021PRL}, temporal (i.e., time-varying) networks~\cite{Gyan2023JMB}, and hypergraphs~\cite{Liu2023PLoSCB}.
%
%
These results encourage us to study evolutionary dynamics on other extensions of conventional networks with the expectation that the dynamics on them may be drastically different from those on conventional networks.

In the present study, we explore constant-selection evolutionary dynamics on multilayer networks.
Multilayer networks express the situation in which the individuals in a population are pairwise connected by different types of edges, such as different types of social relationships; the same pair of individuals may be directly connected by multiple types of edges~\cite{Kivela2014JCN,Boccaletti2014PhyRep,Bianconi2018book,DeDomenico2023NatPhys}. In evolutionary dynamics on multilayer networks, each layer, corresponding to one type of edge, is a network, and evolutionary dynamics in different network layers are coupled in some manner. This setting has been investigated for evolutionary social dilemma games. See~\cite{Wang2015PhyJB} for a review.
Earlier work considered two-layer networks in which the game interaction occurs in one network layer, and imitation of strategies between players occurs in the other network layer. Cooperation is more enhanced in this model if the edges overlap more heavily between the two layers~\cite{Ohtsuki2007PRL,Ohtsuki2007JTB,Wang2014PRE,Su2019ProcBiolSci}
or under other conditions~\cite{Chen2021NewJPhys, Li2021PhysicaA}
%
%
(but see~\cite{Inaba2023SciRep}). 
When players are assumed to be engaged in game interactions, not just imitation of strategies, in the different layers,
multilayer networks promote cooperation under conditions such as positive degree correlation between two layers~\cite{Duh2019NewJPhys} and asynchronous strategy updating~\cite{Allen2017PhysA}.
Cooperation can thrive in this class model even if each network layer in isolation does not support cooperation~\cite{Su2022nhb}.
However, to the best of our knowledge, constant-selection evolutionary dynamics on multilayer networks have not been studied.


We particularly use two-layer networks. We introduce two models of constant-selection dynamics in multilayer networks, which are analogues of an evolutionary game model in multilayer networks \cite{Su2022nhb}, and we semianalytically calculate the fixation probability of mutants for each network layer for two-layer networks with high symmetry. Using martingale techniques, we also theoretically prove that the complete graph layer and the cycle graph layer in a two-layer network are suppressors of selection, and that the star graph layer and the complete bipartite layer in a two-layer network are more suppressing than the corresponding one-layer network. We numerically show that all the two-layer networks that we have numerically investigated are suppressors of selection, except for the coupled star networks. However, the coupled star networks are more suppressing than the one-layer star graphs. In this manner, we conclude that two-layer networks suppress the effects of selection.

\section{Moran process}\label{moran-process}

The Moran process is a model of stochastic constant-selection evolutionary dynamics in a well-mixed finite population with $N$ individuals. The population consists of two types of individuals, i.e., the resident and mutant, with constant fitness values, $1$ and $r$, respectively. At each time step, an individual is selected as the parent for reproduction with probability proportional to its fitness, and an individual dies uniformly at random. Then, the parent's offspring replaces the dead individual. The fixation probability for a single mutant is given by~\cite{Lieberman2005nature, Nowak2006book}
\begin{equation}\label{fp-moran-bd}
\rho = \frac{1-1/r}{1-1/r^N}.
\end{equation}
Extensions of the Moran process to networks depend on specific update rules to be assumed. The network may be directed or weighted. A major variant of the updating rule that we consider in the present paper is the birth-death process with selection on the birth, or the Bd rule \cite{Masuda2009JTB, Shakarian2012Biosys, Pattni2015ProcRSocA}, which operates as follows. At each time step, an individual is selected as the parent, denoted by $u$, for reproduction with probability proportional to its fitness. This step is the same as in the Moran process. Then, $u$'s type replaces the type of a neighbor of $u$, which is selected with probability proportional to the edge weight between $u$ and itself.
We use the Bd rule because a majority of work on constant-selection evolutionary dynamics on networks does so \cite{Lieberman2005nature,Chalub2016JofDynGames,Askari2015PRE,Monk2014PRSocA,Giakkoupis2016arxiv,Galanis2017JAcm,Pavlogiannis2017SciRep,Pavlogiannis2018CommBiol,Goldberg2019TheorComputSci}. However, death-birth processes also give important insights into constant-selection evolutionary dynamics \cite{Hindersin2015PLoSCB,Kaveh2015RSocOSci,Pattni2015ProcRSocA,Allen2020PLoSCB}, and we briefly examine this with our two-layer network model in section~\ref{m1-db-rule}.

In a directed and weighted network, the edge direction indicates a one-way relationship between the two nodes. A network is an isothermal graph if the weighted in-degree (i.e., sum of the edge weight over all incoming edges to a node) is the same for all nodes. Unweighted regular graphs are examples of isothermal graphs. The fixation probability for an isothermal graph is given by Eq.~\eqref{fp-moran-bd} \cite{Lieberman2005nature, Nowak2006book}.

The fixation probability for a single mutant of the Moran process is $1/N$ at $r=1$ \cite{DonnellyWelsh1983MPCPS,MasudaOhtsuki2009NewJPhys,Broom2010ProcRSocA}. Relative to the Moran process, many networks are either amplifiers or suppressors of selection \cite{Hindersin2015PLoSCB, Cuesta2018PlosOne, Allen2021PlosComputBiol,Lieberman2005nature,Giakkoupis2016arxiv,Galanis2017JAcm,Pavlogiannis2017SciRep,Pavlogiannis2018CommBiol,Goldberg2019TheorComputSci,Cuesta2017PlosOne}. Amplifiers of selection are networks in which the fixation probability is larger than that for the Moran process (i.e.,
Eq.~\eqref{fp-moran-bd}) for any $r>1$ and smaller than that for the Moran process for any $r<1$. Suppressors of selection are networks in which the fixation probability is smaller and larger than that for the Moran process for any $r>1$ and $r<1$, respectively.

\section{Models}

We introduce two models of constant-selection evolutionary dynamics for a population of $N$ individuals in undirected and possibly weighted multilayer networks. The assumption of the undirected network is given for simplicity, and it is straightforward to generalize the following models to the case of directed multilayer networks. We assume a two-layer network as the population structure, whereas it is straightforward to generalize the models to the case of more than two layers. 
Each layer is assumed to be a connected network with $N$ nodes. Each layer represents one of the two types of relationships between individuals such as physical proximity or, in the case of human social networks, online social relationship. We call each node in one layer the replica node; there are $2N$ replica nodes in the entire two-layer network. Each replica node has a corresponding replica node in the other layer. A pair of the corresponding replica nodes, one in each layer, represents an individual (see Figure~\ref{fig:replica-node} for a schematic).
Each edge within a layer represents direct connectivity between two replica nodes in the same layer. Two individuals may be adjacent to each other in both layers, just one layer, or neither layer. For example, two people may directly interact both in person and online, or in only one of the two ways.
 
\begin{figure}[H]
  \centering
  \includegraphics[width=0.9\linewidth]{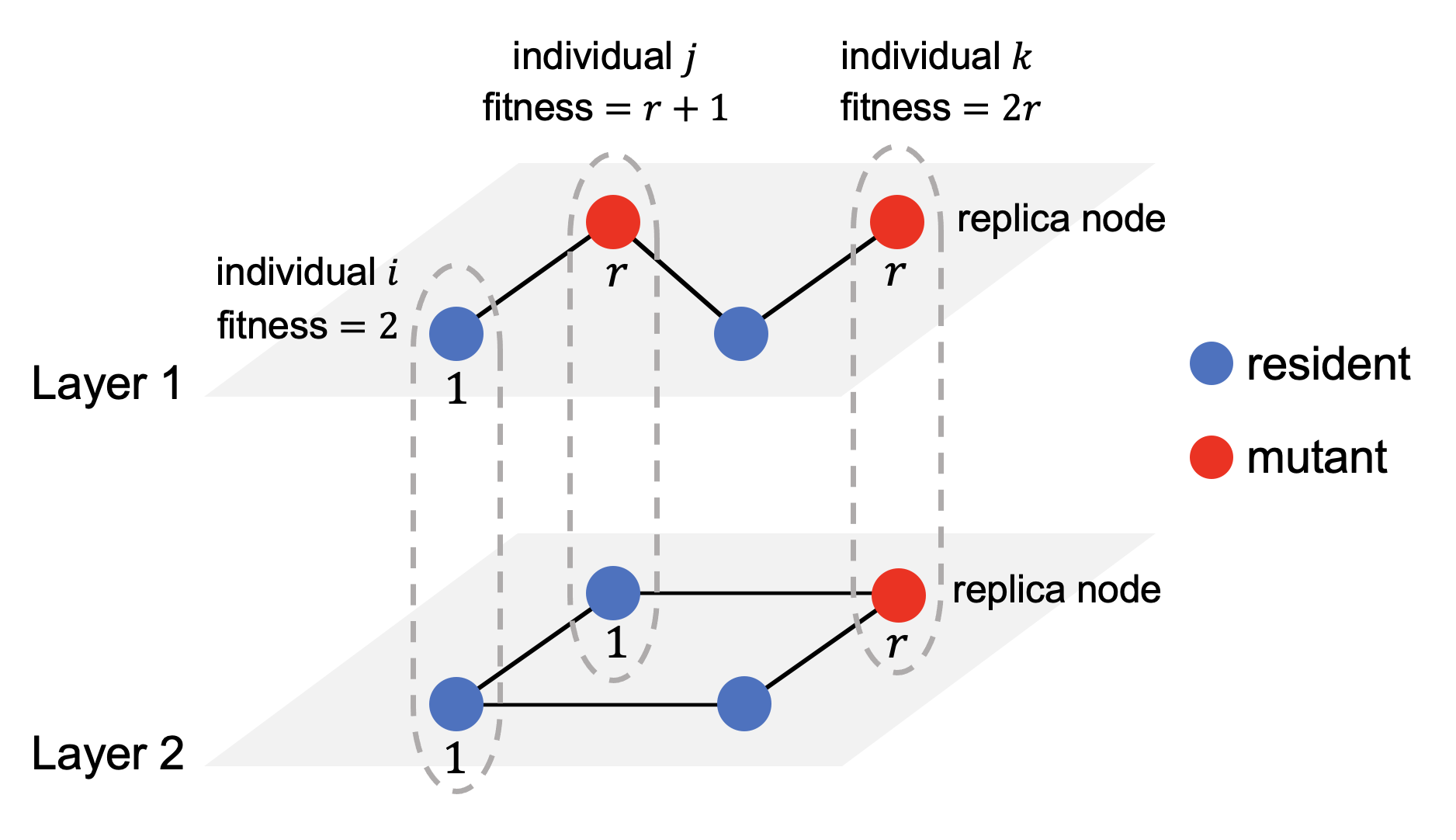}
\caption{An example of a two-layer network. Each individual occupies a replica node in layer 1 and the corresponding replica node in layer 2, as indicated by dashed lines. A resident replica node and a mutant replica node are shown in blue and red, respectively.
}
\label{fig:replica-node}
\end{figure}

Both models extend the Bd process on conventional (i.e., monolayer) networks and the Moran process in well-mixed populations to the case of two-layer networks. We assume that each of the $2N$ replica nodes takes either the resident or mutant type at any discrete time. The resident and mutant have fitness $1$ and $r$, respectively, which are constant over time. We define the fitness of each individual by the sum of the fitness of the corresponding replica nodes in both layers \cite{Su2022nhb}. In other words, the individual has fitness $2$ if it is of the resident type in both layers, $r+1$ if it is of the mutant type in one layer and the resident type in the other layer, and $2r$ if it is of the mutant type in both layers. 
We allow each individual to adopt different types in the opposite layers (i.e., the resident type in one layer and the mutant type in the other layer) because they may behave differently in different types of social relationships. Furthermore, success or failure of an individual in one type of social relationship may affect the same in the other domain, which motivates us to couple the fitness of each individual across the two layers~\cite{Su2022nhb}.

The model assumptions up to this point are shared by models 1 and 2.
Next, in model 1, in each time step, we select one individual (i.e., parent) for reproduction with probability proportional to its fitness. Then, we select one of the two layers to operate the Bd process with the equal probability, i.e., $1/2$. Then, the parent selects one of its neighbors in the selected layer with probability proportional to the weight of the edge between the two individuals. Finally, the parent converts the type of the selected neighbor into the parent's type in the selected layer. This concludes one time step of the Bd process. We repeat this procedure until the entire population settles into an absorbing state in which all individuals are of either resident or mutant type in each layer. It should be noted that the final state in the two layers may be different, i.e., resident in one layer and mutant in the other layer.
This phenomenon may represent the situation in which two opinions or behaviors, $O_1$ and $O_2$, are competing in one layer, and two others, $O_3$ and $O_4$, are competing in the other layer. Then, all individuals may adopt the combination of $O_1$ and $O_3$ in the end or the combination of $O_1$ and $O_4$, for example.

In each time step in model 2, we first select an individual $i$ as the parent with probability proportional to its fitness in each time step. 
This process is the same as that in model 1. However, differently from model 1, we then do not select the layer but draw a neighbor of $i$ in layer 1, denoted by $j$, with probability proportional to the edge weight $w_{ij}^{[1]}$, and $j$ copies $i$'s type. At the same time, we select an individual $k$ as another parent with probability proportional to its fitness. Then, we select a neighbor of $k$ in layer 2, denoted by $\ell$, with probability proportional to the edge weight $w_{k\ell}^{[2]}$, and $\ell$ copies $k$'s type. Individual $k$ may be the same as individual $i$. This model is the same as the main model proposed in \cite{Su2022nhb}, except that the latter model used a death-birth instead of birth-death process, and in the latter model the fitness for each individual was determined by two-player games in their model and therefore was not constant for each type in general. We consider model 2 in addition to model 1 because model 2 is a direct extension of the model proposed in \cite{Su2022nhb}. On the other hand, model 1 is more amenable to mathematical analysis of fixation dynamics than model 2.

\section{Theoretical results}

\subsection{Neutral drift}

In this section, we focus on the case of neutral mutants, i.e., $r=1$.
The fixation probability for the neutral mutant type when there is initially just one mutant node selected uniformly at random must be equal to $1/N$ to enable us to discuss amplifiers and suppressors of selection. We start by proving this property for two-layer networks.

\begin{theorem}
Consider model 1 under $r=1$. When there are initially $i$ mutants selected uniformly at random from the $N$ replica nodes in one layer, the fixation probability for the mutant for that layer is equal to $i/N$. 
\end{theorem}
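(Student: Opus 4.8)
The plan is to produce a bounded martingale out of the layer-1 mutant configuration and apply the optional stopping theorem. The key preliminary observation is that when $r=1$ every replica node has fitness $1$, so in model~1 the parent is chosen uniformly at random and, crucially, the mutant set in the focal layer (say layer~1) evolves without ever ``seeing'' layer~2: with probability $1/2$ layer~2 is picked and layer~1 is untouched, and with probability $1/2$ layer~1 is picked, a uniformly random parent $p$ is selected, and $p$ overwrites a layer-1 neighbour $q$ --- drawn with probability $w_{pq}^{[1]}/d_p^{[1]}$, where $d_p^{[1]}=\sum_{q'}w_{pq'}^{[1]}$ --- with $p$'s layer-1 type. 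Hence it suffices to analyse this autonomous Markov chain on the subsets $A_t\subseteq\{1,\dots,N\}$ of layer-1 mutant nodes, started from an arbitrary fixed $A_0$ with $|A_0|=i$.

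The naive candidate $|A_t|/N$ is \emph{not} a martingale in general (already on a star layer, the centre being a mutant produces a strictly positive drift in $|A_t|$), so each node must be weighted by its reproductive value. I would introduce the ancestral (backward) random walk on layer~1 that, from node $v$, moves to a neighbour $u$ with probability $\tfrac{1}{2N}\,w_{uv}^{[1]}/d_u^{[1]}$ and stays put otherwise; this chain is irreducible because layer~1 is connected, hence has a unique stationary distribution $\pi=(\pi_v)_v$ with $\pi_v>0$ and $\sum_v\pi_v=1$ (intuitively, $\pi_v$ is the limiting probability that a single ancestral lineage occupies $v$). The claim is that
\[
Y_t=\sum_{v\in A_t}\pi_v
\]
is a martingale. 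Only a layer-1 update can change $Y_t$, and computing $\mathbb{E}[Y_{t+1}-Y_t\mid A_t]=\sum_{p,q}\tfrac{1}{2N}\tfrac{w_{pq}^{[1]}}{d_p^{[1]}}\,\pi_q\bigl(\mathbf 1[p\in A_t]-\mathbf 1[q\in A_t]\bigr)$ and regrouping the gain and loss sums, the stationarity identity $\pi_u\sum_w w_{uw}^{[1]}/d_w^{[1]}=\sum_v\pi_v\,w_{uv}^{[1]}/d_u^{[1]}$ forces exact cancellation, so the conditional increment is zero.

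To conclude, I would note that once layer~1 is monomorphic it stays monomorphic (a parent can only copy the common type), so $A=\emptyset$ and $A=\{1,\dots,N\}$ are absorbing for the layer-1 chain, with $Y=0$ and $Y=1$ there respectively, and layer~1 reaches one of them almost surely: whenever $A_t$ is neither empty nor full, connectedness of layer~1 yields a mutant--resident edge whose resident endpoint is flipped at the next step with a probability bounded below by a graph-dependent constant, and since $\pi_v>0$ this flip changes $Y_t$ by at least $\min_v\pi_v>0$; as a bounded martingale taking finitely many values converges almost surely and is therefore eventually constant, $A_t$ cannot stay non-monomorphic forever. Thus $Y_\infty\in\{0,1\}$ with $Y_\infty=1$ exactly on the event that layer~1 fixates to all mutants, and optional stopping gives $\Pr(\text{layer 1 fixates to mutant}\mid A_0)=Y_0=\sum_{v\in A_0}\pi_v$. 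Finally, averaging over $A_0$ chosen uniformly among the $\binom{N}{i}$ size-$i$ subsets, exchangeability gives $\Pr(v\in A_0)=i/N$ for every $v$, so the fixation probability equals $\sum_v\pi_v\cdot(i/N)=i/N$.

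The step I expect to be the real obstacle is pinning down the right martingale: recognising that the mutant fraction in a layer is not conserved in expectation, that one must instead use the reproductive-value weights $\pi$ arising from the ancestral walk, and then verifying the cancellation via the stationarity identity. The almost-sure absorption of the layer-1 subsystem and the resulting legitimacy of optional stopping are routine finite-Markov-chain facts by comparison.
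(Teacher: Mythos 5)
Your proof is correct, and it rests on exactly the same key observation as the paper's: at $r=1$ every individual has the same fitness, so the parent is chosen uniformly and the layer-1 dynamics is an autonomous single-layer Bd process, reducing the claim to the classical neutral-drift result for conventional networks. Where you diverge is that the paper stops there and cites the known single-layer result (Donnelly--Welsh, Masuda--Ohtsuki, Broom et al.), whereas you prove it from scratch: the reproductive-value weights $\pi_v$ from the ancestral walk, the martingale $Y_t=\sum_{v\in A_t}\pi_v$ verified via the stationarity identity, almost-sure absorption, optional stopping, and finally the averaging over the uniformly random initial set, which is the step that converts the configuration-dependent fixation probability $\sum_{v\in A_0}\pi_v$ into the configuration-independent $i/N$. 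This is essentially the standard argument contained in the cited references, so your proof buys self-containedness at the cost of length; all the individual steps (the martingale identity, the eventual-constancy argument for absorption, the exchangeability average) check out. One cosmetic slip: in your star-graph aside the sign is backwards --- under Bd updating with uniform parent selection, a lone mutant at the hub has \emph{negative} drift in $|A_t|$ (any of the $N-1$ leaves chosen as parent overwrites the hub), while a lone leaf mutant has positive drift; either case makes your actual point that $|A_t|/N$ fails to be a martingale on non-isothermal layers.
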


\begin{proof}
When $r=1$, the fitness of each individual is always equal to 2. Then, the Bd process in layer 1 is independent of that in layer 2. Therefore, the proof is exactly the same as that for conventional networks as shown in \cite{DonnellyWelsh1983MPCPS,MasudaOhtsuki2009NewJPhys,Broom2010ProcRSocA,Liu2023PLoSCB}.
\end{proof}

\begin{remark}
This theorem also holds true for model 2 with the proof being unchanged.
\end{remark}

\subsection{Complete graph layer in a two-layer network is always a suppressor of selection}

In this section, we show that the complete graph layer in an arbitrary two-layer network is always a suppressor of selection under model 1.
To this end, we let $\xi_t \in \{0, 1\}^{2N}$, with $t \in \{0, 1, \ldots \}$, be the state of the Bd process on the two-layer network at time $t$. The initial condition is given by $\xi_0$. We conveniently define $t$ as the number of the state changes in layer 1, which we assumed to be the complete graph. In other words, when counting $t$, we ignore the updating steps in which a replica node in layer 1 is selected as the parent but does not induce the actual change of the state of the network (because the child node has the same type as that of the parent) or a replica node in layer 2 is selected as the parent (because there is then no change in the state in layer 1). 
We consider model 1 in the following text unless stated otherwise.

\begin{lemma}\label{yn-submartingale}
Consider the Bd process on the two-layer network in which layer 1 is the unweighted complete graph and layer 2 is an arbitrary connected network. We let $X_t$ be the number of mutants in the first layer at time $t$ and set $Y_t \equiv r^{-X_t}$. Then, sequence $\{Y_n\}$ is a submartingale for any $r>0$.
\end{lemma}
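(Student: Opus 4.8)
The plan is to reduce the submartingale property $E[Y_{n+1}\mid\mathcal{F}_n]\ge Y_n$ — with $\mathcal{F}_n$ the information generated by the two-layer process up to and including the $n$-th state change in layer~1 — to a single one-step inequality. By construction each increment of $n$ is a genuine update in the complete-graph layer, so $X_{n+1}-X_n\in\{+1,-1\}$ unless layer~1 is already absorbed (in which case $Y_{n+1}=Y_n$ trivially). I would condition on the two-layer configuration $\sigma$ present at the instant just before the $(n+1)$-st layer-1 update; its layer-1 part still carries $X_n$ mutants, and writing $p(\sigma)$ for the probability that this update raises $X$, the Markov property gives $E[Y_{n+1}\mid\mathcal{F}_n,\sigma]=r^{-X_n}\big(r^{-1}p(\sigma)+r(1-p(\sigma))\big)$. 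By the tower property it therefore suffices to prove $r^{-1}p(\sigma)+r(1-p(\sigma))\ge 1$ for every such $\sigma$. Multiplying by $r>0$ and rearranging, this is $p(\sigma)(1-r^2)\ge r(1-r)$, i.e.\ $p(\sigma)/(1-p(\sigma))\ge r$ when $r<1$, the reverse inequality when $r>1$, and the triviality $0\ge0$ when $r=1$.

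The second step is to evaluate the odds $p(\sigma)/(1-p(\sigma))$ using that layer~1 is the \emph{unweighted complete graph}. Whenever layer~1 is the selected layer (probability $1/2$) and a parent $u$ has been drawn with probability proportional to its individual fitness, the child is uniform over the other $N-1$ replica nodes of layer~1; $X$ increases exactly when $u$ is a layer-1 mutant and the child a layer-1 resident, and decreases in the opposite case. Summing over parents, the probability of an increasing update from $\sigma$ is $\tfrac12\,\tfrac{\Phi_M}{F}\,\tfrac{N-X_n}{N-1}$ and that of a decreasing update is $\tfrac12\,\tfrac{\Phi_R}{F}\,\tfrac{X_n}{N-1}$, where $F$ is the total fitness and $\Phi_M$, $\Phi_R$ are the total individual fitness of the $X_n$ layer-1-mutant individuals and of the $N-X_n$ layer-1-resident individuals in $\sigma$. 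The common factors cancel in the ratio, so $p(\sigma)/(1-p(\sigma))=\Phi_M(N-X_n)/\big(\Phi_R X_n\big)=\bar f_M/\bar f_R$, the ratio of the average individual fitness over layer-1 mutants to that over layer-1 residents.

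The last step, and the crux, is the bound $\bar f_M/\bar f_R\ge r$ when $r\le 1$ and $\le r$ when $r\ge 1$, holding uniformly in $\sigma$. Each layer-1-mutant individual has fitness $r+c$ with $c\in\{1,r\}$ the fitness of its layer-2 replica, hence $\bar f_M=r+g_M$ with $g_M\in[\min(1,r),\max(1,r)]$; similarly $\bar f_R=1+g_R$ with $g_R\in[\min(1,r),\max(1,r)]$. Therefore $\bar f_M-r\bar f_R=g_M-rg_R$, which is $\ge 0$ when $r\le 1$ (since then $g_M\ge r\ge rg_R$) and $\le 0$ when $r\ge 1$ (since then $g_M\le r$ and $g_R\ge 1$), the case $r=1$ being the triviality already noted. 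Combined with the reduction above this yields $E[Y_{n+1}\mid\mathcal{F}_n,\sigma]\ge Y_n$ for every admissible $\sigma$, whence $E[Y_{n+1}\mid\mathcal{F}_n]\ge Y_n$; integrability is automatic since $Y_n$ lies between $1$ and $r^{-N}$. The step I expect to be the main obstacle is controlling the inter-layer coupling: the per-step drift of $X$ depends on the whole (drifting) layer-2 configuration, so $p$ is not a function of $X_n$ alone, and one must check that the selection advantage a layer-1 mutant carries — its fitness exceeds by $r-1$ that of the same individual as a layer-1 resident with the same layer-2 type — is attenuated, but never reversed in sign, by averaging over layer-2 states, which is exactly what the two-sided bounds on $g_M$ and $g_R$ guarantee.
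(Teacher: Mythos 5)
Your proof is correct, and its skeleton is the same as the paper's: condition on the full two-layer configuration just before the next layer-1 change, note that $X$ moves by $\pm 1$, and reduce the submartingale property to the one-step inequality $r^{-1}p+r(1-p)\ge 1$, i.e.\ to comparing the odds $p/(1-p)$ with $r$; your raw up/down probabilities are exactly Eqs.~\eqref{eq:p'}--\eqref{eq:q'} with $\Phi_M=2rN_1+(r+1)N_2$ and $\Phi_R=(r+1)N_3+2N_4$. Where you genuinely diverge is in how the odds are compared with $r$: the paper writes $p(\xi_t)=\frac{r}{r+1}-\varepsilon$ and computes $\varepsilon$ explicitly as $(r-1)$ times a manifestly nonnegative ratio, whereas you observe that $p/(1-p)=\bar f_M/\bar f_R$ with $\bar f_M=r+g_M$, $\bar f_R=1+g_R$, and $g_M,g_R\in[\min(1,r),\max(1,r)]$, so that $\bar f_M-r\bar f_R=g_M-rg_R$ has the required sign in each regime. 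Your version is more conceptual: it isolates the mechanism (the layer-2 contribution to each individual's fitness is squeezed between $1$ and $r$, which attenuates but can never reverse the layer-1 selection bias) and avoids the expansion; if one does expand, $g_M-rg_R$ has numerator $(1-r)\left[rN_1N_3+(r+1)N_2N_3+N_2N_4\right]$, which is exactly the numerator of $-\varepsilon$, so the two computations agree term by term. You are also slightly more careful than the paper on one point: you condition on the configuration at the instant before the layer-1 update, which need not equal $\xi_t$ because layer 2 may have changed in the interim, and you note that the bound is uniform over all such configurations with $X$-value $X_n$ — this uniformity is precisely what makes the tower-property step legitimate.
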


\begin{proof}
Let $\{\mathcal{B}_t\}$ be the filtration, i.e., an increasing sequence of the $\sigma$-algebras, generated by the Bd process on the two-layer network.
We obtain $X_{t+1} = X_t + 1$ or $X_{t+1} = X_t -1$ because we count the time $t$ if and only if the number of the mutants changes in the complete graph layer. For an arbitrary state of the two-layer network with $X_t$ mutants, $\xi_t$, we denote by $p(\xi_t)$ and $q(\xi_t)$ the probabilities with which $X_{t+1} = X_t + 1$ and $X_{t+1} = X_t - 1$, respectively. Note that $p(\xi_t) + q(\xi_t) = 1$.

To calculate $p(\xi_t)$ and $q(\xi_t)$, we denote by $N_1$ the number of individuals that have the mutant type in both layers, by $N_2$ the number of individuals that have the mutant type in layer 1 and the resident type in layer 2, by $N_3$ the number of individuals that have the resident type in layer 1 and the mutant type in layer 2, and by $N_4$ the number of individuals that have the resident type in both layers. Note that $N_1 + N_2 + N_3 + N_4 = N$. In a single time step of the original Bd process, $X_t$ increases by one with probability
\begin{equation}
p' = \frac{2r N_1 + (r+1) N_2}{2r N_1 + (r+1)(N_2 + N_3) + 2N_4} \cdot \frac{1}{2} \cdot \frac{N_3 + N_4}{N_1 + N_2 + N_3 + N_4 -1}
\label{eq:p'}
\end{equation}
and decreases by one with probability
\begin{equation}
q' = \frac{(r+1)N_3 + 2 N_4} {2r N_1 + (r+1)(N_2 + N_3) + 2N_4} \cdot \frac{1}{2} \cdot \frac{N_1 + N_2}{N_1 + N_2 + N_3 + N_4 -1}.
\label{eq:q'}
\end{equation}  
By combining Eqs.~\eqref{eq:p'} and \eqref{eq:q'} with $p(\xi_t)/q(\xi_t) = p'/q'$ and $p(\xi_t) + q(\xi_t) = 1$, we obtain
\begin{align}
p(\xi_t) =& \frac{r}{r+1} - \varepsilon,\\
q(\xi_t) =& \frac{1}{r+1} + \varepsilon,
\end{align}
where
\begin{equation}
\varepsilon = \frac{(r-1)\left[r N_1 N_3 + (r+1) N_2 N_3 + N_2 N_4\right]}
{(r+1) \left\{ \left[ 2rN_1 + (r+1)N_2\right](N_3+N_4) + \left[ (r+1) N_3 + 2N_4 \right](N_1+N_2) \right\}}.
\end{equation}

We obtain
\begin{align}
E[Y_{t+1} | \mathcal{B}_t] =& p(\xi_t) r^{-(X_t+1)} + q(\xi_t) r^{-(X_t-1)} \notag\\
=& \left[\left(\frac{r}{r+1} - \varepsilon \right) \frac{1}{r} + \left( \frac{1}{r+1} + \varepsilon \right) r \right] Y_t \notag\\
=& \left[ 1 + \left( r - \frac{1}{r} \right) \varepsilon \right] Y_t,
\label{eq:martingale-eq}
\end{align}
where $E[\cdot | \cdot]$ represents the conditional expectation.
If $r>1$, we obtain $E[Y_{t+1} | \mathcal{B}_t] \ge Y_t$ because $r-r^{-1} > 0$ and $\varepsilon \ge 0$.
If $r<1$, we also obtain $E[Y_{t+1} | \mathcal{B}_t] \ge Y_t$ because $r-r^{-1} < 0$ and $\varepsilon \le 0$.
Therefore, in both cases, $Y_t$ is a submartingale.
If $r=1$, we obtain $\varepsilon = 0$ such that $Y_t$ is a martingale, which is a submartingale.
\end{proof}

\begin{remark}
Our choice of $Y_t$ is inspired by the construction of a martingale for the biased random walk on $\mathbb{Z}$
(see, e.g., \cite{Feller1971book2,Durrett1996book}) and its application to constant-selection evolutionary dynamics \cite{Monk2014PRSocA,Adlam2015ProcRSocA,Monk2018JofTheorBiol}. 
\end{remark}

\begin{theorem}\label{complete-thm}
Consider the Bd process on the two-layer network in which layer 1 is the unweighted complete graph and layer 2 is an arbitrary connected network. Then, the complete graph layer is a suppressor of selection.
\end{theorem}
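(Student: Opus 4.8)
The plan is to feed the bounded submartingale $\{Y_t\}$ of Lemma~\ref{yn-submartingale} into the optional stopping theorem. Let $\tau = \inf\{t \ge 0 : X_t \in \{0, N\}\}$ be the first counted time at which layer~1 (the complete graph) becomes monochromatic, and write $\phi$ for the fixation probability of the mutant in layer~1, i.e.\ $\phi = \Pr(X_\tau = N)$, starting from a single mutant. I would first note that $\tau < \infty$ almost surely: from any state, repeatedly taking one fixed individual as the parent (which has positive probability at every original step, since all fitnesses are positive) and spreading its layer-1 type drives layer~1 to monochromaticity within a bounded number of steps with probability bounded away from $0$ uniformly in the state, so absorption of layer~1 occurs a.s.\ in finite time by the usual argument for finite Markov chains. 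Since $X_t \in \{0,\dots,N\}$, the values of $Y_t = r^{-X_t}$ lie between $\min(1,r^{-N})$ and $\max(1,r^{-N})$, so $\{Y_t\}$ is bounded, and optional stopping for a bounded submartingale with an a.s.\ finite stopping time gives $E[Y_\tau] \ge E[Y_0]$.

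It then remains to evaluate both sides. A single initial mutant gives $X_0 = 1$ (which replica node it is being irrelevant on the complete graph), so $E[Y_0] = r^{-1}$; and $X_\tau \in \{0,N\}$ gives $E[Y_\tau] = \phi\, r^{-N} + (1-\phi)$. Therefore $\phi\,(r^{-N}-1) \ge r^{-1}-1$. For $r>1$ the factor $r^{-N}-1$ is negative, so dividing reverses the inequality and yields $\phi \le (1-r^{-1})/(1-r^{-N}) = \rho$, the Moran fixation probability of Eq.~\eqref{fp-moran-bd}; for $r<1$ the factor is positive and the same manipulation yields $\phi \ge \rho$. These are exactly the two defining inequalities of a suppressor of selection; the neutral case $r=1$, where the submartingale is a martingale and $\phi = 1/N$, is the boundary between the two regimes.

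To upgrade these to the strict inequalities required by the definition, I would use the Doob decomposition: by Eq.~\eqref{eq:martingale-eq} each conditional increment $E[Y_{t+1}\mid\mathcal B_t]-Y_t$ is a nonnegative multiple of the quantity $\varepsilon$ evaluated along the trajectory, so $E[Y_\tau] > E[Y_0]$ as soon as, with positive probability, the process visits before time $\tau$ a configuration with $\varepsilon \neq 0$. Since $\varepsilon$ has the sign of $r-1$ and vanishes only when $N_1N_3 = N_2N_3 = N_2N_4 = 0$, it is enough to reach a state in which layer~2 carries a mutant on an individual that is still a resident in layer~1 while layer~1 is not yet monochromatic, and this happens with positive probability in one or two steps from a single mutant (if the initial mutant is already a layer-2 resident, the first layer-1 change itself occurs from such a state; otherwise, let the initial mutant first reproduce in layer~2 onto a both-resident neighbor, which exists because layer~2 is connected and $N\ge2$). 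This strictness step is the only delicate point: because the configuration that governs a layer-1 increment is the one present when the increment actually occurs rather than at the previous counted time, one has to make sure that a configuration with $\varepsilon>0$ is genuinely encountered before layer~1 absorbs, which the short case check above supplies; the remaining algebra and optional-stopping bookkeeping are routine.
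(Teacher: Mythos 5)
Your proposal is correct and follows essentially the same route as the paper: the same submartingale $Y_t=r^{-X_t}$ from Lemma~\ref{yn-submartingale}, an optional-stopping/convergence step yielding the non-strict Moran bounds, and then a positive-probability visit to a configuration with $\varepsilon\neq 0$ to make them strict (the paper organizes this last step as a three-case analysis of $\xi_0$ rather than through the Doob compensator, but the content is the same). Your observation that the configuration governing a layer-1 increment is the one present when that increment actually occurs, rather than at the previous counted time, is a subtlety the paper glosses over, and your handling of it is sound.
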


\begin{proof}
Sequence $\{Y_t\}$ is a submartingale and bounded because $r^{-N} \le Y_t \le 1$ $\forall t$ when $r \ge 1$ and $1\le Y_t \le r^{-N}$ $\forall t$ when $r\le 1$. Therefore, $Y_t$ converges almost surely, and $E[Y_{\infty}]$ is finite owing to the martingale convergence theorem~\cite{Feller1971book2,Durrett1996book}. The present Bd process has four absorbing states in which all the nodes in each layer are unanimously occupied by the resident or mutant. The two absorbing states in which all the nodes in layer 1 are occupied by the resident yields $X_t=0$. The other two absorbing states in which all the nodes in layer 1 are occupied by the mutant yields $X_t=N$.
Because an absorbing state is ultimately reached with probability $1$,
\begin{equation}
E[Y_{\infty}] \ge Y_0
\label{eq:martingale-convergence}
\end{equation}
yields
\begin{equation}
x(\xi_0) r^{-N} + \left[1-x(\xi_0)\right] r^{-0} \ge r^{- X_0},
\label{eq:stopping-time}
\end{equation}
where $x(\xi_0)$ is the fixation probability of the mutant under an initial condition $\xi_0$ with $X_0$ mutants in layer 1; therefore, there are initially $N-X_0$ residents in the same layer. Equation~\eqref{eq:stopping-time} yields
\begin{equation}
\begin{cases}
x(\xi_0) \le \frac{1-r^{-X_0}}{1-r^{-N}} & (r\ge 1),\\[1mm]
x(\xi_0) \ge \frac{1-r^{-X_0}}{1-r^{-N}} & (r < 1).
\end{cases}
\label{eq:not-amplifier}
\end{equation}

Our goal is to exclude the equalities in Eq.~\eqref{eq:not-amplifier} for $1\le X_0 \le N-1$ because then it will hold true that the complete graph layer is a suppressor of selection. To show this, we distinguish among the following three cases.

To state the first case, we note that
$\varepsilon = 0$ for $r\neq 1$ and $1\le X_0 \le N-1$ if and only if $N_2 = N_3 = 0$. Therefore, if the initial condition $\xi_0$ satisfies
$N_2 > 0$ or $N_3 > 0$, then
Eq.~\eqref{eq:martingale-eq} implies that
\begin{equation}
E[Y_1 | \xi_0] > Y_0
\label{eq:submartingale-exact-inequality}
\end{equation}
for $r\neq 1$. By combining
$E[Y_2 | \mathcal{B}_t] \ge Y_1$, which follows from
Lemma~\ref{yn-submartingale}, with Eq.~\eqref{eq:submartingale-exact-inequality}, we obtain
$E[Y_2 | \xi_0] > Y_0$.

The second and third cases concern the initial condition $\xi_0$ satisfying $N_2 = N_3 = 0$ such that each individual has the same type (i.e., resident or mutant) in both layers. Then, we obtain 
$E[Y_1 | \xi_0 ] = Y_0$ because $\varepsilon = 0$. 
As the second case, we consider the situation in which $\xi_0$ satisfies $N_1 \le N-2$
in addition to $N_2 = N_3 = 0$. In this case, the network's state after the first state transition, $\xi_1$, satisfies $(N_1, N_2, N_3, N_4) = (N_1, 1, 0, N-N_1 - 1)$ with probability $p(\xi_0)=r/(r+1)$.
Conditioned on this transition, we obtain $E[Y_2 | \xi_1] > Y_1$ for $r\neq 1$, which is an adaptation of
Eq.~\eqref{eq:submartingale-exact-inequality}. We obtain $E[Y_2 | \xi_1] > Y_1$ because this particular $\xi_1$ yields $N_2 = 1$, which implies $(r - r^{-1}) \varepsilon > 0$.
If we start from the same $\xi_0$, and a different $\xi_1$ is realized with probability $1-p(\xi_0)$, we still obtain
$E[Y_2 | \xi_1] \ge Y_1$ owing to Lemma~\ref{yn-submartingale}. Therefore, we obtain
$E[Y_2 | \xi_0] > Y_0$ when $\xi_0$ satisfies 
$N_1 \le N-2$ and $N_2 = N_3 = 0$.
 
As the third case, we consider the situation in which $\xi_0$ satisfies $N_1 = N-1$, which implies that
$N_2 = N_3 = 0$. In this case, $\xi_1$ satisfies $(N_1, N_2, N_3, N_4) = (N_1 - 2, 0, 1, N-N_1)$ with probability $q(\xi_0)=1/(r+1)$.
Conditioned on this transition, we obtain $E[Y_2 | \xi_1] > Y_1$ for $r\neq 1$ because this particular $\xi_1$ yields $N_3 = 1$, which implies $(r - r^{-1}) \varepsilon > 0$.
If we start from the same $\xi_0$, and a different $\xi_1$ is realized with probability $1-q(\xi_0)$, we still obtain
$E[Y_2 | \xi_1] \ge Y_1$ owing to Lemma~\ref{yn-submartingale}. Therefore, we obtain
$E[Y_2 | \xi_0] > Y_0$ when $\xi_0$ satisfies 
$N_1 = N-1$.

Because $E[Y_2 | \xi_0] > Y_0$ holds true in all three cases, we obtain
$E[Y_2 | \mathcal{B}_0] > Y_0$, which, together with
$E[Y_{t+1} | \mathcal{B}_t] \ge Y_t$ $\forall t \in \{2, 3, \ldots \}$, leads to
Eq.~\eqref{eq:martingale-convergence} with the strict inequality. Therefore, Eqs.~\eqref{eq:stopping-time} and \eqref{eq:not-amplifier} hold true with the strict inequality when $r\neq 1$.

\end{proof}

\subsection{Cycle graph layer in a two-layer network is always a suppressor of selection}

We use the same method as that for the complete graph layer to show that~\Cref{yn-submartingale} also holds true when one replaces the complete graph layer by the cycle graph. The cycle graph, which we assumed to form layer 1, is defined by $w_{ij}^{[1]} = 1$ if $j = i \pm 1 \mod N$, and $w_{ij}^{[1]} = 0$ otherwise.
For simplicity, we assume that the replica nodes of the mutant type are initially consecutive (i.e., forming just one connected component of mutants) in the cycle graph layer. 
\begin{lemma}\label{yn-submartingale-cycle}
Consider the Bd process on the two-layer network in which layer 1 is the unweighted cycle graph and layer 2 is an arbitrary connected network. We let $X_t$ be the number of mutants in layer 1 at time $t$ and set $Y_t \equiv r^{-X_t}$. The individuals of the mutant type are assumed to be initially located at consecutive replica nodes on the cycle. Then, sequence $\{Y_n\}$ is a submartingale for any $r>0$.
\end{lemma}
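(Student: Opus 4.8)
The plan is to reuse the martingale machinery of Lemma~\ref{yn-submartingale} essentially verbatim, the only new work being a combinatorial analysis of the Bd transitions that is special to the cycle. First I would record the structural invariant that makes the cycle tractable: if the mutant replica nodes in layer~1 are consecutive (a single arc) at time $t$, they are consecutive at time $t+1$ as well. Indeed, in layer~1 a replica node's type flips only when it is chosen as the child of a parent of the opposite type, and on the cycle the only nodes having a neighbor of the opposite type are the two endpoints of the mutant arc (the boundary mutant nodes) and the two residents adjacent to them (the boundary resident nodes). A flip across one of these at most two boundary edges extends or retracts the arc by one node, leaving it an arc. Since the arc is consecutive at $t=0$ by hypothesis, it is consecutive for all $t$, so $X_{t+1} = X_t \pm 1$ whenever a layer-1 state change occurs — exactly as in the complete-graph case, and with only two boundary edges to track.

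Next I would compute the one-step probabilities. Fix a state $\xi_t$ with $1 \le X_t \le N-1$, and let $a \in \{0,1,2\}$ denote the number of boundary mutant nodes of the arc that are also of the mutant type in layer~2, and $b \in \{0,1,2\}$ the number of boundary resident nodes that are of the mutant type in layer~2 (when $X_t=1$ the single mutant plays both mutant-boundary roles, and when $X_t=N-1$ the single resident plays both resident-boundary roles). Applying the Bd rule — select the parent with probability proportional to individual fitness, select layer~1 with probability $1/2$, then select a neighbor uniformly among the two cycle neighbors — and conditioning, as in Theorem~\ref{complete-thm}, on $t$ being counted, a short computation gives
\begin{equation*}
\frac{p(\xi_t)}{q(\xi_t)} = \frac{2(r+1) + a(r-1)}{4 + b(r-1)},
\end{equation*}
where $p(\xi_t)$ and $q(\xi_t)$ are the probabilities that $X_{t+1} = X_t+1$ and $X_{t+1}=X_t-1$, respectively, and $p(\xi_t)+q(\xi_t)=1$. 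Writing $\varepsilon \equiv \frac{r}{r+1} - p(\xi_t)$ as before, this yields
\begin{equation*}
\varepsilon = \frac{(r-1)\,(2 + rb - a)}{(r+1)\left[2(r+1) + a(r-1) + 4 + b(r-1)\right]}.
\end{equation*}
Because $a \le 2$ and $b \ge 0$ the factor $2 + rb - a$ is nonnegative, and the bracket in the denominator is the sum of the (positive) fitnesses of the four boundary nodes; hence $\varepsilon \ge 0$ when $r>1$, $\varepsilon \le 0$ when $r<1$, and $\varepsilon = 0$ when $r = 1$ — the same sign pattern as the $\varepsilon$ appearing in Lemma~\ref{yn-submartingale}.

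From here the proof closes exactly as that of Lemma~\ref{yn-submartingale}: conditioning on $\mathcal{B}_t$ and using $X_{t+1}=X_t\pm1$,
\begin{equation*}
E[Y_{t+1} \,|\, \mathcal{B}_t] = \left[\, p(\xi_t)/r + q(\xi_t)\, r \,\right] Y_t = \left[\, 1 + (r - 1/r)\,\varepsilon \,\right] Y_t \ge Y_t ,
\end{equation*}
since $r - 1/r$ and $\varepsilon$ have the same sign (and the bracket is $1$ when $r=1$), so $\{Y_n\}$ is a submartingale for every $r>0$. I expect the main obstacle to be the bookkeeping in the transition-probability step: one must verify that only the layer-2 types of the (at most) four boundary nodes enter $p(\xi_t)/q(\xi_t)$, and check the two degenerate cases $X_t=1$ and $X_t=N-1$, where a single replica node contributes two boundary edges. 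The martingale estimate itself, and the convergence/stopping-time argument that would follow in the companion theorem, are inherited unchanged from the complete-graph analysis.
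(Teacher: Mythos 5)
Your proof is correct and follows the same route as the paper's: establish that the mutant arc in the cycle layer stays consecutive, so $X_{t+1}=X_t\pm 1$ at each counted step; compute $p(\xi_t)/q(\xi_t)$ from the fitnesses of the (at most four) boundary individuals; and feed the resulting $\varepsilon$ into the identity $E[Y_{t+1}\mid\mathcal{B}_t]=[1+(r-r^{-1})\varepsilon]Y_t$. The only difference is organizational: the paper enumerates nine cases according to the layer-2 types of the two arc endpoints and the two adjacent residents and verifies $E[Y_{t+1}\mid\mathcal{B}_t]\ge Y_t$ separately in each, whereas you collapse all nine into the single two-parameter formula
\[
\frac{p(\xi_t)}{q(\xi_t)}=\frac{2(r+1)+a(r-1)}{4+b(r-1)},\qquad
\varepsilon=\frac{(r-1)(2+rb-a)}{(r+1)\left[2(r+1)+a(r-1)+4+b(r-1)\right]},
\]
with $a,b\in\{0,1,2\}$. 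I checked that this reproduces the paper's nine ratios (e.g., $a=2,b=0$ gives $p/q=r$; $a=0,b=2$ gives $p/q=1$; $a=1,b=1$ gives $(3r+1)/(r+3)$), and the sign argument $2+rb-a\ge 0$ correctly yields $\operatorname{sgn}\varepsilon=\operatorname{sgn}(r-1)$ in one stroke. Your handling of the degenerate endpoints ($X_t=1$ forces $a\in\{0,2\}$, $X_t=N-1$ forces $b\in\{0,2\}$, with the doubled fitness compensating the halved neighbor-selection probability) matches the paper's remark that the generic formulas remain valid there. The unified parametrization is arguably cleaner and makes the sign of $\varepsilon$ transparent; the paper's explicit case list has the minor advantage of exhibiting the exact coefficient $[1+(1-r)^2(\cdots)]$ in each case, which it then reuses in the proof of the companion theorem to identify precisely when equality holds (only the case $a=2$, $b=0$).
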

We prove Lemma~\ref{yn-submartingale-cycle} in section~S1. 
\begin{theorem}\label{cycle-thm}
Consider the Bd process on the two-layer network in which layer 1 is the unweighted cycle graph and layer 2 is an arbitrary connected network. Then, the cycle graph layer is a suppressor of selection, given that the individuals of the mutant type are initially located at consecutive replica nodes on the cycle.
\end{theorem}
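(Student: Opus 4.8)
The plan is to follow the proof of Theorem~\ref{complete-thm} almost line by line, feeding in Lemma~\ref{yn-submartingale-cycle} wherever that proof used Lemma~\ref{yn-submartingale}. The one preliminary point special to the cycle is that the consecutiveness of the block of mutant replica nodes in layer~1 is preserved by the dynamics: a state change in layer~1 can only turn one of the two endpoints of the mutant arc into a resident replica node, or turn a resident replica node adjacent to the arc into a mutant, so the mutant set in layer~1 remains a single arc (possibly empty or the whole cycle). Hence $X_t$ evolves on $\{0,1,\dots,N\}$ with $X_t=0$ and $X_t=N$ the only absorbing values, and, as in the complete-graph case, an absorbing state of the whole two-layer process (each layer unanimously occupied by one type) is reached with probability~$1$.

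Granting this, the first half of the argument is verbatim: by Lemma~\ref{yn-submartingale-cycle}, $\{Y_t\}$ with $Y_t\equiv r^{-X_t}$ is a submartingale, and it is bounded ($r^{-N}\le Y_t\le 1$ when $r\ge 1$ and $1\le Y_t\le r^{-N}$ when $r<1$), so the martingale convergence theorem gives $Y_t\to Y_\infty$ almost surely with $E[Y_{\infty}]$ finite. Since $X_\infty\in\{0,N\}$, the bound $E[Y_{\infty}]\ge Y_0$ reads $x(\xi_0)r^{-N}+[1-x(\xi_0)]\ge r^{-X_0}$ and rearranges exactly to the non-amplifier inequalities of Eq.~\eqref{eq:not-amplifier}, where $x(\xi_0)$ is the layer-1 fixation probability from an initial condition $\xi_0$ with $X_0$ mutants.

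It then remains to exclude the equalities in Eq.~\eqref{eq:not-amplifier} for $1\le X_0\le N-1$ and $r\neq 1$, and here I would reuse the identity $E[Y_{t+1}|\mathcal{B}_t]=\bigl[1+(r-r^{-1})\varepsilon\bigr]Y_t$ coming out of the proof of Lemma~\ref{yn-submartingale-cycle} in section~S1, together with the property (read off from the explicit expression for $\varepsilon$ there) that the correction term is zero precisely when $N_2=N_3=0$, i.e., when every individual carries the same type in both layers. If $\xi_0$ already has $N_2>0$ or $N_3>0$, then $E[Y_1|\xi_0]>Y_0$, and combining this with Lemma~\ref{yn-submartingale-cycle} gives $E[Y_2|\xi_0]>Y_0$. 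If instead $N_2=N_3=0$, so that $\varepsilon=0$ and hence $p(\xi_0)=r/(r+1)$, $q(\xi_0)=1/(r+1)$ are both positive, I would argue that with positive probability the first layer-1 transition creates a \emph{mismatched} individual: for $1\le X_0\le N-2$ a pure-resident endpoint-neighbor of the arc becomes a mutant in layer~1 ($N_2=1$ afterward), and for $X_0=N-1$ a pure-mutant endpoint of the arc becomes a resident in layer~1 ($N_3=1$ afterward). In either realization the resulting $\xi_1$ has $(r-r^{-1})\varepsilon>0$, so $E[Y_2|\xi_1]>Y_1$, while on the complementary event $E[Y_2|\xi_1]\ge Y_1$ by Lemma~\ref{yn-submartingale-cycle}; hence again $E[Y_2|\xi_0]>Y_0$. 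Propagating this strict gap forward through $E[Y_{t+1}|\mathcal{B}_t]\ge Y_t$ for $t\ge 2$ yields $E[Y_{\infty}]>Y_0$ for $r\neq 1$, which turns Eq.~\eqref{eq:not-amplifier} into strict inequalities and proves the theorem.

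The main obstacle I anticipate is precisely this last case analysis. Unlike the complete graph, the one-step transition probabilities on the cycle depend not only on $(N_1,N_2,N_3,N_4)$ but on where the two endpoints of the mutant arc sit and on the layer-2 types of the two individuals located there; one must therefore check carefully that from every non-absorbing configuration with $N_2=N_3=0$ there really is a positive-probability first layer-1 transition producing a mismatch, and that the resulting configuration has $\varepsilon>0$ and not $\varepsilon=0$. This uses the explicit form of $\varepsilon$ for the cycle obtained in section~S1, and the extreme arcs ($X_0=1$ and $X_0=N-1$) have to be handled as separate boundary cases; everything else is a routine transcription of the proof of Theorem~\ref{complete-thm}.
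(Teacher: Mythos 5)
Your overall strategy is the paper's: transcribe the proof of Theorem~\ref{complete-thm}, replacing Lemma~\ref{yn-submartingale} by Lemma~\ref{yn-submartingale-cycle}, obtain the non-strict inequalities~\eqref{eq:not-amplifier} from the bounded-submartingale convergence, and then exclude equality by showing $E[Y_2\mid\xi_0]>Y_0$ via a one-step look-ahead. The preservation of the consecutive mutant arc and the treatment of the boundary values $X_0=1$ and $X_0=N-1$ are also handled as in the paper. However, there is a genuine gap in your case analysis for excluding equality. You classify initial conditions by whether $N_2=N_3=0$ and assert that $E[Y_1\mid\xi_0]>Y_0$ whenever $N_2>0$ or $N_3>0$. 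That criterion is correct for the complete graph layer, where the transition probabilities depend only on the global counts $(N_1,N_2,N_3,N_4)$, but it is false for the cycle: there the probabilities $p(\xi_t)$ and $q(\xi_t)$ depend only on the layer-2 types of the two endpoints of the mutant arc and of the two layer-1 residents adjacent to it. Consequently $E[Y_1\mid\xi_0]=Y_0$ holds whenever those four boundary individuals are ``matched'' (endpoints mutant in both layers, adjacent residents resident in both layers), \emph{regardless} of mismatches elsewhere. A concrete counterexample to your claim: take $X_0=3$ on a cycle with $N\ge 5$, let both arc endpoints be mutant in layer 2 and both adjacent residents be resident in layer 2, but let the interior node of the arc be resident in layer 2. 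Then $N_2=1>0$, yet the process is in the first of the nine cases of section~S1 and $E[Y_1\mid\xi_0]=Y_0$. Such configurations fall outside both branches of your dichotomy, so your argument never establishes the strict inequality for them.

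The repair is exactly what the paper does in section~S2: classify $\xi_0$ by the nine \emph{local} boundary cases of section~S1 rather than by the global counts. If $\xi_0$ is in any of the latter eight cases, $E[Y_1\mid\xi_0]>Y_0$ for $r\ne 1$. If $\xi_0$ is in the first case, then any counted layer-1 transition flips the layer-1 type of a boundary individual while leaving its layer-2 type unchanged, thereby creating a mismatched individual \emph{at the new boundary}; hence for $2\le X_0\le N-2$ the state $\xi_1$ lies in one of the latter eight cases with probability $1$, and for $X_0=1$ (resp.\ $X_0=N-1$) it does so with probability $p(\xi_0)=r/(r+1)$ (resp.\ $q(\xi_0)=1/(r+1)$), the complementary event being absorption where $E[Y_2\mid\xi_1]\ge Y_1$ trivially. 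You in fact flag in your final paragraph that the cycle's transition probabilities are governed by the boundary rather than by $(N_1,N_2,N_3,N_4)$, but you did not carry that observation back into the case split, which is where it is needed.
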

We prove Theorem~\ref{cycle-thm} in section~S2. 

\subsection{Complete bipartite graph layer in a two-layer network}

In this section, we consider the two-layer network in which layer 1 is the unweighted complete bipartite graph and layer 2 is an arbitrary connected network. The complete bipartite graph, denoted by $K_{N_1,N_2}$, where $N_1+N_2=N$, consists of two disjoint subsets of nodes $V_1$ and $V_2$ with $N_1$ and $N_2$ nodes, respectively. It is defined by $w_{ij}^{[1]}=1$ if $i\in V_1$ and $j\in V_2$, or $i\in V_2$ and $j\in V_1$, and by $w_{ij}^{[1]}=0$ otherwise. We construct a similar proof to that for the complete graph or cycle graph layer to show that the complete bipartite graph layer in an arbitrary two-layer network is more suppressing than the one-layer complete bipartite graph.

\begin{lemma}\label{yn-submartingale-bipartite}
Consider the Bd process on the two-layer network in which layer 1 is the unweighted complete bipartite graph and layer 2 is an arbitrary connected network. We let $\bm{X_t}=[X_{1,t}, X_{2,t}]$, where $X_{1,t}$ and $X_{2,t}$ are the numbers of replica nodes in $V_1$ and $V_2$, respectively, that are occupied by the mutant in layer 1 at time $t$. We define $Y_t \equiv h_1^{X_1,t}h_2^{X_2,t}$, where
\begin{align}
h_1&=\frac{N_1+N_2r}{N_1r^2+N_2r},\\
h_2&=\frac{N_2+N_1r}{N_2r^2+N_1r}.
\end{align}
Then, sequence $\{Y_n\}$ is a submartingale for any $r>0$.
\end{lemma}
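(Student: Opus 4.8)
The plan is to imitate the martingale argument behind Lemma~\ref{yn-submartingale}, now tracking the two-component quantity $\bm{X_t}=(X_{1,t},X_{2,t})$. First I would refine the bookkeeping of the population: among the $N_1$ individuals whose layer-1 replica lies in $V_1$, write $m_{11},m_{10},m_{01},m_{00}$ for the numbers that are, respectively, mutant in both layers, mutant in layer~1 and resident in layer~2, resident in layer~1 and mutant in layer~2, and resident in both, and define $n_{11},n_{10},n_{01},n_{00}$ analogously among the $N_2$ individuals whose layer-1 replica lies in $V_2$. Then $X_1=m_{11}+m_{10}$ and $X_2=n_{11}+n_{10}$, and the population's total fitness is $F=\Phi_1^M+\Phi_1^R+\Phi_2^M+\Phi_2^R$, where $\Phi_1^M=2rm_{11}+(r+1)m_{10}$ is the total fitness of the $V_1$-individuals that are mutant in layer~1 and $\Phi_1^R,\Phi_2^M,\Phi_2^R$ are the obvious analogues. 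Because layer~1 is $K_{N_1,N_2}$, a layer-1 reproduction can change $X_1$ only if the parent lies in $V_2$ and can change $X_2$ only if the parent lies in $V_1$; hence in one step of the original Bd process one has $X_1\to X_1+1$ with probability $p_1^+=\frac{\Phi_2^M}{F}\cdot\frac12\cdot\frac{N_1-X_1}{N_1}$, $X_1\to X_1-1$ with probability $p_1^-=\frac{\Phi_2^R}{F}\cdot\frac12\cdot\frac{X_1}{N_1}$, the symmetric expressions $p_2^\pm$ for $X_2\to X_2\pm1$ (with the roles of $V_1,V_2$, and hence of $h_1,h_2$, interchanged), and otherwise $\bm{X_t}$ is unchanged.

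Since $Y_{t+1}/Y_t$ equals $h_1^{\pm1}$, $h_2^{\pm1}$, or $1$ in these cases, the drift is $E[Y_{t+1}\,|\,\mathcal{B}_t]-Y_t=\frac{Y_t}{2F}\,\Sigma$, where
\begin{multline*}
\Sigma=\Phi_2^M\frac{N_1-X_1}{N_1}(h_1-1)+\Phi_2^R\frac{X_1}{N_1}(h_1^{-1}-1)\\
{}+\Phi_1^M\frac{N_2-X_2}{N_2}(h_2-1)+\Phi_1^R\frac{X_2}{N_2}(h_2^{-1}-1),
\end{multline*}
so the whole problem reduces to showing $\Sigma\ge0$. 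I would split each fitness into a ``neutral'' part and a correction, using $\Phi_2^M=2rX_2-(r-1)n_{10}$, $\Phi_2^R=2(N_2-X_2)+(r-1)n_{01}$, $\Phi_1^M=2rX_1-(r-1)m_{10}$, $\Phi_1^R=2(N_1-X_1)+(r-1)m_{01}$. The ``$2r$'' and ``$2$'' parts contribute to $\Sigma$ the quantity $2X_2(N_1-X_1)\big[\frac{r}{N_1}(h_1-1)+\frac1{N_2}(h_2^{-1}-1)\big]+2X_1(N_2-X_2)\big[\frac1{N_1}(h_1^{-1}-1)+\frac{r}{N_2}(h_2-1)\big]$, and both bracketed factors vanish: from $h_1-1=\frac{N_1(1-r^2)}{r(N_1r+N_2)}$, $h_1^{-1}-1=\frac{N_1(r^2-1)}{N_1+N_2r}$ and the symmetric formulas for $h_2$ one checks $\frac{r}{N_1}(h_1-1)+\frac1{N_2}(h_2^{-1}-1)=0$ and $\frac1{N_1}(h_1^{-1}-1)+\frac{r}{N_2}(h_2-1)=0$. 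These identities are precisely the reason $h_1,h_2$ are defined as they are; they are the same identities that would make $h_1^{X_1}h_2^{X_2}$ an exact martingale for the one-layer complete bipartite graph.

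It then remains to handle the four correction terms. Substituting the explicit forms of $h_i-1$ and $h_i^{-1}-1$, each correction term turns into a positive rational function of $r,N_1,N_2$ times $(r-1)^2$ times one of $n_{10},n_{01},m_{10},m_{01}$; for instance $-(r-1)n_{10}\frac{N_1-X_1}{N_1}(h_1-1)=\frac{(r-1)^2(1+r)\,n_{10}(N_1-X_1)}{r(N_1r+N_2)}$, which is $\ge0$ for every $r>0$, and likewise for the remaining three. Hence $\Sigma\ge0$, so $E[Y_{t+1}\,|\,\mathcal{B}_t]\ge Y_t$; since $Y_t$ is strictly positive and bounded, $\{Y_n\}$ is a submartingale (when $r=1$ one has $h_1=h_2=1$, so $Y_t\equiv1$ is a martingale, consistently with the correction part being $0$). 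The part I expect to be the real work is the bookkeeping and algebra of $\Sigma$: keeping the eight population counts straight, verifying the two $h$-identities that annihilate the neutral part, and confirming that each of the four correction terms carries the factor $(r-1)^2$ with a positive coefficient. A minor point is the time indexing --- as in Lemma~\ref{yn-submartingale} we advance $t$ only at updates that change layer~1 --- which I would handle by noting that the uncounted updates give the trivial equality $E[Y_{t+1}\,|\,\mathcal{B}_t]=Y_t$, or equivalently that renormalizing $p_1^\pm,p_2^\pm$ by their sum does not alter the sign of the drift.
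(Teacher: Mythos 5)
Your proposal is correct and follows essentially the same route as the paper's proof (section S3): both set up the four layer-1 transition probabilities of the two-layer Bd process on $K_{N_1,N_2}$ and show that the expected multiplicative increment of $Y_t$ equals $1$ plus $(r-1)^2(1+r)$ times a nonnegative combination of the mixed-type counts. Your neutral-plus-correction split of the fitness totals is just a cleaner organization of the same algebra --- your four correction terms recombine exactly into the paper's bracketed sums $rD_1C_3+(1+r)D_2C_3+D_2C_4$ and $rC_1D_3+(1+r)C_2D_3+C_2D_4$ --- and your closing remark that the reindexing to layer-1 state changes only renormalizes the transition probabilities without affecting the sign of the drift matches the paper's passage from $p_1',p_2',q_1',q_2'$ to $p_1,p_2,q_1,q_2$.
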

We prove Lemma~\ref{yn-submartingale-bipartite} in section~S3. 
\begin{remark}
Our choice of $Y_t$ is inspired by the application of martingales to the Bd process in one-layer complete bipartite graphs \cite{Monk2014PRSocA}.
\end{remark}

\begin{theorem}\label{complete-bi-thm}
Consider the Bd process on the two-layer network in which layer 1 is the unweighted complete bipartite graph and layer 2 is an arbitrary connected network. Then, the complete bipartite graph layer is more suppressing than the one-layer complete bipartite graph.
\end{theorem}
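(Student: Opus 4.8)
The plan is to follow exactly the blueprint of Theorem~\ref{complete-thm}, using the submartingale $\{Y_t\}$ from Lemma~\ref{yn-submartingale-bipartite} in place of $r^{-X_t}$. First I would note that $Y_t = h_1^{X_{1,t}} h_2^{X_{2,t}}$ is bounded (it takes finitely many values as $(X_{1,t},X_{2,t})$ ranges over $\{0,\dots,N_1\}\times\{0,\dots,N_2\}$), so by the martingale convergence theorem $Y_t \to Y_\infty$ almost surely with $E[Y_\infty]$ finite, and the submartingale property gives $E[Y_\infty]\ge Y_0$. The Bd process in layer~1 has two absorbing configurations for layer~1 — all-resident, giving $(X_1,X_2)=(0,0)$ and hence $Y_\infty = 1$, and all-mutant, giving $(X_1,X_2)=(N_1,N_2)$ and hence $Y_\infty = h_1^{N_1}h_2^{N_2}$ — each paired with either absorbing state of layer~2. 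Writing $x(\xi_0)$ for the fixation probability of the mutant in layer~1 from initial state $\xi_0$ with $X_{1,0}=a$, $X_{2,0}=b$ mutants, the convergence inequality reads
\begin{equation}
x(\xi_0)\, h_1^{N_1} h_2^{N_2} + \left[1 - x(\xi_0)\right] \ge h_1^{a} h_2^{b},
\end{equation}
which rearranges to a one-sided bound on $x(\xi_0)$ in terms of $h_1,h_2$. The crucial observation is that when layer~2 in isolation is replaced by a setting with no coupling — equivalently, the one-layer complete bipartite graph — the same $Y_t$ is an \emph{exact} martingale (this is the content of the cited one-layer result \cite{Monk2014PRSocA}), so the one-layer fixation probability $\rho_{K_{N_1,N_2}}(\xi_0)$ satisfies the corresponding relation with equality: $\rho\, h_1^{N_1}h_2^{N_2} + (1-\rho) = h_1^{a}h_2^{b}$. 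Subtracting, and checking the sign of the coefficient $h_1^{N_1}h_2^{N_2} - 1$ (which is $<1$, i.e. negative of a positive quantity, for $r>1$ and $>1$ for $r<1$ — a short monotonicity check on $h_1,h_2$ as functions of $r$, noting $h_1=h_2=1$ at $r=1$), I would conclude $x(\xi_0)\le \rho_{K_{N_1,N_2}}(\xi_0)$ for $r>1$ and $x(\xi_0)\ge \rho_{K_{N_1,N_2}}(\xi_0)$ for $r<1$, averaged over uniform initialization; this is precisely the statement that the two-layer complete bipartite layer is more suppressing than the one-layer one.

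To upgrade the weak inequalities to strict ones for $r\ne1$ (so that the conclusion is genuinely ``more suppressing'' and not merely ``no less suppressing''), I would replicate the three-case argument of Theorem~\ref{complete-thm}: identify exactly when the per-step drift of $Y_t$ vanishes — this should be when $N_2 = N_3 = 0$ in the notation of Lemma~\ref{yn-submartingale} (every individual has matching types across layers), so that layers~1 and~2 are effectively decoupled and $Y_t$ behaves as in the one-layer chain — and show that from any non-absorbing $\xi_0$ one reaches within one or two steps, with positive probability, a state with $N_2>0$ or $N_3>0$ at which the drift is strictly positive (i.e.\ the analogue of $E[Y_1|\xi_0]>Y_0$ holds), while all other one-step transitions preserve the submartingale inequality. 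Combining gives $E[Y_2|\mathcal B_0] > Y_0$ strictly, hence $E[Y_\infty] > Y_0$ strictly, hence the strict versions of the bounds. The case split ($N_1 \le N-2$ versus $N_1 = N-1$, etc.) will need to be redone with the $V_1/V_2$ bookkeeping, but the logic is unchanged.

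The main obstacle I anticipate is \emph{not} the martingale convergence step but the bookkeeping needed to connect the two-layer bound to the one-layer fixation probability cleanly. In the complete-graph case (Theorem~\ref{complete-thm}) the comparison baseline is the Moran process, whose fixation probability is the explicit closed form $(1-r^{a})/(1-r^{N})$; here the baseline is the one-layer complete bipartite graph, whose fixation probability is \emph{not} elementary in closed form, so I cannot simply write it down and compare. The clean way around this is the observation above — that the \emph{same} $Y_t$ is an exact martingale for the one-layer dynamics — which lets me express both $\rho_{K_{N_1,N_2}}$ and the bound on $x(\xi_0)$ through the identical quantity $h_1^{a}h_2^{b}$ and the identical endpoint values, so the comparison reduces to a single sign check. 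I would want to state carefully (perhaps as a preliminary lemma, or by citing the precise result in \cite{Monk2014PRSocA}) that the one-layer chain does give the exact martingale identity, including that its two absorbing states correspond to the same $(0,0)$ and $(N_1,N_2)$ endpoints. A secondary, purely computational wrinkle is verifying the monotonicity of $h_1$ and $h_2$ in $r$ and the sign of $h_1^{N_1}h_2^{N_2}-1$; this is routine calculus but must be done to fix the direction of every inequality.
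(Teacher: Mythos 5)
Your proposal follows essentially the same route as the paper: the same submartingale $Y_t=h_1^{X_{1,t}}h_2^{X_{2,t}}$, the same convergence inequality $E[Y_\infty]\ge Y_0$ rearranged into a one-sided bound, the same identification of the one-layer fixation probability as the exact-martingale value $(h_1^{X_{1,0}}h_2^{X_{2,0}}-1)/(h_1^{N_1}h_2^{N_2}-1)$ via \cite{Monk2014PRSocA}, and the same strategy of excluding equality by reaching, with positive probability, a state of strictly positive drift. The only substantive discrepancy is in the case analysis you defer: the drift vanishes exactly when all six products $D_1C_3$, $D_2C_3$, $D_2C_4$, $C_1D_3$, $C_2D_3$, $C_2D_4$ are zero (a strictly weaker condition than every individual having matching types across layers, which is what you guessed), and the paper accordingly needs 12 cases and up to three steps, i.e.\ $E[Y_3\,|\,\xi_0]>Y_0$, rather than your two.
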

\begin{proof}
Equation~\eqref{eq:martingale-convergence} holds true in the present case as well. It is equivalent to
\begin{equation}
x(\xi_0) h_1^{N_1}h_2^{N_2} + \left[1-x(\xi_0)\right] h_1^0h_2^0 \ge h_1^{X_{1,0}}h_2^{X_{2,0}},
\label{eq:stopping-time-bipartite}
\end{equation}
where $x(\xi_0)$ is the fixation probability of the mutant type under an initial condition $\xi_0$ with $X_{1,0}$ mutants on the nodes in $V_1$ and $X_{2,0}$ mutants on the nodes in $V_2$;
there are initially $X_{1,0}+X_{2,0}$ mutants and $N-(X_{1,0}+X_{2,0})$ residents in the complete bipartite graph layer.
Equation~\eqref{eq:stopping-time-bipartite} yields 
\begin{equation}
\begin{cases}
x(\xi_0) \le \frac{h_1^{X_{1,0}}h_2^{X_{2,0}}-1}{h_1^{N_1}h_2^{N_2}-1} & (r\ge 1),\\
\\[-5pt]%
x(\xi_0) \ge \frac{h_1^{X_{1,0}}h_2^{X_{2,0}}-1}{h_1^{N_1}h_2^{N_2}-1} & (r < 1).
\end{cases}
\label{eq:not-amplifier-bipartite}
\end{equation}
To exclude the equalities in Eq.~\eqref{eq:not-amplifier-bipartite} for $1\le X_{1,0}+X_{2,0}\le N-1$, we distinguish 12 cases that are different in terms of the number of individuals in $V_1$ and in $V_2$ with different fitness values. We obtain
\begin{equation}\label{eq:submartingale-exact-inequality-bipartite}
E[Y_{3} | \xi_0]>Y_0
\end{equation}
for all 12 cases; for the proof, see section S4.

Because Eq.~\eqref{eq:submartingale-exact-inequality-bipartite} holds true in all the cases, we obtain
$E[Y_3 | \mathcal{B}_0] > Y_0$, which, together with
$E[Y_{t+1} | \mathcal{B}_t] \ge Y_t$ $\forall t \in \{3, 4, \ldots \}$, leads to
Eq.~\eqref{eq:martingale-convergence} with the strict inequality. Therefore, Eqs.~\eqref{eq:stopping-time-bipartite} and \eqref{eq:not-amplifier-bipartite} hold true with the strict inequality when $r\neq 1$, proving that the complete bipartite graph layer in a two-layer network is more suppressing than the monolayer complete bipartite graph. 
\end{proof}

\begin{remark}\label{star-thm}
If $N_1=1$ and $N_2=N-1$, the complete bipartite graph layer reduces to a star graph. Therefore, Lemma~\ref{yn-submartingale-bipartite} and Theorem~\ref{complete-bi-thm} also hold true when one layer of the two-layer network is a star graph.
\end{remark}

\begin{remark}
All lemmas and theorems also hold true for model 2 with the proof being essentially unchanged (for more details, see section S4).
\end{remark}

\section{Semi-analytical results for two-layer networks with high symmetry}

\subsection{Exact computation of the fixation probability in two-layer networks\label{fp-general-case}}

\sloppy In this section, we explain how to exactly calculate the fixation probability for the mutant type when there is initially one replica node of mutant type that is selected uniformly at random in layer 1, and one replica node of mutant type in layer 2. This initial state is the same as that assumed in \cite{Su2022nhb}. Let $s_i^{[1]}\in \{0, 1\}$ and $s_i^{[2]}\in \{0, 1\}$ be individual $i$'s type in layer 1 and layer 2, respectively, where values 0 and 1 represent resident and mutant, respectively. Then, the state of the evolutionary dynamics is specified by a $2N$-dimensional binary vector $\bm{s}=(s_1^{[1]}, \ldots, s_N^{[1]}, s_1^{[2]}, \ldots, s_N^{[2]})$. Therefore, there are $2^{2N}$ states in total. We number the states from 1 to $2^{2N}$ by a bijective map, denoted by $\varphi$, given by
\begin{align}
\varphi: S&\rightarrow \{1, \ldots, 2^{2N}\},\nonumber\\
\bm{s}&\mapsto \varphi(\bm{s}),
\end{align}
where $S$ is the set of all states.
Let $P=[p_{i,j}]$ denote the $2^{2N}\times2^{2N}$ transition probability matrix, where $p_{i,j}$ is the probability that the state moves from the $i$th state to the $j$th state in a time step of the birth-death process. Denote the probability that the mutant fixates in layer 1 by $x_i^{[1]}$ starting from the $i$th state, where
$i \in \{1, \ldots, 2^{2N}\}$. Similarly, denote the probability that the mutant fixates in layer 2 by $x_i^{[2]}$ starting from the $i$th state. We can obtain $x_i^{[1]}$ by solving the linear system
\begin{equation}\label{vec-eq-layer1}
\bm{x}^{[1]}=P\bm{x}^{[1]},
\end{equation}
where $\bm{x}^{[1]}=(x_1^{[1]}, \ldots, x_{2^{2N}}^{[1]})^{\top}$, and ${}^\top$ represents the transposition, with boundary conditions $x^{[1]}_{\varphi((1, \ldots, 1, 1, \ldots, 1))}=1$, $x^{[1]}_{\varphi((1, \ldots, 1, 0, \ldots, 0))}=1$, $x^{[1]}_{\varphi((0, \ldots, 0, 1, \ldots, 1))}=0$, and $x^{[1]}_{\varphi((0, \ldots, 0, 0, \ldots, 0))}=0$. Similarly, we can obtain $x_i^{[2]}$ by solving the same linear system
\begin{equation}\label{vec-eq-layer2}
\bm{x}^{[2]}=P\bm{x}^{[2]},
\end{equation}
where $\bm{x}^{[2]}=(x_1^{[2]}, \ldots, x_{2^{2N}}^{[2]})^{\top}$, with boundary conditions $x^{[2]}_{\varphi((1, \ldots, 1, 1, \ldots, 1))}=1$, $x^{[2]}_{\varphi((1, \ldots, 1, 0, \ldots, 0))}=0$, $x^{[2]}_{\varphi((0, \ldots, 0, 1, \ldots, 1))}=1$, and $x^{[2]}_{\varphi((0, \ldots, 0, 0, \ldots, 0))}=0$. Let $C \subset S$ be the set of initial states that contain only one replica node of mutant type in layer 1 and one replica node of mutant type in layer 2. The cardinality of $C$ is $N^2$. Denote the numerical labels of states in $C$ by $\{k_1, \ldots, k_{N^2}\}$. Then, the fixation probability for the mutant type in layer 1 and 2 starting with the initial configuration with just one mutant in each layer, denoted by $x^{[1]}_C$ and $x^{[2]}_C$, respectively, is given by
\begin{align}
x^{[1]}_C&=\sum_{i\in\{k_1, \ldots, k_{N^2}\}} \frac{x^{[1]}_i}{N^2},\\
x^{[2]}_C&=\sum_{i\in\{k_1, \ldots, k_{N^2}\}} \frac{x^{[2]}_i}{N^2}.
\end{align}

For an arbitrary two-layer network, we need to solve a linear system with $2^{2N}-4$ unknowns to obtain the fixation probability of the mutant type. This is computationally prohibitive when $N$ is large. 
Although we can exploit that $x^{[1]}_{\varphi((0, \ldots, 0, s_1^{[2]}, \ldots, s_N^{[2]}))}=0$ and
$x^{[1]}_{\varphi((1, \ldots, 1, s_1^{[2]}, \ldots, s_N^{[2]}))}=1$
for any $(s_1^{[2]}, \ldots, s_N^{[2]}) \in \{0, 1\}^N$ and similar relationships for $x^{[2]}$, the number of unknowns still scales with
$2^{2N}$ as $N$ increases.
Therefore, to drastically reduce the dimension of the linear system to be solved, we analyze two-layer networks with a highly symmetric structure for each layer, in which all or most replica nodes are structurally equivalent to other replica nodes. This strategy has been used for exactly calculating fixation probabilities on conventional networks \cite{Lieberman2005nature,Nowak2006book}, hypergraphs \cite{Liu2023PLoSCB}, and temporal networks \cite{Gyan2023JMB}.

In the following text, we consider model 1, except in subsection~\ref{sub:model2-semianalytical}, where we briefly consider model 2.

\subsection{Coupled complete graphs} \label{bd-ccn}

We first consider the case in which each layer is the complete graph with $N$ nodes. Because all nodes in each layer are structurally equivalent to one another, we only need to track the number of individuals with the mutant type in both layers, denoted by $i_1$; the number of individuals with the mutant type in layer 1 and the resident type in layer 2, denoted by $i_2$; the number of individuals with the resident type in layer 1 and the mutant type in layer 2, denoted by $i_3$; and the number of individuals with the resident type in both layers, denoted by $i_4$. One can specify the state of the evolutionary dynamics by a 4-tuple $\bm{i}=(i_1, i_2, i_3, i_4)$, where $i_1, i_2, i_3, i_4 \in \{0, 1, \ldots, N\}$ and $i_1+i_2+i_3+i_4=N$. Therefore, there are $\binom{N+3}{3}$ states in total, where $\binom{}{}$ represents the binomial coefficient. For visual clarity, we denote the transition probability matrix by $P=[p_{\bm{i} \to \bm{j}}]$, where $p_{\bm{i} \to \bm{j}}$ is the probability that the state moves from $\bm{i}=(i_1, i_2, i_3, i_4)$ to $\bm{j}=(j_1, j_2, j_3, j_4)$ in a time step. Assume that the current state is $\bm{i}=(i_1, i_2, i_3, i_4)$. There are nine types of events that can occur next.

In the first type of event, an individual who has the mutant type in layer 1 (and either type in layer 2) is selected as the parent, which occurs with probability $[2ri_1+(1+r)i_2]/[2ri_1+(1+r)(i_2+i_3)+2i_4]$, and layer 1 is selected for the reproduction event with probability $1/2$. Then, we select a neighbor of the parent in layer 1 for death, and the selected individual, which we refer to as the child, has the resident type in layer 1 and the mutant type in layer 2 with probability $i_3/(N-1)$. Then, the child copies the parent's type in layer 1. The state after this event is $(i_1+1, i_2, i_3-1, i_4)$. Therefore, we obtain
\begin{equation}
p_{(i_1,i_2,i_3,i_4) \to (i_1+1, i_2, i_3-1, i_4)} = \frac{2ri_1+(1+r)i_2}{2ri_1+(1+r)(i_2+i_3)+2i_4}\cdot\frac{1}{2}\cdot\frac{i_3}{N-1}.
\end{equation}

In the second type of event, an individual who has the mutant type in layer 1 is selected as the parent, which occurs with probability $[2ri_1+(1+r)i_2]/[2ri_1+(1+r)(i_2+i_3)+2i_4]$, and layer 1 is selected for reproduction with probability $1/2$. Then, we select a neighbor of the parent in layer 1 as the child, and the child has the resident type in both layers, which occurs with probability $i_4/(N-1)$. Then, the child copies the parent's type in layer 1. The state after this event is $(i_1, i_2+1, i_3, i_4-1)$. Therefore, we obtain
\begin{equation}
p_{(i_1,i_2,i_3,i_4) \to (i_1, i_2+1, i_3, i_4-1)} = \frac{2ri_1+(1+r)i_2}{2ri_1+(1+r)(i_2+i_3)+2i_4}\cdot\frac{1}{2}\cdot\frac{i_4}{N-1}.
\end{equation}

In the third type of event, an individual who has the resident type in layer 1 is selected as the parent, which occurs with probability $[(1+r)i_3+2i_4]/[2ri_1+(1+r)(i_2+i_3)+2i_4]$, and layer 1 is selected for reproduction with probability $1/2$. Then, we select a neighbor of the parent in layer 1 as the child, and the child has the mutant type in both layers, which occurs with probability $i_1/(N-1)$. The state after this event is $(i_1-1, i_2, i_3+1, i_4)$. Therefore, we obtain
\begin{equation}
p_{(i_1,i_2,i_3,i_4) \to (i_1-1, i_2, i_3+1, i_4)} = \frac{(1+r)i_3+2i_4}{2ri_1+(1+r)(i_2+i_3)+2i_4}\cdot\frac{1}{2}\cdot\frac{i_1}{N-1}.
\end{equation}

In the fourth type of event, an individual who has the resident type in layer 1 is selected as the parent, which occurs with probability $[(1+r)i_3+2i_4]/[2ri_1+(1+r)(i_2+i_3)+2i_4]$, and layer 1 is selected for reproduction with probability $1/2$. Then, we select a neighbor of the parent in layer 1 as the child, and the child has the mutant type in layer 1 and the resident type in layer 2, which occurs with probability $i_2/(N-1)$. The state after this event is $(i_1, i_2-1, i_3, i_4+1)$. Therefore, we obtain
\begin{equation}
p_{(i_1,i_2,i_3,i_4) \to (i_1, i_2-1, i_3, i_4+1)} = \frac{(1+r)i_3+2i_4}{2ri_1+(1+r)(i_2+i_3)+2i_4}\cdot\frac{1}{2}\cdot\frac{i_2}{N-1}.
\end{equation}

In the fifth type of event, an individual who has the mutant type in layer 2 is selected as the parent, which occurs with probability $[2ri_1+(1+r)i_3]/[2ri_1+(1+r)(i_2+i_3)+2i_4]$, and layer 2 is selected for reproduction with probability $1/2$. Then, we select a neighbor of the parent in layer 2 as the child, and the child has the mutant type in layer 1 and the resident type in layer 2, which occurs with probability $i_2/(N-1)$. The state after this event is $(i_1+1, i_2-1, i_3, i_4)$. Therefore, we obtain
\begin{equation}
p_{(i_1,i_2,i_3,i_4) \to (i_1+1, i_2-1, i_3, i_4)} = \frac{2ri_1+(1+r)i_3}{2ri_1+(1+r)(i_2+i_3)+2i_4}\cdot\frac{1}{2}\cdot\frac{i_2}{N-1}.
\end{equation}

In the sixth type of event, an individual who has the mutant type in layer 2 is selected as the parent, which occurs with probability $[2ri_1+(1+r)i_3]/[2ri_1+(1+r)(i_2+i_3)+2i_4]$, and layer 2 is selected for reproduction with probability $1/2$. Then, we select a neighbor of the parent in layer 2 as the child, and the child has the resident type in both layers, which occurs with probability $i_4/(N-1)$. The state after this event is $(i_1, i_2, i_3+1, i_4-1)$. Therefore, we obtain
\begin{equation}
p_{(i_1,i_2,i_3,i_4) \to (i_1, i_2, i_3+1, i_4-1)} = \frac{2ri_1+(1+r)i_3}{2ri_1+(1+r)(i_2+i_3)+2i_4}\cdot\frac{1}{2}\cdot\frac{i_4}{N-1}.
\end{equation}

In the seventh type of event, an individual who has the resident type in layer 2 is selected as the parent, which occurs with probability $[(1+r)i_2+2i_4]/[2ri_1+(1+r)(i_2+i_3)+2i_4]$, and layer 2 is selected for reproduction with probability $1/2$. Then, we select a neighbor of the parent in layer 2 as the child, and the child has the mutant type in both layers, which occurs with probability $i_1/(N-1)$. The state after this event is $(i_1-1, i_2+1, i_3, i_4)$. Therefore, we obtain
\begin{equation}
p_{(i_1,i_2,i_3,i_4) \to (i_1-1, i_2+1, i_3, i_4)} = \frac{(1+r)i_2+2i_4}{2ri_1+(1+r)(i_2+i_3)+2i_4}\cdot\frac{1}{2}\cdot\frac{i_1}{N-1}.
\end{equation}

In the eighth type of event, an individual who has the resident type in layer 2 is selected as the parent, which occurs with probability $[(1+r)i_2+2i_4]/[2ri_1+(1+r)(i_2+i_3)+2i_4]$, and layer 2 is selected for reproduction with probability $1/2$. Then, we select a neighbor of the parent in layer 2 as the child, and the child has the resident type in layer 1 and the mutant type in layer 2, which occurs with probability $i_3/(N-1)$. The state after this event is $(i_1, i_2, i_3-1, i_4+1)$. Therefore, we obtain
\begin{equation}
p_{(i_1,i_2,i_3,i_4) \to (i_1, i_2, i_3-1, i_4+1)} = \frac{(1+r)i_2+2i_4}{2ri_1+(1+r)(i_2+i_3)+2i_4}\cdot\frac{1}{2}\cdot\frac{i_3}{N-1}.
\end{equation}

If any other event occurs, the state remains unchanged. Therefore, we obtain 
\begin{align}\label{eq-ccn-unchanged}
p_{(i_1,i_2,i_3,i_4) \to (i_1, i_2, i_3, i_4)} = 1&-p_{(i_1,i_2,i_3,i_4) \to (i_1+1, i_2, i_3-1, i_4)}-p_{(i_1,i_2,i_3,i_4) \to (i_1, i_2+1, i_3, i_4-1)}\nonumber\\
&-p_{(i_1,i_2,i_3,i_4) \to (i_1-1, i_2, i_3+1, i_4)}-p_{(i_1,i_2,i_3,i_4) \to (i_1, i_2-1, i_3, i_4+1)}\nonumber\\
&-p_{(i_1,i_2,i_3,i_4) \to (i_1+1, i_2-1, i_3, i_4)}-p_{(i_1,i_2,i_3,i_4) \to (i_1, i_2, i_3+1, i_4-1)}\nonumber\\
&-p_{(i_1,i_2,i_3,i_4) \to (i_1-1, i_2+1, i_3, i_4)}-p_{(i_1,i_2,i_3,i_4) \to (i_1, i_2, i_3-1, i_4+1)}.
\end{align}

By slightly adapting the notation introduced in section~\ref{fp-general-case},
we denote by $x^{[1]}_{\bm{i}}$ and $x^{[2]}_{\bm{i}}$ the fixation probability of the mutant type in layer 1 and layer 2, respectively, when the initial state is $\bm{i}=(i_1,i_2,i_3,i_4)$. To obtain fixation probabilities in layer 1, we solve Eq.~\eqref{vec-eq-layer1}, where $\bm{x}^{[1]}$ is a column vector in which each entry
is the fixation probability for the mutant type starting from one of the $\binom{N+3}{3}$ initial states. The boundary conditions are given by $x^{[1]}_{(N,0,0,0)}=1$, $x^{[1]}_{(0,N,0,0)}=1$, $x^{[1]}_{(0,0,N,0)}=0$, and $x^{[1]}_{(0,0,0,N)}=0$. To obtain fixation probabilities in layer 2, we solve Eq.~\eqref{vec-eq-layer2} with boundary conditions $x^{[2]}_{(N,0,0,0)}=1$, $x^{[2]}_{(0,N,0,0)}=0$, $x^{[2]}_{(0,0,N,0)}=1$, and $x^{[2]}_{(0,0,0,N)}=0$. There are two initial states with one mutant in each layer, i.e., $(1, 0, 0, N-1)$ and $(0, 1, 1, N-2)$. These initial states occur with probability $1/N$ and $(N-1)/N$, respectively. Therefore, we obtain
\begin{equation}
x^{[\ell]}_C = \frac{1}{N} x^{[\ell]}_{(1, 0, 0, N-1)} + \frac{N-1}{N} x^{[\ell]}_{(0, 1, 1, N-2)},\quad \ell \in \{ 1, 2 \},
\end{equation}
where we note that $x^{[\ell]}_C$ is the fixation probability for the mutant type in layer $\ell$ when there is initially one mutant in each layer.

We obtained $x^{[1]}_C$ $(= x^{[2]}_C)$ by numerically solving Eq.~\eqref{vec-eq-layer1} for $N=6$ and $N=30$. The results shown in Figure~\ref{fig:m1-m2-complete}(a) and \ref{fig:m1-m2-complete}(b) for $N=6$ and $N=30$, respectively, indicate that these coupled complete graphs are suppressors of selection. This result is consistent with Theorem~\ref{complete-thm}.

\captionsetup{labelfont=bf}
\begin{figure}[H]
  \centering
  \includegraphics[width=1.0\linewidth]{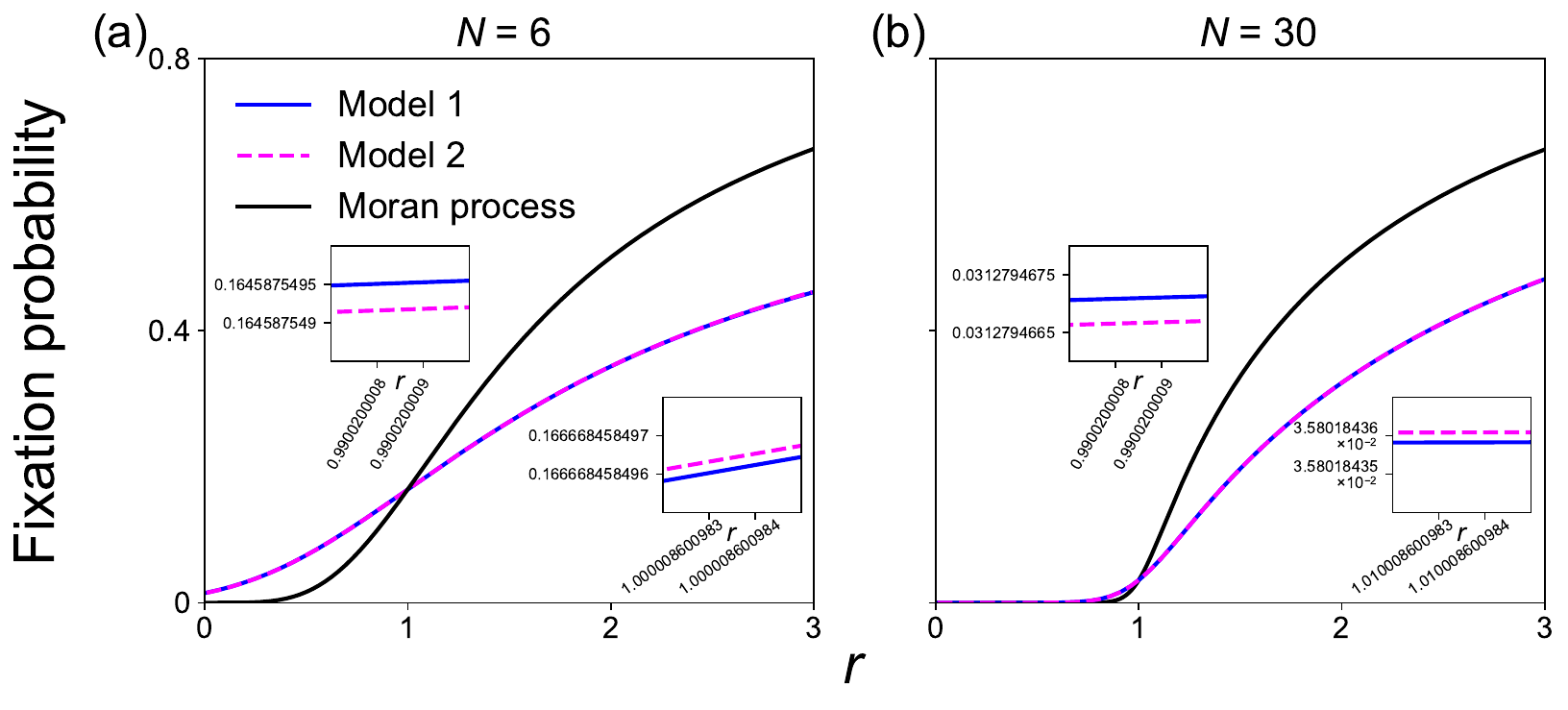}
\caption{Fixation probability for coupled complete graphs under models 1 and 2. (a) $N=6$. (b) $N=30$. The insets to the left within each panel magnify the results for $r$ values less than and close to $r=1$. Those to the right within each panel magnify the results for $r$ values greater than and close to $r=1$.}
\label{fig:m1-m2-complete}
\end{figure}

\subsection{Combination of the complete graph and star graph}\label{complete-star-bilayer}

Next, we consider the two-layer network in which layer 1 is the complete graph and layer 2 is the star graph. All the $N$ nodes in the complete graph are structurally equivalent, as are all the $N-1$ leaf nodes (i.e., nodes with degree 1) in the star graph. Therefore, we represent the state of the evolutionary dynamics by $\bm{i}=(h_1,h_2,i_1,i_2,i_3,i_4)$, i.e., an ordered 6-tuple, where $h_1=0$ or 1 if the individual that is the hub node (i.e., the replica node with degree $N-1$) in layer 2 is of resident or mutant type in layer 1, respectively; $h_2=0$ or 1 if the hub node in layer 2 is of the resident or mutant type, respectively; $i_1$ is the number of the remaining $N-1$ individuals that have the mutant type in both layers; $i_2$ is the number of the remaining $N-1$ individuals that have the mutant type in layer 1 and the resident type in layer 2; $i_3$ is the number of the remaining $N-1$ individuals that have the resident type in layer 1 and the mutant type in layer 2; and $i_4$ is the number of the remaining $N-1$ individuals that have the resident type in both layers.
There are $2^2\binom{N+2}{3}$ states in total.

Similarly to the case in which both layers are the complete graph, we distinguish nine types of state transitions from each state. We derive the probability of each state transition in section~S5.

We use the same numerical method for solving Eqs.~\eqref{vec-eq-layer1} and \eqref{vec-eq-layer2} as that for the coupled complete graphs. We show the fixation probability for the mutant type for $N=6$ and $N=30$ in Figure~\ref{fig:m1-complete-star}(a) and \ref{fig:m1-complete-star}(b), respectively. The figure suggests that the two-layer networks composed of a complete graph layer and a star graph layer are suppressors of selection.

\begin{figure}[H]
  \centering
  \includegraphics[width=1.0\linewidth]{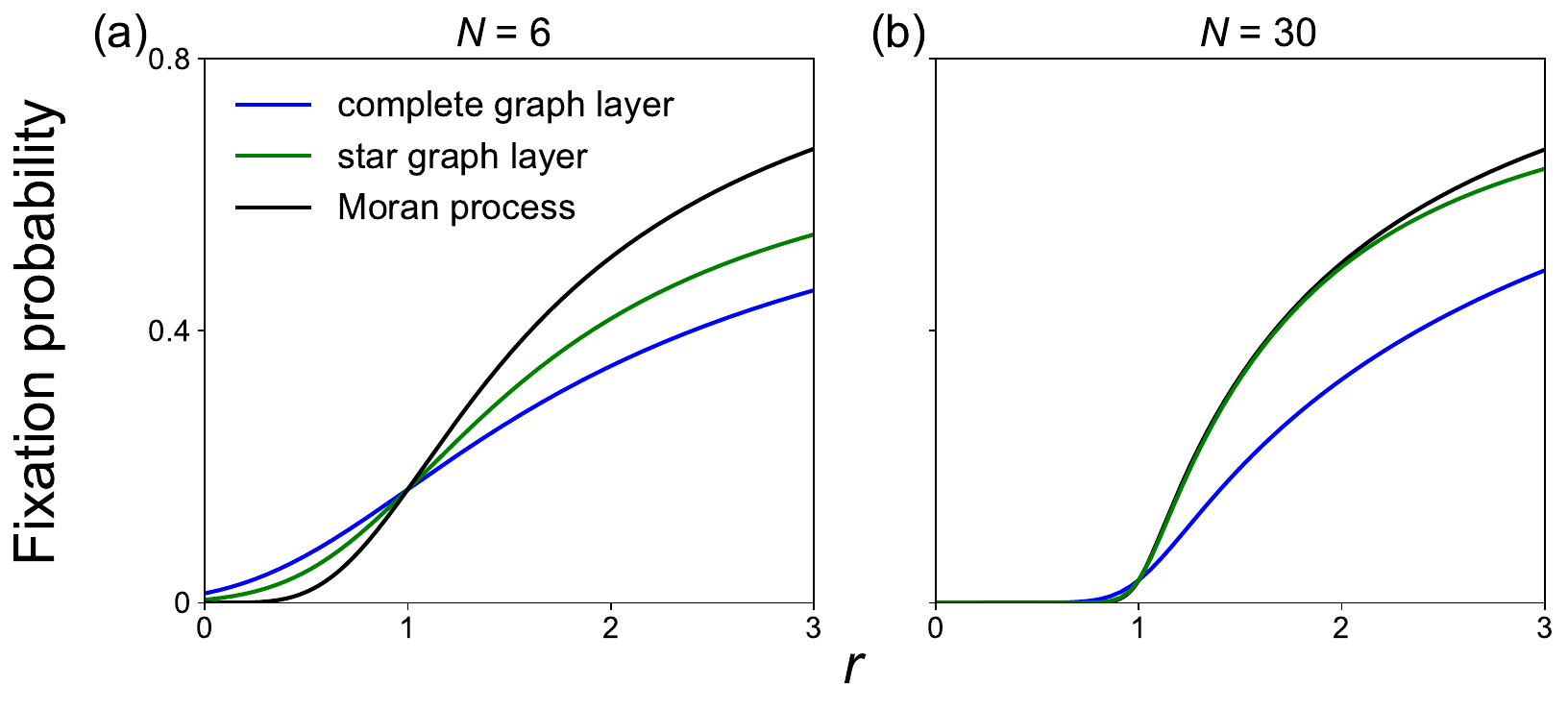}
\caption{Fixation probability for two-layer networks composed of a complete graph layer and a star graph layer under model 1. (a) $N=6$. (b) $N=30$.
}
\label{fig:m1-complete-star}
\end{figure}

\subsection{Coupled star graphs}\label{bd-ssn}

Here we consider the case in which each layer is the star graph with $N$ nodes. We assume that the hub replica node in layer 1 corresponds to a leaf replica node in layer 2 and vice versa. Then, we can specify the network's state by an ordered 8-tuple $\bm i=(h_1, h_2, h_3, h_4, i_1, i_2, i_3, i_4)$, where $h_1=0$ or 1 if the hub node in layer 1 is of resident or mutant type, respectively; $h_2=0$ or 1 if the individual that is the hub node in layer 1 is of resident or mutant type in layer 2, respectively; $h_3=0$ or 1 if the individual that is the hub node in layer 2 is of resident or mutant type in layer 1, respectively; $h_4=0$ or 1 if the hub node in layer 2 is of resident or mutant type, respectively. We reuse $i_1$, $i_2$, $i_3$, and $i_4$ as defined in section~\ref{complete-star-bilayer} but with a slight difference. Here, we count $i_1$, $i_2$, $i_3$, and $i_4$ among the $N-2$ individuals that are leaf nodes in both layers. There are $2^4\binom{N+1}{3}$ states in total.

We distinguish all types of state transitions from each state and derive the probability of each state transition in section~S6. The number of the types of state transitions varies between seven and nine and depends on the current  state.

We use the same numerical method for solving Eqs.~\eqref{vec-eq-layer1} and \eqref{vec-eq-layer2} as that for the coupled complete graphs. We show the fixation probability for the mutant type for $N=6$ and $N=30$ in Figure~\ref{fig:m1-star-star}(a) and \ref{fig:m1-star-star}(b), respectively. Figure~\ref{fig:m1-star-star} indicates that the coupled star graph is a suppressor of selection when $N=6$ but is neither a suppressor nor an amplifier of selection when $N=30$. However, in both cases, the coupled star graph is more suppressing than the one-layer star graph, which is a strong amplifier of selection. This last result is consistent with Theorem~\ref{complete-bi-thm} and Remark~\ref{star-thm}.

\begin{figure}[H]
  \centering
  \includegraphics[width=1.0\linewidth]{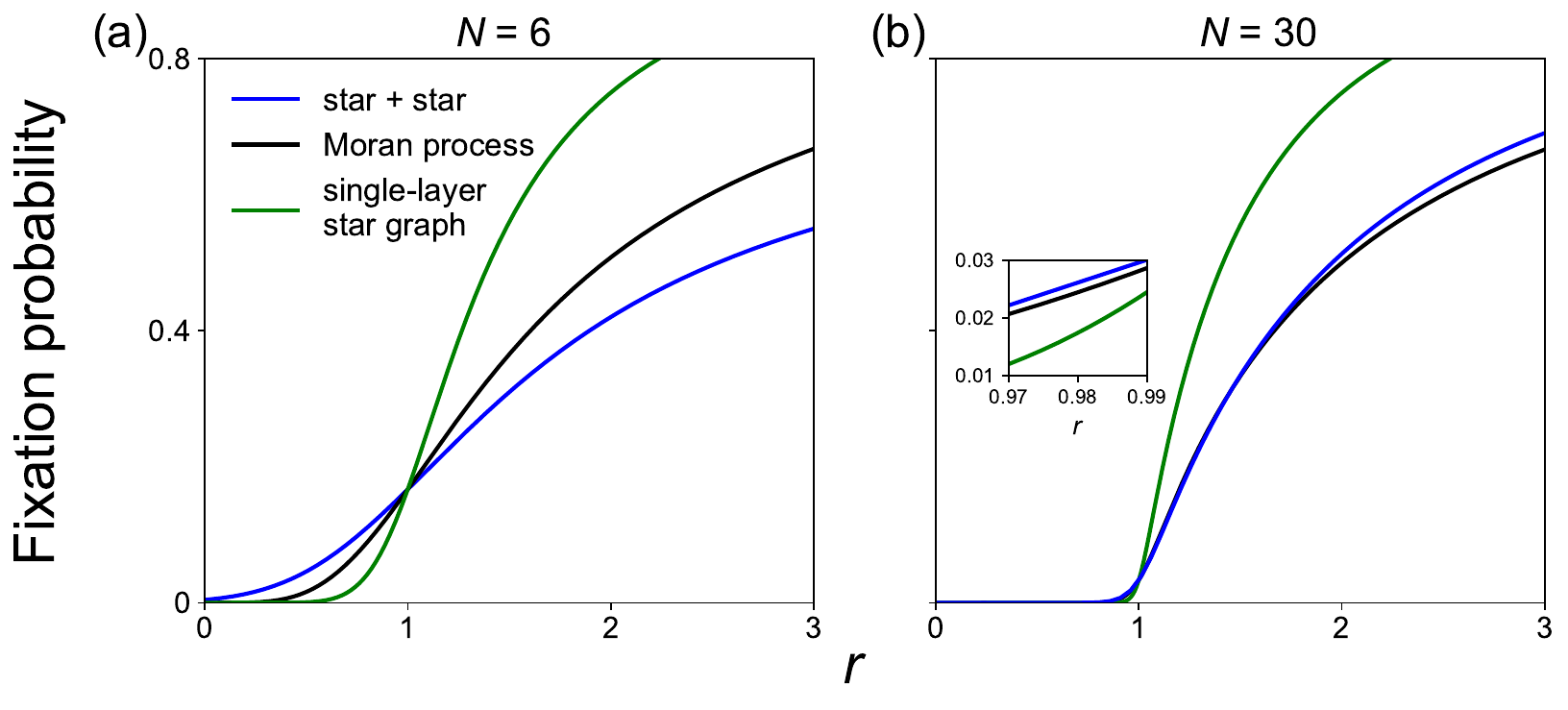}
\caption{Fixation probability for coupled star graphs under model 1. We compare the results for the coupled star graphs, shown in blue, with those for the Moran process, shown in black, and those for the single-layer star graphs, shown in green. (a) $N=6$. (b) $N=30$. The inset in (b) magnifies the result for $r$ values less than and close to $r=1$.
}
 \label{fig:m1-star-star}
\end{figure}

\subsection{Combination of the complete graph and complete bipartite graph}

Consider two-layer networks in which layer 1 is the complete graph and layer 2 is the complete bipartite graph $K_{N_1, N_2}$, where $N_1+N_2=N$. The complete bipartite graph $K_{N_1, N_2}$ has two disjoint subsets of nodes $V_1$ and $V_2$ with $N_1$ and $N_2$ nodes, respectively. Each node in $V_1$ is adjacent to each node in $V_2$. We can describe the state of the evolutionary dynamics by an 8-tuple. For each 8-tuple state,
we distinguish 17 types of transition events and can derive the transition probability of each state to each state.
We show the calculations of the transition probability matrix in section~S7.

We use the same numerical method to solve \eqref{vec-eq-layer1} and \eqref{vec-eq-layer2} for this two-layer network. We show the fixation probability for $N=6$ in Figure~\ref{fig:m1-complete-bipartite}(a) and \ref{fig:m1-complete-bipartite}(b), and $N=20$ in Figure~\ref{fig:m1-complete-bipartite}(c) and \ref{fig:m1-complete-bipartite}(d), respectively. We reduce the larger $N$ value to $20$ due to large memory requirement for this network. We set $N_1 = N_2 = N/2$ in Figure~\ref{fig:m1-complete-bipartite}(a) and \ref{fig:m1-complete-bipartite}(c). In this case, a one-layer network $K_{N_1, N_2}$ is a regular graph and therefore an isothermal graph. We set $N_2 \approx 2 N_1$, where $\approx$ represents ``approximately equal to'', in Figure~\ref{fig:m1-complete-bipartite}(b) and \ref{fig:m1-complete-bipartite}(d). Figure~\ref{fig:m1-complete-bipartite} shows that these two-layer networks are suppressors of selection.

\begin{figure}[H]
  \centering
  \includegraphics[width=1.0\linewidth]{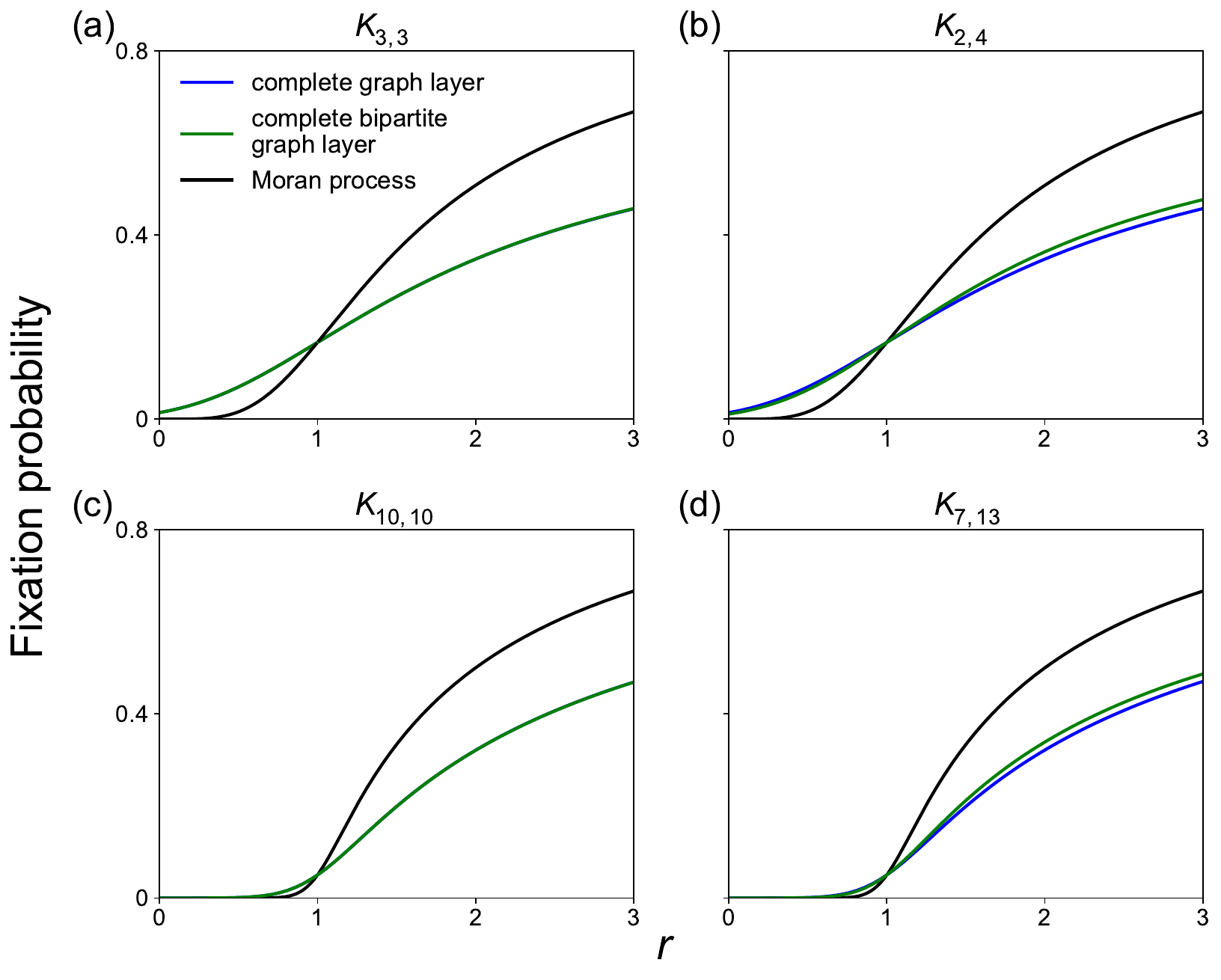}
\caption{Fixation probability for two-layer networks composed of a complete graph layer and a complete bipartite graph layer under model 1. (a) $N=6$ with $K_{3,3}$. (b) $N=6$ with $K_{2,4}$. (c) $N=20$ with $K_{10,10}$. (d) $N=20$ with $K_{7,13}$. In panels (a) and (c), the results for the complete graph layer are close to those for the complete bipartite graph layer such that the blue lines are almost hidden behind the green lines.
}
\label{fig:m1-complete-bipartite}
\end{figure}

\subsection{Combination of the complete graph and two-community networks}

Empirical networks often have community (i.e., group) structure \cite{Fortunato2010PhysRep}.
Therefore, in this section, we consider two-layer networks in which layer 1 is the complete graph and layer 2 is a weighted network with two communities. Specifically, layer 2 is composed of two disjoint sets of nodes $V_1$ and $V_2$ with $N_1$ and $N_2$ nodes, respectively, where $N_1+N_2=N$. Each set of nodes forms a clique (i.e., complete graph as a subgraph) with edge weight $1$. In addition, each node in $V_1$ is connected to each node in $V_2$ with edge weight $\overline{\epsilon}$. A small $\overline{\epsilon}$ implies a strong community structure. It should also be noted that the combination of the complete graph and the complete bipartite graph corresponds to this model in the limit of $\overline{\epsilon} \to \infty$. Similarly to the case of the combination of the complete graph and the complete bipartite network, here we use an 8-tuple 
and distinguish 17 types of transition events from each state to another. We show the calculations of the transition probability matrix in section~S8.

We numerically solve Eqs.~\eqref{vec-eq-layer1} and \eqref{vec-eq-layer2} for $N=6$ and $N=20$ with $\overline{\epsilon}=0.1$. We show the results for $N=6$ in Figure~\ref{fig:m1-complete-community}(a), \ref{fig:m1-complete-community}(b) and $N=20$ in Figure~\ref{fig:m1-complete-community}(c), \ref{fig:m1-complete-community}(d). In Figure~\ref{fig:m1-complete-community}(a), \ref{fig:m1-complete-community}(c), we set $N_1 = N_2 = N/2$, and the two-community network is an isothermal graph. In Figure~\ref{fig:m1-complete-community}(b), \ref{fig:m1-complete-community}(d), we set $N_2 \approx 2 N_2$. Figure~\ref{fig:m1-complete-community} indicates that these two-layer networks are suppressors of selection.

\begin{figure}[H]
  \centering
  \includegraphics[width=1.0\linewidth]{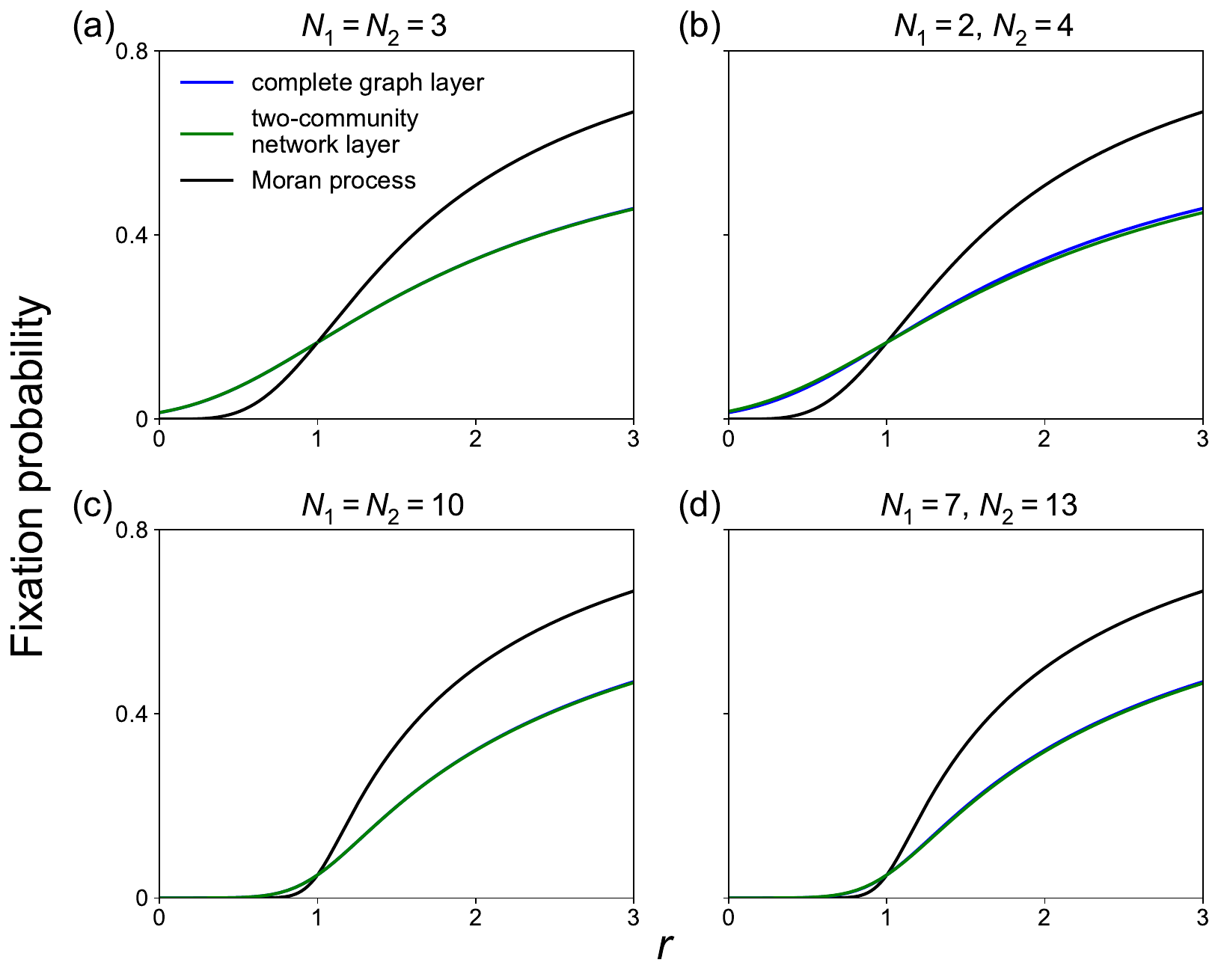}
\caption{Fixation probability for two-layer networks composed of a complete graph layer and a two-community network layer under model 1. (a) $N_1=N_2=3$ and $\overline{\epsilon}=0.1$. (b) $N_1=2$, $N_2=4$, and $\overline{\epsilon}=0.1$. (c) $N_1=N_2=10$ and $\overline{\epsilon}=0.1$. (d) $N_1=7$, $N_2=13$, and $\overline{\epsilon}=0.1$.
}
\label{fig:m1-complete-community}
\end{figure}

\subsection{Death-birth process variant of model 1}\label{m1-db-rule}

We have analyzed model 1, which is a two-layer Bd process. To assess the robustness of our main result that two-layer networks are mostly suppressors of selection,
in this section we consider a variant of model 1 in which we replace the Bd updating rule by the death-birth updating rule with selection on the birth, often referred to as the dB rule \cite{Ohtsuki2006Nature,Masuda2009JTB, Shakarian2012Biosys, Pattni2015ProcRSocA}. According to the dB rule, we select an individual uniformly at random for death in each time step. Then, the neighbors of the dying individual compete to reproduce its type on the vacant site with probability proportional to their fitness. The fixation probability for this death-birth process in the case of the well-mixed population (i.e., unweighted complete graph) is~\cite{Kaveh2015RSocOSci}
%
%
\begin{equation}
\rho^{\text{dB}} = \left(1-\frac{1}{N}\right)\frac{1-\frac{1}{r}}{1-\frac{1}{r^{N-1}}}.
\label{eq:fixation-probability-Moran-dB} 
\end{equation}

We extend the dB process to the case of two-layer networks. For simplicity, we only consider the case in which both layers are complete graphs (i.e., coupled complete graph). In each time step, an individual selected uniformly at random (i.e., with probability $1/N$) dies. We then select one of the two layers to operate the dB process with equal probability, i.e., $1/2$. The neighbors of the dying individual in the selected layer compete to fill the empty site with probability proportional to the product of their fitness and the edge weight (which we set to $1$ because we are considering unweighted complete graphs for both layers). As we show in section~S9, we can derive the set of $\binom{N+3}{3}-4$ linear equations with which to calculate the fixation probability similarly to the case of the Bd process on the coupled complete graph.

We show the fixation probability for this death-birth process on coupled complete graphs with $N=6$ and $N=30$ in Figure~\ref{fig:m1-complete-complete-dB}(a) and \ref{fig:m1-complete-complete-dB}(b), respectively. We find that these coupled complete graphs are suppressors of selection under the dB rule, relative to the Moran process.
The green lines in Figure~\ref{fig:m1-complete-complete-dB} represent Eq.~\eqref{eq:fixation-probability-Moran-dB}. We find that the coupled complete graphs under the dB rule are also more suppressing than the one-layer complete graphs under the same dB rule.

\begin{figure}[H]
  \centering
  \includegraphics[width=1.0\linewidth]{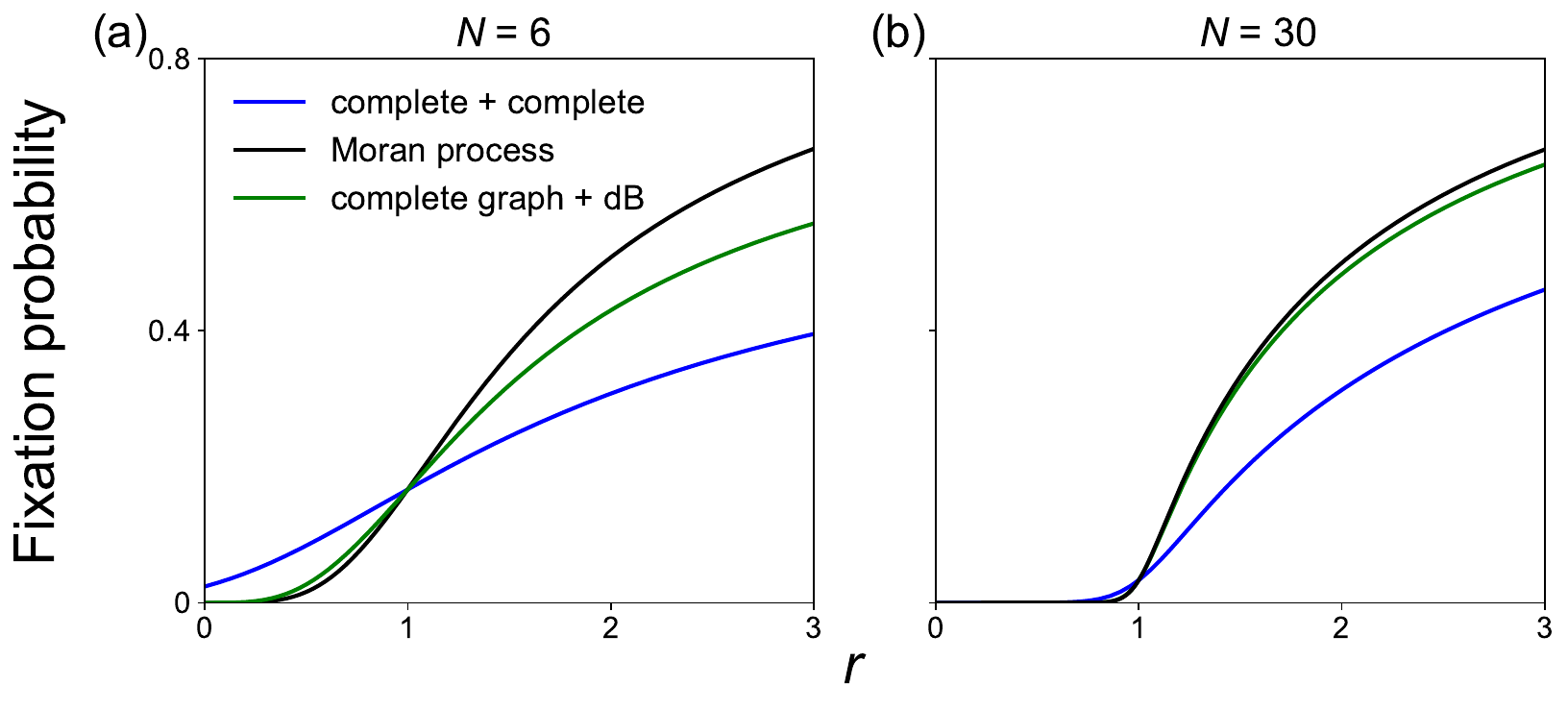}
\caption{Fixation probability for coupled complete graphs under the variant of model 1 with the dB updating. (a) $N=6$. (b) $N=30$. The green lines represent Eq.~\eqref{eq:fixation-probability-Moran-dB}.}
\label{fig:m1-complete-complete-dB}
\end{figure}

\subsection{Model 2 on coupled complete graphs\label{sub:model2-semianalytical}}

In this section, we consider model 2 when both layers are complete graphs. We can describe the state of the network
by the same 4-tuple $\bm{i}=(i_1, i_2, i_3, i_4)$ that we used in section~\ref{bd-ccn} and derive a set of $\binom{N+3}{3}-4$ linear equations to determine the fixation probability. For each state $(i_1,i_2,i_3,i_4)$, we distinguish 21 types of events and obtain the transition probability from each state to another state, as shown in section~S10. 

We show the fixation probability for the mutant type for $N=6$ and $N=30$ by the dashed lines in Figure~\ref{fig:m1-m2-complete}(a) and \ref{fig:m1-m2-complete}(b), respectively. We find that these coupled complete graphs are suppressors of selection. Furthermore, the results for model 2 are close to those for model 1, while model 2 is slightly less suppressing than model 1.

\section{Numerical results}

In this section, we carry out numerical simulations of the Bd process on four two-layer networks without particular symmetry, i.e., a coupled Erd\H{o}s--R\'{e}nyi (ER) random graph, a coupled Barab\'{a}si--Albert (BA) network, and two empirical two-layer networks. To generate a two-layer ER random graph with $N=100$ individuals, in each layer we connected each pair of nodes with probability $0.1$. We iterated generating networks from the ER random graph with $N=100$ nodes until we obtained two connected networks, which we used as two layers. The two generated networks had $M_1=498$ edges and $M_2=500$ edges, respectively. To generate a two-layer BA network, we sampled two networks with $N=100$ nodes each from the BA model \cite{Barabasi1999Sci}. In the network growth process of the BA model, each incoming node is connected to five already existing nodes according to the linear preferential attachment rule. We use the star graph on six nodes as the initial network in each layer. Each of the two generated networks was more heterogeneous than the ER graph in terms of the node's degree, was connected, and had $M=475$ edges. Without loss of generality, we uniformly randomly permuted the label of all nodes in layer 2. Otherwise, there would be a strong positive correlation between the degrees of the two replica nodes of the same individual. One empirical network is the Vickers--Chan 7th Graders (VC7) network, which is a two-layer network of scholastic and friendship relationships among $N = 29$ seventh grade students in a school in Victoria, Australia, with $M_1=126$ edges in layer 1 and $M_2=152$ edges in layer 2 \cite{Vickers1981}. The second empirical network is the Lazega Law Firm (LLF) network, which is a two-layer network of professional and cooperative relationships among $N=71$ partners at the LLF, with $M_1=717$ edges in layer 1 and $M_2=726$ edges in layer 2 \cite{Emmanuel2001}.

We focus on model 1 and examine whether these two-layer networks tend to be suppressors of selection.
We initially placed a mutant on just one replica node in each layer. Therefore, there are $N\times N$ possible initial states. To calculate the fixation probability for a single mutant for each layer, we run the Bd process until the mutant type or the resident type fixates in the selected layer. For each value of $r$, we run $20N^2$ simulations starting from each of the $N^2$ initial conditions 20 times. We obtain the fixation probability of the mutant type for each layer as the number of runs in which the mutant type has fixated in the selected layer divided by $20 N^2$.

Figure~\ref{fig:four-networks}(a)--(d) shows the relationship between the fixation probability for a single mutant and $r$ for the four two-layer networks. The figure shows that both layers are suppressors of selection in all four two-layer networks. Unexpectedly, we also find that the fixation probability as a function of $r$ is similar between the two layers, which are different networks in terms of edges, in all four two-layer networks.

\begin{figure}[H]
  \centering
  \includegraphics[width=0.9\linewidth]{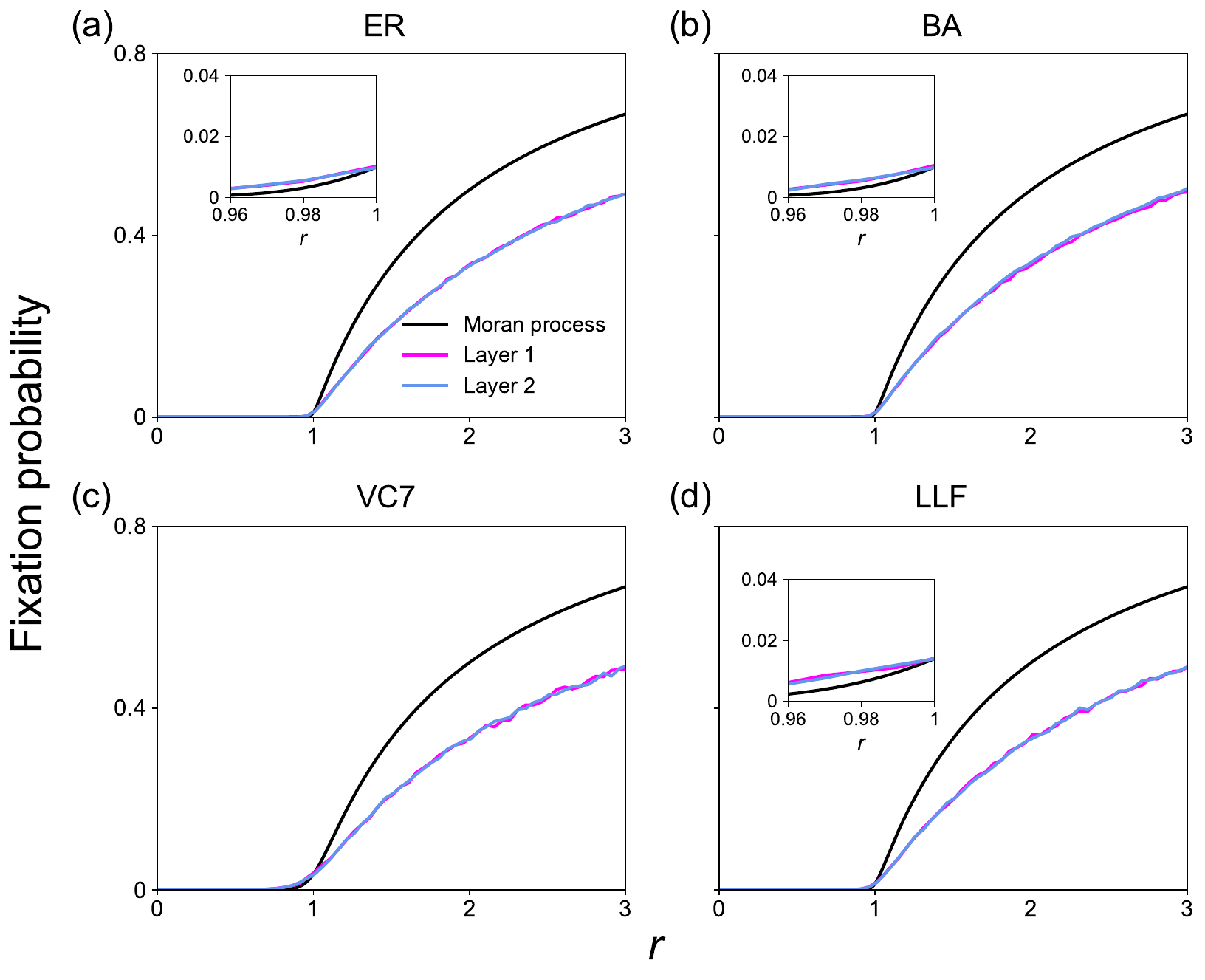}
\caption{Fixation probability in model and empirical two-layer networks. (a) Two-layer ER graph. (b) Two-layer BA model. (c) Vickers--Chan 7th Graders network (VC7). (d) Lazega Law Firm network (LLF). The insets of (a), (b), and (d) magnify the results for values of $r$ less than and close to $r=1$.
}
\label{fig:four-networks}
\end{figure}

\section{Discussion\label{sec:discussion}}

Inspired by an evolutionary game model on two-layer networks~\cite{Su2022nhb}, we formulated and analyzed constant-selection dynamics on two-layer networks in which each individual's fitness is defined to be the sum of the fitness of the replica nodes over the two layers. Using martingales, we proved that two-layer networks are suppressors of selection if one layer is a particular network with high symmetry--at least relative to that network considered as a single-layer network. The single-layer regular graphs, including the complete graph, cycle, and bipartite complete graphs in which the two parts have the same number of nodes,
%
%
are isothermal graphs~\cite{Lieberman2005nature,Broom2008ProcRoySocA}, i.e., equivalent to the Moran process. Therefore, these theorems show that two-layer networks that have any of these networks as one layer are suppressors of selection regardless of the second layer. Furthermore, we semi-analytically analyzed some two-layer networks in which both layers are highly symmetric networks to show that they are also suppressors of selection, except the coupled star graph with $N=30$ nodes. Nonetheless, the couple star graph with $N=30$ is more suppressing than the single star graph with $N=30$. Numerical simulations of stochastic evolutionary dynamics on four larger two-layer networks without particular symmetry have also shown that these networks are suppressors of selection. Overall, we have provided mathematical results and compelling numerical evidence that two-layer networks are suppressors of selection unless both layers are strong amplifiers of selection, such as the star graph (see Figure~\ref{fig:m1-star-star}).

We argue that the intuitive reason behind this result is the key assumption of our model that the total fitness of a replica node depends on the fitness of the corresponding replica node in the other layer as well as its own fitness. Suppose that $r>1$ and that a replica node $i$ in layer 1 is of resident type. Then, intuitively, it is more likely to be invaded by a mutant type if a neighbor is a mutant, than vice versa, because the mutant's fitness ($=r$) is higher than the resident's fitness ($=1$). However, if the replica node $i$ in layer 2 is of mutant type, the total fitness for the $i$th individual is equal to $1+r$. Therefore, the mutant type in layer 2 boosts the likelihood that the $i$th individual reproduces in layer 1 relative to the case of a single-layer network. In this manner, the two-layer nature of the model blurs the effect of fitness due to the interference of one layer into constant-selection dynamics in the other layer. This is why two-layer networks are expected to be suppressors of selection--at least relative to their one-layer counterparts. We note that we exploited this intuition in formulating and proving our theorems using martingales.

As we reviewed in section~\ref{sec:introduction}, most networks are amplifiers of selection under the Bd process and uniform initialization. However, small directed networks~\cite{Masuda2009JTB}, metapopulation model networks~\cite{Yagoobi2021SciRep,Marrec2021PRL}, a type of temporal network called a switching network (i.e., in which the network switches between two static network with regular or irregular time intervals) when $N$ is small~\cite{Gyan2023JMB}, and hypergraphs~\cite{Liu2023PLoSCB} tend to be suppressors of selection under the same conditions (i.e., the Bd process with uniform initialization) even if the undirected variant of them is an amplifier of selection.  Here we add two-layer networks as another case in which suppressors of selection are common. These results altogether suggest that amplifiers of selection under the Bd process with uniform initialization are not as common as was initially considered. It is straightforward to extend our models to the case of more than two layers. Constant-selection dynamics under adaptive networks (i.e., time-varying networks in which network changes are induced by the state, or type, of the nodes) are underexplored~\cite{Tkadlec2021arxiv,Melissourgos2022JCSS}. Determining whether or not these networks are amplifying or suppressing would be a worthwhile investigation.

We have exploited some highly symmetric networks to be used as network layers with the aim of reducing the number of linear equations to be solved from $O(2^{2N})$ to a polynomial order of $N$. We used the same strategy to analyze hypergraphs~\cite{Liu2023PLoSCB} and switching temporal networks~\cite{Gyan2023JMB}. The same technique was exploited for analytically solving the fixation probability in the complete bipartite graphs~\cite{Zhang2012DisDynNatSoc,Voorhees2013ProcRoySocA}, stars~\cite{Lieberman2005nature,Broom2008ProcRoySocA}, and so-called superstars \cite{Lieberman2005nature}.
However, the size of the two-layer networks for which we exactly calculated the fixation probability is still modest, i.e., up to $N=20$ or $N=30$ depending on the network. This is because the network in our models has two layers, and we need to track the type (i.e., resident or mutant) of the replica nodes in both layers to specify the state of an individual. It would be ideal if this type of mathematical technique were to lead to analytical solutions of the fixation probability rather than just to reducing the dimension of the problem. This is left for future work.

\section*{Acknowledgments}

We thank Alex McAvoy for fruitful discussions.
N.M. acknowledges support from the 
Japan Science and Technology Agency (JST) Moonshot R\&D (under Grant No.\,JPMJMS2021), the National Science Foundation (under Grant Nos.\,2052720 and 2204936), and JSPS KAKENHI (under grant Nos.\,JP 21H04595 and 23H03414).


\newpage

\counterwithout{equation}{section}
\renewcommand{\theequation}{S\arabic{equation}}

\newtheoremstyle{rmk}
{3pt}
{3pt}
{}
{}
{\itshape}
{:}
{.5em}
{}
\newtheorem{rmk}{Remark}
\renewcommand{\thermk}{S\arabic{rmk}}
                                                                                                                                   
\part*{}

\renewcommand{\tablename}{TABLE}
\renewcommand{\thefigure}{S\arabic{figure}}
\renewcommand{\thetable}{S\arabic{table}}
\renewcommand{\thesection}{S\arabic{section}}
\renewcommand{\thesubsection}{\Alph{subsection}}

\section*{\centering{\Large{Supporting Information}}}

\section{Proof of Lemma~6}\label{submartingale-proof-cycle}

In this section, we show that $E[Y_{t+1} | \mathcal{B}_t] \ge Y_t$ when one layer is the cycle graph.

We assume that the replica nodes of the mutant type are initially consecutive. Then, at any time step, all replica nodes of the mutant type are consecutively numbered in layer 1, forming what we call the segment, without being interrupted by the replica nodes of the resident type (see Figure~\ref{fig:cyc-star} for a schematic). Therefore, $p(\xi_t)$ only depends on the states of the two individuals at the two endpoints of the segment with the mutant type in layer 1, i.e., individuals $j$ and $\ell$ in Figure~\ref{fig:cyc-star}. Note that individuals $j$ and $\ell$ are identical when $X_0=1$. Similarly, $q(\xi_t)$ only depends on the states of the two individuals that are next to the two endpoints of the same segment and have the resident type in layer 1, i.e., individuals $i$ and $k$ in Figure~\ref{fig:cyc-star}. Note that individuals $i$ and $k$ are identical when $X_0=N-1$. We next show that $E[Y_{t+1} | \mathcal{B}_t] \ge Y_t$, regardless of the states of individuals at or adjacent to the two endpoints of the segment.

\renewcommand{\thefigure}{S1}
\begin{figure}[h]
  \centering
  \includegraphics[width=0.8\linewidth]{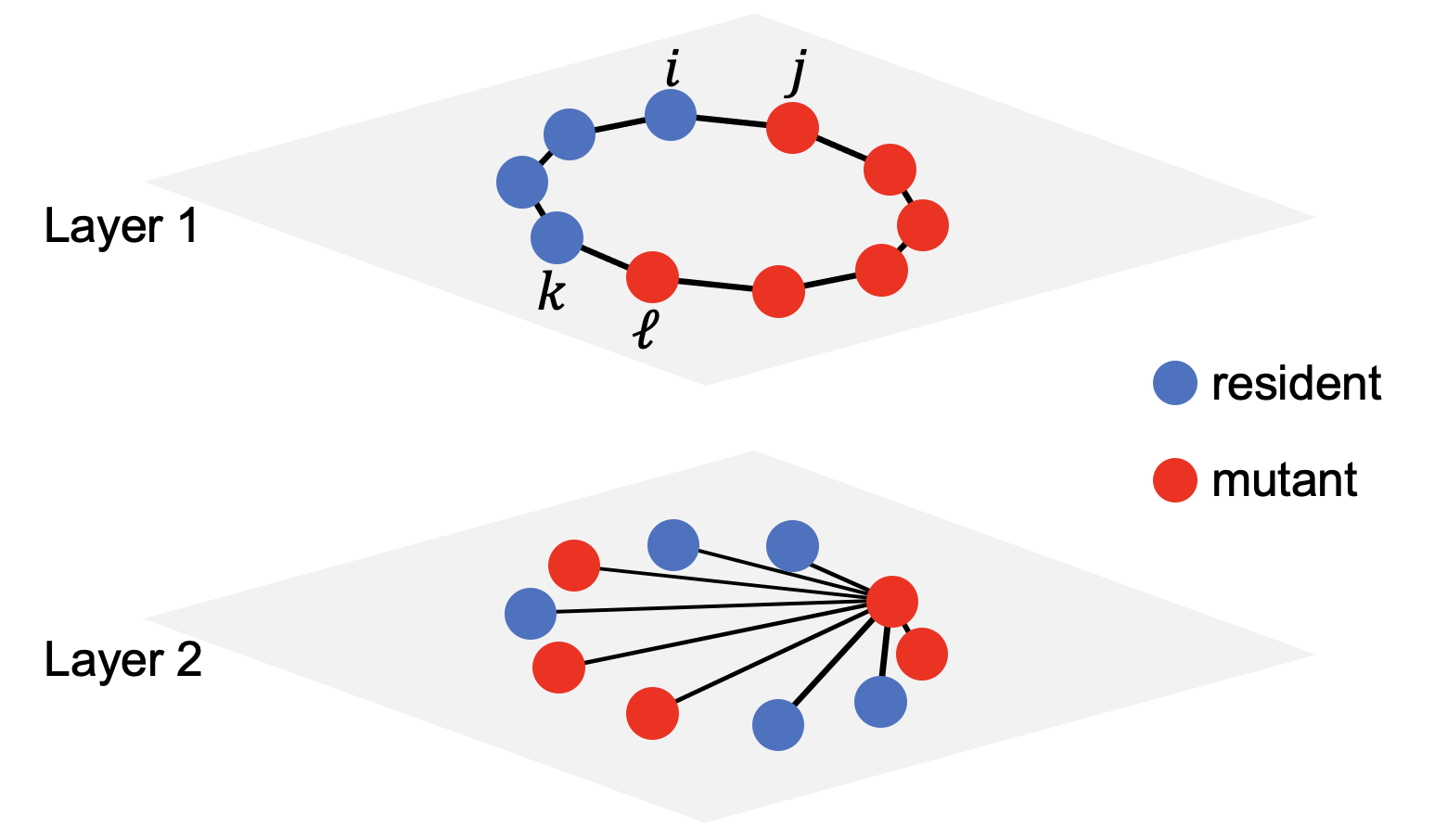}
\caption{A two-layer network in which layer 1 is the unweighted cycle graph and layer 2 is the unweighted star graph.
}
\label{fig:cyc-star}
\end{figure}

We distinguish among the following nine cases that are different in terms of the type of the individuals $i$, $j$, $k$, and $\ell$.

In the first case, $j$ and $\ell$, which by definition have the mutant type in layer 1, have the mutant type in layer 2, and $i$ and $k$, which by definition have the resident type in layer 1, have the resident type in layer 2. We reuse $N_1$, $N_2$, $N_3$, and $N_4$ defined in the proof of Lemma~3. Then, in a single time step of the original Bd process, $X_t$ increases by one with probability
\begin{equation}
p' = \frac{4r}{2r N_1 + (r+1)(N_2 + N_3) + 2N_4} \cdot \frac{1}{2} \cdot \frac{1}{2}
\label{eq:p'-cycle-1}
\end{equation}
and decreases by one with probability
\begin{equation}
q' = \frac{4} {2r N_1 + (r+1)(N_2 + N_3) + 2N_4} \cdot \frac{1}{2} \cdot \frac{1}{2}.
\label{eq:q'-cycle-1}
\end{equation}
This derivation is valid only if $1<X_0<N-1$. However, in fact, Eqs.~\eqref{eq:p'-cycle-1} and \eqref{eq:q'-cycle-1} also hold true for $X_0=1$ and $X_0=N-1$, respectively, although $j$ and $\ell$ are identical individuals when $X_0=1$, and $i$ and $k$ are identical individuals when $X_0=N-1$. Similarly, the probability that $X_t$ increases or decreases by one in the other eight cases is derived in the following text for $1<X_0<N-1$ but holds true for $X_0 = 1$ or $X_0 = N-1$ as well. Note that 
$X_0=1$ occurs only in the first (i.e., present), second, third, fourth, fifth, and sixth cases and that $X_0 = N-1$ occurs only in the first, second, fourth, fifth, seventh, and eighth cases.

By combining Eqs.~\eqref{eq:p'-cycle-1} and \eqref{eq:q'-cycle-1} with $p(\xi_t)/q(\xi_t) = p'/q'$ and $p(\xi_t) + q(\xi_t) = 1$, we obtain
\begin{align}
p(\xi_t) =& \frac{r}{r+1},\\
q(\xi_t) =& \frac{1}{r+1}.
\end{align}
Therefore, we obtain 
\begin{align}
E[Y_{t+1} | \mathcal{B}_t] =& p(\xi_t) r^{-(X_t+1)} + q(\xi_t) r^{-(X_t-1)} \notag\\
=& \left(\frac{r}{r+1} \cdot \frac{1}{r} +  \frac{1}{r+1} \cdot r  \right)   Y_t \notag\\
=&  Y_t.
\label{eq:martingale-eq-cycle-case1}
\end{align}

In the second case, the two individuals of the mutant type in layer 1 have the mutant type in layer 2, and the two individuals of the resident type in layer 1 have the mutant type in layer 2. Then, in a single time step of the original Bd process, $X_t$ increases by one with probability $p'$ and decreases by one with probability
\begin{equation}
q'' = \frac{2+2r} {2r N_1 + (r+1)(N_2 + N_3) + 2N_4} \cdot \frac{1}{2} \cdot \frac{1}{2}.
\label{eq:q'-cycle-2}
\end{equation}  
Therefore, we obtain
\begin{align}
p(\xi_t) =& \frac{2r}{3r+1},\\
q(\xi_t) =& \frac{r+1}{3r+1},
\end{align}
and
\begin{align}
E[Y_{t+1} | \mathcal{B}_t] =& \left(\frac{2r}{3r+1} \cdot \frac{1}{r} +  \frac{r+1}{3r+1} \cdot r  \right)   Y_t \notag\\
%
%
%
=& \left[1+\frac{(1-r)^2}{3r+1}\right] Y_t\notag\\
\geq & Y_t.
\end{align}

In the third case, the two individuals of the mutant type in layer 1 have the mutant type in layer 2, one individual of the resident type in layer 1 has the mutant type in layer 2, and the other individual of the resident type in layer 1 has the resident type in layer 2. Then, in a single time step of the original Bd process, $X_t$ increases by one with probability $p'$ and decreases by one with probability
\begin{equation}
q''' = \frac{3+r} {2r N_1 + (r+1)(N_2 + N_3) + 2N_4} \cdot \frac{1}{2} \cdot \frac{1}{2}.
\label{eq:q'-cycle-3}
\end{equation}  
Therefore, we obtain
\begin{align}
p(\xi_t) =& \frac{4r}{5r+3},\\
q(\xi_t) =& \frac{r+3}{5r+3},
\end{align}
and
\begin{align}
E[Y_{t+1} | \mathcal{B}_t] =& \left(\frac{4r}{5r+3} \cdot \frac{1}{r} +  \frac{r+3}{5r+3} \cdot r  \right)   Y_t \notag\\
%
%
=& \left[1+\frac{(1-r)^2}{5r+3}\right] Y_t\notag\\
\geq & Y_t.
\end{align}

In the fourth case, the two individuals of the mutant type in layer 1 have the resident type in layer 2, and the two individuals of the resident type in layer 1 have the resident type in layer 2. Then, in a single time step of the original Bd process, $X_t$ increases by one with probability $q''$ and decreases by one with probability $q'$. Therefore, we obtain
\begin{align}
p(\xi_t) =& \frac{r+1}{r+3},\\
q(\xi_t) =& \frac{2}{r+3},
\end{align}
and
\begin{align}
E[Y_{t+1} | \mathcal{B}_t] =& \left(\frac{r+1}{r+3} \cdot \frac{1}{r} +  \frac{2}{r+3} \cdot r  \right)   Y_t \notag\\
%
%
=& \left[1+\frac{(1-r)^2}{r^2+3r}\right] Y_t\notag\\
\geq & Y_t.
\end{align}

In the fifth case, the two individuals of the mutant type in layer 1 have the resident type in layer 2, and the two individuals of the resident type in layer 1 have the mutant type in layer 2. Then, in a single time step of the original Bd process, $X_t$ increases by one with probability $q''$ and decreases by one with the same probability $q''$. Therefore, we obtain $p(\xi_t)=q(\xi_t)=1/2$ and
\begin{align}
E[Y_{t+1} | \mathcal{B}_t] =& \left(\frac{1}{2} \cdot \frac{1}{r} +  \frac{1}{2} \cdot r  \right)   Y_t \notag\\
%
%
=& \left[1+\frac{(1-r)^2}{2r}\right] Y_t\notag\\
\geq & Y_t.
\end{align}

In the sixth case, the two individuals of the mutant type in layer 1 have the resident type in layer 2, one individual of the resident type in layer 1 has the mutant type in layer 2, and the other individual of the resident type in layer 1 has the resident type in layer 2. Then, in a single time step of the original Bd process, $X_t$ increases by one with probability $q''$ and decreases by one with probability $q'''$. Therefore, we obtain
\begin{align}
p(\xi_t) =& \frac{2r+2}{3r+5},\\
q(\xi_t) =& \frac{r+3}{3r+5},
\end{align}
and
\begin{align}
E[Y_{t+1} | \mathcal{B}_t] =& \left(\frac{2r+2}{3r+5} \cdot \frac{1}{r} +  \frac{r+3}{3r+5} \cdot r  \right)   Y_t \notag\\
%
%
=& \left[1+\frac{(r-1)^2(r+2)}{3r^2+5r}\right] Y_t\notag\\
\geq & Y_t.
\end{align}

In the seventh case, one individual of the mutant type in layer 1 has the mutant type in layer 2, the other individual of the mutant type in layer 1 has the resident type in layer two, and the two individuals of the resident type in layer 1 have the resident type in layer 2. Then, in a single time step of the original Bd process, $X_t$ increases by one with probability
\begin{equation}
p'' = \frac{3r+1} {2r N_1 + (r+1)(N_2 + N_3) + 2N_4} \cdot \frac{1}{2} \cdot \frac{1}{2}
\label{eq:p'-cycle-7}
\end{equation}
and decreases by one with probability $q'$. Therefore, we obtain
\begin{align}
p(\xi_t) =& \frac{3r+1}{3r+5},\\
q(\xi_t) =& \frac{4}{3r+5},
\end{align}
and
\begin{align}
E[Y_{t+1} | \mathcal{B}_t] =& \left(\frac{3r+1}{3r+5} \cdot \frac{1}{r} +  \frac{4}{3r+5} \cdot r  \right)   Y_t \notag\\
%
%
=& \left[1+\frac{(r-1)^2}{3r^2+5r}\right] Y_t\notag\\
\geq & Y_t.
\end{align}

In the eighth case, one individual of the mutant type in layer 1 has the mutant type in layer 2, the other individual of the mutant type in layer 1 has the resident type in layer 2, and the two individuals of the resident type in layer 1 have the mutant type in layer 2. Then, in a single time step of the original Bd process, $X_t$ increases by one with probability $p''$ and decreases by one with probability $q''$.
Therefore, we obtain
\begin{align}
p(\xi_t) =& \frac{3r+1}{5r+3},\\
q(\xi_t) =& \frac{2r+2}{5r+3},
\end{align}
and
\begin{align}
E[Y_{t+1} | \mathcal{B}_t] =& \left(\frac{3r+1}{5r+3} \cdot \frac{1}{r} +  \frac{2r+2}{5r+3} \cdot r  \right)   Y_t \notag\\
%
%
=& \left[1+\frac{(r-1)^2(2r+1)}{5r^2+3r}\right] Y_t\notag\\
\geq & Y_t.
\end{align}

In the ninth case, one individual of the mutant type in layer 1 has the mutant type in layer 2, the other individual of the mutant type in layer 1 has the resident type in layer 2, one individual of the resident type in layer 1 has the mutant type in layer 2, and the other individual of the resident type in layer 1 has the resident type in layer 2. Then, in a single time step of the original Bd process, $X_t$ increases by one with probability $p''$ and decreases by one with probability $q'''$. Therefore, we obtain
\begin{align}
p(\xi_t) =& \frac{3r+1}{4r+4},\\
q(\xi_t) =& \frac{r+3}{4r+4},
\end{align}
and
\begin{align}
E[Y_{t+1} | \mathcal{B}_t] =& \left(\frac{3r+1}{4r+4} \cdot \frac{1}{r} +  \frac{r+3}{4r+4} \cdot r  \right)   Y_t \notag\\
%
%
=& \left[1+\frac{(r-1)^2(r+1)}{4r^2+4r}\right] Y_t\notag\\
\geq & Y_t.
\end{align}

Therefore, $Y_t$ is a submartingale.

\section{Proof of Theorem~7}\label{cycle-thm-proof}

We observe that Eqs.~(4.7), (4.8), and (4.9) hold true in the present case as well. To exclude the equalities in Eq.~(4.9) for $1\le X_0 \le N-1$, we note that $E[Y_{t+1} | \mathcal{B}_t]=Y_t$ for $r\ne 1$ and $1\le X_0 \le N-1$ if and only if the first out of the nine cases introduced in the proof of Lemma 6, shown in section~\ref{submartingale-proof-cycle}, occurs. 

If the initial condition, $\xi_0$, satisfies one of the other eight cases, then we obtain Eq.~(4.10) if $r\ne 1$. Combination of $E[Y_{2} | \mathcal{B}_1] \ge Y_1$, which follows from Lemma~6, and Eq.~(4.10) yields $E[Y_{2} | \xi_0]>Y_0$.

Otherwise, i.e., if $\xi_0$ satisfies the first case, then we obtain $E[Y_{1} | \xi_0]=Y_0$ from Eq.~\eqref{eq:martingale-eq-cycle-case1}. In this case, we distinguish among the following three subcases.
First, if $2\le X_0 \le N-2$, then the network's state after the first state transition, $\xi_1$, satisfies one of the latter eight cases with probability 1. Therefore, we obtain $E[Y_{2} | \xi_1]>Y_1$ for $r\ne 1$, which is an adaptation of Eq.~(4.10). By combining $E[Y_{2} | \xi_1]>Y_1$ with $E[Y_{1} | \xi_0]=Y_0$, we obtain $E[Y_{2} | \xi_0]>Y_0$.
Second, if $X_0=1$, then $\xi_1$ satisfies one of the latter eight cases with probability $p(\xi_0)=r/(r+1)$.
Conditioned on this transition, we obtain $E[Y_{2} | \xi_1]>Y_1$ for $r\ne 1$. 
If a different $\xi_1$ is realized with probability $1 - p(\xi_0)$, then we obtain $E[Y_{2} | \xi_1] \ge Y_1$ owing to Lemma~6. Therefore, $E[Y_{2} | \xi_0]>Y_0$.
Third, if $X_0 = N-1$, then $\xi_1$ satisfies one of the latter eight cases with probability $q(\xi_0)=1/(r+1)$.
Conditioned on this transition, we obtain $E[Y_{2} | \xi_1]>Y_1$ for $r\ne 1$.
If a different $\xi_1$ is realized with probability $1 - q(\xi_0)$, then we obtain $E[Y_{2} | \xi_1] \ge Y_1$. Therefore, 
$E[Y_{2} | \xi_0]>Y_0$.

Because $E[Y_2 | \xi_0] > Y_0$ holds true in all the cases, we obtain
$E[Y_2 | \mathcal{B}_0] > Y_0$, which together with
$E[Y_{t+1} | \mathcal{B}_t] \ge Y_t$, $\forall t \in \{2, 3, \ldots \}$ leads to
Eq.~(4.7) with the strict inequality. Therefore, Eqs.~(4.8) and (4.9) hold true with the strict inequality when $r\neq 1$, proving that
the cycle graph layer in a two-layer network is a suppressor of selection.

\section{Proof of Lemma~8}\label{submartingale-proof-bipartite}

In this section, we show that $E[Y_{t+1} | \mathcal{B}_t] \ge Y_t$ for any $r>0$ when one layer is the complete bipartite graph.

After a time step, we obtain $\bm X_{t+1}=[X_{1,t}+1, X_{2,t}]$, $\bm X_{t+1}=[X_{1,t}, X_{2,t}+1]$, $\bm X_{t+1}=[X_{1,t}-1, X_{2,t}]$, or $\bm X_{t+1}=[X_{1,t}, X_{2,t}-1]$ because we count the time $t$ if and only if there is a change in the number of the mutants in the complete bipartite graph layer. We denote by $p_1(\xi_t)$, $p_2(\xi_t)$, $q_1(\xi_t)$, and $q_2(\xi_t)$ the probabilities with which $\bm X_{t+1}=[X_{1,t}+1, X_{2,t}]$, $\bm X_{t+1}=[X_{1,t}, X_{2,t}+1]$, $\bm X_{t+1}=[X_{1,t}-1, X_{2,t}]$, and $\bm X_{t+1}=[X_{1,t}, X_{2,t}-1]$, respectively. Note that $p_1(\xi_t)+p_2(\xi_t)+q_1(\xi_t)+q_2(\xi_t)=1$.

To calculate $p_1(\xi_t)$, $p_2(\xi_t)$, $q_1(\xi_t)$, and $q_2(\xi_t)$, we denote by $C_1$ the number of individuals in $V_1$ that have the mutant type in both layers, by $C_2$ the number of individuals in $V_1$ that have the mutant type in layer 1 and the resident type in layer 2, by $C_3$ the number of individuals in $V_1$ that have the resident type in layer 1 and the mutant type in layer 2, and by $C_4$ the number of individuals in $V_1$ that have the resident type in both layers. Note that $C_1 + C_2 + C_3 + C_4 = N_1$. Similarly, we also count the number of four different kinds of individuals in $V_2$ and denote them by $D_1$, $D_2$, $D_3$, and $D_4$, respectively. Then, $D_1 + D_2 + D_3 + D_4 = N_2$. After a time step of the original Bd process, we obtain $\bm X_{t+1}=[X_{1,t}+1, X_{2,t}]$ with probability
\begin{equation}
p_1' = \frac{2r D_1 + (r+1) D_2}{2r (C_1+D_1) + (r+1)(C_2 + C_3+D_2+D_3) + 2(C_4+D_4)} \cdot \frac{1}{2} \cdot \frac{C_3 + C_4}{N_1},
\label{eq:p1'}
\end{equation}
$\bm X_{t+1}=[X_{1,t}, X_{2,t}+1]$ with probability
\begin{equation}
p_2' = \frac{2r C_1 + (r+1) C_2}{2r (C_1+D_1) + (r+1)(C_2 + C_3+D_2+D_3) + 2(C_4+D_4)} \cdot \frac{1}{2} \cdot \frac{D_3 + D_4}{N_2},
\label{eq:p2'}
\end{equation}
$\bm X_{t+1}=[X_{1,t}-1, X_{2,t}]$ with probability
\begin{equation}
q_1' = \frac{(r+1) D_3+2D_4}{2r (C_1+D_1) + (r+1)(C_2 + C_3+D_2+D_3) + 2(C_4+D_4)} \cdot \frac{1}{2} \cdot \frac{C_1 + C_2}{N_1},
\label{eq:q1'}
\end{equation}
and $\bm X_{t+1}=[X_{1,t}, X_{2,t}-1]$ with probability
\begin{equation}
q_2' = \frac{(r+1) C_3+2C_4}{2r (C_1+D_1) + (r+1)(C_2 + C_3+D_2+D_3) + 2(C_4+D_4)} \cdot \frac{1}{2} \cdot \frac{D_1 + D_2}{N_2}.
\label{eq:q2'}
\end{equation}
By combining Eqs.~\eqref{eq:p1'},~\eqref{eq:p2'},~\eqref{eq:q1'}, and~\eqref{eq:q2'} with $p_1(\xi_t) / p_1' = p_2(\xi_t) / p_2' = q_1(\xi_t) / q_1' = q_2(\xi_t) / q_2'$ and $p_1(\xi_t)+p_2(\xi_t)+q_1(\xi_t)+q_2(\xi_t)=1$, we obtain
\begin{align}
p_1(\xi_t) =&\frac{1}{\delta N_1}\left[2rD_1+(1+r)D_2\right](C_3+C_4) ,\label{eq:p1}\\
p_2(\xi_t) =&\frac{1}{\delta N_2}\left[2rC_1+(1+r)C_2\right](D_3+D_4), \label{eq:p2}\\
q_1(\xi_t) =&\frac{1}{\delta N_1}\left[(1+r)D_3+2D_4\right](C_1+C_2) ,\label{eq:q1}\\
q_2(\xi_t) =&\frac{1}{\delta N_2}\left[(1+r)C_3+2C_4\right](D_1+D_2), \label{eq:q2} 
\end{align}
where
\begin{align}
\delta =& \frac{1}{N_1}\Bigl\{\left[2rD_1+(1+r)D_2\right](C_3+C_4)+\left[(1+r)D_3+2D_4\right](C_1+C_2)\Bigr\} \notag\\
& + \frac{1}{N_2}\Bigl\{\left[2rC_1+(1+r)C_2\right](D_3+D_4)+\left[(1+r)C_3+2C_4\right](D_1+D_2)\Bigr\}.
\end{align}
Therefore, we obtain
\begin{align}
E[Y_{t+1} | \mathcal{B}_t] &= p_1(\xi_t) h_1^{X_{1,t}+1}h_2^{X_{2,t}} + p_2(\xi_t) h_1^{X_{1,t}}h_2^{X_{2,t}+1} + q_1(\xi_t) h_1^{X_{1,t}-1}h_2^{X_{2,t}} + q_2(\xi_t) h_1^{X_{1,t}}h_2^{X_{2,t}-1} \notag\\
&=\left(p_1(\xi_t)h_1+p_2(\xi_t)h_2+\frac{q_1(\xi_t)}{h_1}+\frac{q_2(\xi_t)}{h_2}\right) Y_t \notag\\
&=\left(1+(1-r)^2(1+r)N_1N_2\Bigl[\delta N_1N_2r(N_1+N_2r)(N_2+N_1r)\Bigr]^{-1}\biggl\{(N_1+N_2r)\right. \notag\\
& \phantom{=\;\;}\left.\cdot\Bigl[rD_1C_3+(1+r)D_2C_3+D_2C_4\Bigr]+(N_2+N_1r)\Bigl[rC_1D_3+(1+r)C_2D_3+C_2D_4\Bigr]\biggr\}\right) Y_t \notag\\
&\ge Y_t.
\label{eq:yn-submartingale-bipartite}
\end{align}
Therefore, $Y_t$ is a submartingale.

\section{Proof of Eq.~(4.15)}

We distinguish among 12 cases that are different in terms of $C_1$, $C_2$, $C_3$, $C_4$, $D_1$, $D_2$, $D_3$, and $D_4$, at $t=0$. Recall that $C_i$ and $D_i$, where $i\in \{1, 2, 3, 4\}$, are defined in section~\ref{submartingale-proof-bipartite}.

To state the first case, we note that Eq.~\eqref{eq:yn-submartingale-bipartite} implies that $E[Y_{t+1} | \mathcal{B}_t]>Y_t$ for $r\ne 1$ and $1\le X_{1,0}+X_{2,0}\le N-1$ if and only if at least one of the following products, i.e., $D_1C_3$, $D_2C_3$, $D_2C_4$, $C_1D_3$, $C_2D_3$, and $C_2D_4$, is not zero. If $\xi_0$ satisfies this condition, then we obtain Eq.~(4.10) for $r\ne 1$. By combining $E[Y_{3} | \xi_2]\ge Y_2$ and $E[Y_{2} | \xi_1]\ge Y_1$, which follow from Lemma~8, and Eq.~(4.10), we obtain Eq.~(4.15).

As the second case, we assume that $C_2=C_3=C_4=D_1=D_2=D_3=0$, i.e., all the individuals in $V_1$ have the mutant type in both layers, and all the individuals in $V_2$ have the resident type in both layers, at $t=0$. Then, we obtain $E[Y_{1} | \xi_0]=Y_0$ from Eq.~\eqref{eq:yn-submartingale-bipartite}. The network's state after the first two state transitions, $\xi_2$, satisfies $(C_1,C_2,C_3,C_4,D_1,D_2,D_3,D_4)=(N_1-1,0,1,0,0,1,0,N_2-1)$, which leads to $D_2C_3\ne 0$, with probability 
\begin{equation}
p_2(\xi_0)q_1(\xi_1)+q_1(\xi_0)p_2(\xi^\prime_1)=\frac{2rN_1N_2}{(rN_1+N_2)^2},
\end{equation}
where $\xi_1$ satisfies $(C_1,C_2,C_3,C_4,D_1,D_2,D_3,D_4)=(N_1,0,0,0,0,1,0,N_2-1)$ and $\xi^\prime_1$ satisfies $(C_1,C_2,C_3,C_4,$ $D_1,D_2,D_3,D_4)=(N_1-1,0,1,0,0,0,0,N_2)$. Note that $p_1$, $p_2$, $q_1$, and $q_2$ are defined in Eqs.~\eqref{eq:p1}, \eqref{eq:p2}, \eqref{eq:q1}, and \eqref{eq:q2}, respectively. Conditioned on this transition, we obtain $E[Y_{3} | \xi_2]>Y_2$ for $r\ne 1$. If a different $\xi_2$ is realized with probability $1 - [p_2(\xi_0)q_1(\xi_1)+q_1(\xi_0)p_2(\xi^{\prime}_1)]$, then we obtain $E[Y_{3} | \xi_2] \ge Y_2$ owing to Lemma~8. By combining these results with $E[Y_{2} | \xi_1]\ge Y_1$, which follows from Lemma~8, we obtain Eq.~(4.15).

As the third case, we assume that $C_1=C_2=C_3=D_2=D_3=D_4=0$, i.e., all the individuals in $V_1$ have the resident type in both layers, and all the individuals in $V_2$ have the mutant type in both layers, at $t=0$. Then, we obtain $E[Y_{1} | \xi_0]=Y_0$ from Eq.~\eqref{eq:yn-submartingale-bipartite}. The network's state after the first two state transitions, $\xi_2$, satisfies $(C_1,C_2,C_3,C_4,D_1,D_2,D_3,D_4)=(0,1,0,N_1-1,N_2-1,0,1,0)$, which leads to $C_2D_3\ne 0$, with probability 
\begin{equation}
p_1(\xi_0)q_2(\xi_1)+q_2(\xi_0)p_1(\xi^{\prime}_1)=\frac{2rN_1N_2}{(N_1+rN_2)^2},
\end{equation}
where $\xi_1$ satisfies $(C_1,C_2,C_3,C_4,D_1,D_2,D_3,D_4)=(0,1,0,N_1-1,N_2,0,0,0)$ and $\xi^\prime_1$ satisfies $(C_1,C_2,C_3,C_4,$ $D_1,D_2,D_3,D_4)=(0,0,0,N_1,N_2-1,0,1,0)$. Conditioned on this transition, we obtain $E[Y_{3} | \xi_2]>Y_2$ for $r\ne 1$. If a different $\xi_2$ is realized with probability $1 - [p_1(\xi_0)q_2(\xi_1)+q_2(\xi_0)p_1(\xi^{\prime}_1)]$, then we obtain $E[Y_{3} | \xi_2] \ge Y_2$ owing to Lemma~8. By combining these results with $E[Y_{2} | \xi_1]\ge Y_1$, we obtain Eq.~(4.15).

As the fourth case, we assume that $C_2=C_3=C_4=D_2=D_3=0$ and $0<D_1<N_2$, i.e., all the individuals in $V_1$ have the mutant type in both layers, and each individual in $V_2$ has the same type (i.e., resident or mutant) in both layers but excluding $(D_1, D_4) = (0, N_2)$ and $(D_1, D_4) = (N_2, 0)$, at $t=0$. Then, we obtain $E[Y_{1} | \xi_0]=Y_0$ from Eq.~\eqref{eq:yn-submartingale-bipartite}. The network's state after the first state transition, $\xi_1$, satisfies $(C_1,C_2,C_3,C_4,D_1,D_2,D_3,D_4)=(N_1-1,0,1,0,D_1,0,0,N_2-D_1)$ with probability 
\begin{equation}\label{eq:q1-case4}
q_1(\xi_0)=\frac{N_2}{rN_1+N_2},
\end{equation}
which leads to $D_1C_3\ne 0$. Conditioned on this transition, we obtain $E[Y_{2} | \xi_1]>Y_1$ for $r\ne 1$. If a different $\xi_1$ is realized with probability $1 - q_1(\xi_0)$, then we obtain $E[Y_{2} | \xi_1] \ge Y_1$ owing to Lemma~8. By combining these results with $E[Y_{3} | \xi_2]\ge Y_2$, we obtain Eq.~(4.15).

As the fifth case, we assume that $C_1=C_2=C_3=D_2=D_3=0$ and $0<D_1<N_2$, i.e., all the individuals in $V_1$ have the resident type in both layers, and each individual in $V_2$ has the same type (i.e., resident or mutant) in both layers but excluding $(D_1, D_4) = (0, N_2)$ and $(D_1, D_4) = (N_2, 0)$, at $t=0$. Then, we obtain $E[Y_{1} | \xi_0]=Y_0$ from Eq.~\eqref{eq:yn-submartingale-bipartite}. The network's state after the first state transition, $\xi_1$, satisfies $(C_1,C_2,C_3,C_4,D_1,D_2,D_3,D_4)=(0,1,0,N_1-1,D_1,0,0,N_2-D_1)$ with probability 
\begin{equation}\label{eq:p1-case5}
p_1(\xi_0)=\frac{rN_2}{N_1+rN_2},
\end{equation}
which leads to $C_2D_4\ne 0$. Conditioned on this transition, we obtain $E[Y_{2} | \xi_1]>Y_1$ for $r\ne 1$. If a different $\xi_1$ is realized with probability $1 - p_1(\xi_0)$, then we obtain $E[Y_{2} | \xi_1] \ge Y_1$ owing to Lemma~8. By combining these results with $E[Y_{3} | \xi_2]\ge Y_2$, we obtain Eq.~(4.15).

As the sixth case, we assume that $C_2=C_3=D_2=D_3=D_4=0$ and $0<C_1<N_1$, i.e., each individual in $V_1$ has the same type (i.e., resident or mutant) in both layers but excluding $(C_1, C_4) = (0, N_1)$ and $(C_1, C_4) = (N_1, 0)$, and all the individuals in $V_2$ have the mutant type in both layers, at $t=0$. Then, we obtain $E[Y_{1} | \xi_0]=Y_0$ from Eq.~\eqref{eq:yn-submartingale-bipartite}. The network's state after the first state transition, $\xi_1$, satisfies $(C_1,C_2,C_3,C_4,D_1,D_2,D_3,D_4)=(C_1,0,0,N_1-C_1,N_2-1,0,1,0)$ with probability 
\begin{equation}\label{eq:q2-case6}
q_2(\xi_0)=\frac{N_1}{N_1+rN_2},
\end{equation}
which leads to $C_1D_3\ne 0$. Conditioned on this transition, we obtain $E[Y_{2} | \xi_1]>Y_1$ for $r\ne 1$. If a different $\xi_1$ is realized with probability $1 - q_2(\xi_0)$, then we obtain $E[Y_{2} | \xi_1] \ge Y_1$ owing to Lemma~8. By combining these results with $E[Y_{3} | \xi_2]\ge Y_2$, we obtain Eq.~(4.15).

As the seventh case, we assume that $C_2=C_3=D_1=D_2=D_3=0$ and $0<C_1<N_1$, i.e., each individual in $V_1$ has the same type (i.e., resident or mutant) in both layers but excluding $(C_1, C_4) = (0, N_1)$ and $(C_1, C_4) = (N_1, 0)$, and all the individuals in $V_2$ have the resident type in both layers, at $t=0$. Then, we obtain $E[Y_{1} | \xi_0]=Y_0$ from Eq.~\eqref{eq:yn-submartingale-bipartite}. The network's state after the first state transition, $\xi_1$, satisfies $(C_1,C_2,C_3,C_4,D_1,D_2,D_3,D_4)=(C_1,0,0,N_1-C_1,0,1,0,N_2-1)$ with probability 
\begin{equation}\label{eq:p2-case7}
p_2(\xi_0)=\frac{rN_1}{rN_1+N_2},
\end{equation}
which leads to $D_2C_4\ne 0$. Conditioned on this transition, we obtain $E[Y_{2} | \xi_1]>Y_1$ for $r\ne 1$. If a different $\xi_1$ is realized with probability $1 - p_2(\xi_0)$, then we obtain $E[Y_{2} | \xi_1] \ge Y_1$ owing to Lemma~8. By combining these results with $E[Y_{3} | \xi_2]\ge Y_2$, we obtain Eq.~(4.15).

As the eighth case, we assume that $C_2=C_3=D_2=D_3=0$, $0<C_1<N_1$, and $0<D_1<N_2$, i.e., each individual in $V_1$ and $V_2$ has the same type (i.e., resident or mutant) in both layers but excluding $(C_1, C_4, D_1, D_4) = (0, N_1, 0, N_2)$, $(C_1, C_4, D_1, D_4) = (N_1, 0, 0, N_2)$, $(C_1, C_4, D_1, D_4) = (0, N_1, N_2, 0)$, and $(C_1, C_4, D_1, D_4) = (N_1, 0, N_2, 0)$, at $t=0$. Then, we obtain $E[Y_{1} | \xi_0]=Y_0$ from Eq.~\eqref{eq:yn-submartingale-bipartite}. The network's state after the first state transition, $\xi_1$, satisfies $(C_1,C_2,C_3,C_4,D_1,D_2,D_3,D_4)=(C_1,0,0,N_1-C_1,D_1-1,0,1,N_2-D_1)$ with probability 
\begin{equation}
q_2(\xi_0)=\frac{N_1D_1(N_1-C_1)}{rN_1N_2(C_1+D_1)-(r+1)(N_1+N_2)C_1D_1+N^2_1D_1+N^2_2C_1},
\end{equation}
which leads to $C_1D_3\ne 0$. Conditioned on this transition, we obtain $E[Y_{2} | \xi_1]>Y_1$ for $r\ne 1$. If a different $\xi_1$ is realized with probability $1 - q_2(\xi_0)$, then we obtain $E[Y_{2} | \xi_1] \ge Y_1$ owing to Lemma~8. By combining these results with $E[Y_{3} | \xi_2]\ge Y_2$, we obtain Eq.~(4.15).

As the ninth case, we assume that $C_1=C_2=C_3=D_2=0$, $D_1>0$, $D_3>0$, and $D_1+D_3\le N_2$, i.e., all the individuals in $V_1$ have the resident type in both layers, and no individual in $V_2$ simultaneously has the mutant type in layer 1 and the resident type in layer 2 but excluding $D_1D_3=0$, at $t=0$. Then, we obtain $E[Y_{1} | \xi_0]=Y_0$ from Eq.~\eqref{eq:yn-submartingale-bipartite}. The network's state after the first state transition, $\xi_1$, satisfies $(C_1,C_2,C_3,C_4,D_1,D_2,D_3,D_4)=(0,1,0,N_1-1,D_1,0,D_3,N_2-D_1-D_3)$ with probability 
$p_1(\xi_0)$ given by Eq.~\eqref{eq:p1-case5},
which leads to $C_2D_3\ne 0$. Conditioned on this transition, we obtain $E[Y_{2} | \xi_1]>Y_1$ for $r\ne 1$. If a different $\xi_1$ is realized with probability $1 - p_1(\xi_0)$, then we obtain $E[Y_{2} | \xi_1] \ge Y_1$ owing to Lemma~8. By combining these results with $E[Y_{3} | \xi_2]\ge Y_2$, we obtain Eq.~(4.15).

As the tenth case, we assume that $C_2=C_3=C_4=D_3=0$, $D_2>0$, $D_4>0$, and $D_2+D_4\le N_2$, i.e., all the individuals in $V_1$ have the mutant type in both layers, and no individual in $V_2$ simultaneously has the resident type in layer 1 and the mutant type in layer 2 but excluding $D_2D_4=0$, at $t=0$. Then, we obtain $E[Y_{1} | \xi_0]=Y_0$ from Eq.~\eqref{eq:yn-submartingale-bipartite}. The network's state after the first state transition, $\xi_1$, satisfies $(C_1,C_2,C_3,C_4,D_1,D_2,D_3,D_4)=(N_1-1,0,1,0,N_2-D_2-D_4,D_2,0,D_4)$ with probability 
$q_1(\xi_0)$ given by Eq.~\eqref{eq:q1-case4},
which leads to $D_2C_3\ne 0$. Conditioned on this transition, we obtain $E[Y_{2} | \xi_1]>Y_1$ for $r\ne 1$. If a different $\xi_1$ is realized with probability $1 - q_1(\xi_0)$, then we obtain $E[Y_{2} | \xi_1] \ge Y_1$ owing to Lemma~8. By combining these results with $E[Y_{3} | \xi_2]\ge Y_2$, we obtain Eq.~(4.15).

As the eleventh case, we assume that $C_2=D_1=D_2=D_3=0$, $C_1>0$, $C_3>0$, and $C_1+C_3\le N_1$, i.e., no individual in $V_1$ simultaneously has the mutant type in layer 1 and the resident type in layer 2 but excluding $C_1C_3=0$, and all the individuals in $V_2$ have the resident type in both layers, at $t=0$. Then, we obtain $E[Y_{1} | \xi_0]=Y_0$ from Eq.~\eqref{eq:yn-submartingale-bipartite}. The network's state after the first state transition, $\xi_1$, satisfies $(C_1,C_2,C_3,C_4,D_1,D_2,D_3,D_4)=(C_1,0,C_3,N_1-C_1-C_3,0,1,0,N_2-1)$ with probability 
$p_2(\xi_0)$ given by Eq.~\eqref{eq:p2-case7},
which leads to $D_2C_3\ne 0$. Conditioned on this transition, we obtain $E[Y_{2} | \xi_1]>Y_1$ for $r\ne 1$. If a different $\xi_1$ is realized with probability $1 - p_2(\xi_0)$, then we obtain $E[Y_{2} | \xi_1] \ge Y_1$ owing to Lemma~8. By combining these results with $E[Y_{3} | \xi_2]\ge Y_2$, we obtain Eq.~(4.15).

As the twelfth case, we assume that $C_3=D_2=D_3=D_4=0$, $C_2>0$, $C_4>0$, and $C_2+C_4\le N_1$, i.e., no individual in $V_1$ simultaneously has the resident type in layer 1 and the mutant type in layer 2 but excluding $C_2C_4=0$, and all the individuals in $V_2$ have the mutant type in both layers, at $t=0$. Then, we obtain $E[Y_{1} | \xi_0]=Y_0$ from Eq.~\eqref{eq:yn-submartingale-bipartite}. The network's state after the first state transition, $\xi_1$, satisfies $(C_1,C_2,C_3,C_4,D_1,D_2,D_3,D_4)=(N_1-C_2-C_4,C_2,0,C_4,N_2-1,0,1,0)$ with probability 
$q_2(\xi_0)$ given by Eq.~\eqref{eq:q2-case6},
which leads to $C_2D_3\ne 0$. Conditioned on this transition, we obtain $E[Y_{2} | \xi_1]>Y_1$ for $r\ne 1$. If a different $\xi_1$ is realized with probability $1 - q_2(\xi_0)$, then we obtain $E[Y_{2} | \xi_1] \ge Y_1$ owing to Lemma~8. By combining these results with $E[Y_{3} | \xi_2]\ge Y_2$, we obtain Eq.~(4.15).

\begin{rmk}
All lemmas and theorems in the main text also hold true for model 2 with the proof being essentially unchanged. The only difference is that, for model 2, we do not multiply $1/2$ to $p'$s and $q'$s (e.g., Eqs.~\eqref{eq:p1'},~\eqref{eq:p2'},~\eqref{eq:q1'}, and~\eqref{eq:q2'}) because both layers are selected for reproduction in a single step in model 2.
\end{rmk}

\section{Derivation of the transition probability matrix for model 1 on two-layer networks composed of a complete graph layer and a star graph layer\label{complete-star-m1-derivation}}

Assume that the current state is $(1,1,i_1,i_2,i_3,i_4)$. There are nine types of events that can occur next.

In the first type of event, an individual that has the mutant type in layer 1 is selected as the parent, which occurs with probability $[2r(i_1+1)+(1+r)i_2]/[2r(i_1+1)+(1+r)(i_2+i_3)+2i_4]$, and the layer 1 is selected for reproduction with probability $1/2$. Then, we select a neighbor of the parent in layer 1 as the child, and the child has the resident type in layer 1 and the mutant type in layer 2, which occurs with probability $i_3/(N-1)$. Then, the child copies the parent's type in layer 1. The state after this event is $(1,1,i_1+1,i_2,i_3-1,i_4)$. Therefore, we obtain
\begin{equation}
p_{(1,1,i_1,i_2,i_3,i_4) \to (1,1,i_1+1,i_2,i_3-1,i_4)} = \frac{2r(i_1+1)+(1+r)i_2}{2r(i_1+1)+(1+r)(i_2+i_3)+2i_4}\cdot\frac{1}{2}\cdot\frac{i_3}{N-1}.
\label{eq:p1-complete-star-state1}
\end{equation}
The first row of Table~\ref{table-complete-star-state1} (except the header rows) represents this state transition event and Eq.~\eqref{eq:p1-complete-star-state1}.
The remaining seven types of events are listed in Table~\ref{table-complete-star-state1}. If any other event than the eight types shown in Table~\ref{table-complete-star-state1} occurs, the state remains unchanged. The probability of this case is 1 minus  the sum of all entries in the last column of Table~\ref{table-complete-star-state1}. 

\begin{table}[H]
\hspace*{-3em}
\begin{tabular}{ | w{c}{1.1em} | w{c}{1.1em} | w{c}{1em} | c | w{c}{1.2em} | w{c}{1em} | w{c}{1em} | w{c}{1em} | w{c}{1.3em} | c | M{10em} |} 
  \hline
  \multicolumn{4}{|c|}{parent} & \multirow{3}{*}{\shortstack{sele-\\cted\\ layer}} & \multicolumn{4}{c|}{child} & \multirow{3}{*}{state after transition} & \multirow{3}{*}{transition probability}\\
  \cline{1-4}
  \cline{6-9}
   \multicolumn{2}{|c|}{type} & \multirow{2}{*}{h/l} & \multirow{2}{*}{probability} & & \multicolumn{2}{c|}{type} & \multirow{2}{*}{h/l} & \multirow{2}{*}{prob.} & & \\
   \cline{1-2}
   \cline{6-7}      
    $\ell_1$ & $\ell_2$ &  &  &  & $\ell_1$ & $\ell_2$ &  &  &  & \\
  \hline
  \rule{0pt}{20pt}m & m/r & h/l & $\frac{2r(i_1+1)+(1+r)i_2}{2r(i_1+1)+(1+r)(i_2+i_3)+2i_4}$ & $\ell_1$ & r & m & l & $\frac{i_3}{N-1}$ & $(1,1,i_1+1,i_2,i_3-1,i_4)$ & $\frac{2r(i_1+1)+(1+r)i_2}{2r(i_1+1)+(1+r)(i_2+i_3)+2i_4}\cdot\frac{1}{2}\cdot\frac{i_3}{N-1}$  \\[2.5ex]
  \hline
  \rule{0pt}{20pt}m & m/r & h/l & $\frac{2r(i_1+1)+(1+r)i_2}{2r(i_1+1)+(1+r)(i_2+i_3)+2i_4}$ & $\ell_1$ & r & r & l & $\frac{i_4}{N-1}$ & $(1,1,i_1,i_2+1,i_3,i_4-1)$ & $\frac{2r(i_1+1)+(1+r)i_2}{2r(i_1+1)+(1+r)(i_2+i_3)+2i_4}\cdot\frac{1}{2}\cdot\frac{i_4}{N-1}$  \\[2.5ex]
  \hline
  \rule{0pt}{20pt}r & m/r & l & $\frac{(1+r)i_3+2i_4}{2r(i_1+1)+(1+r)(i_2+i_3)+2i_4}$ & $\ell_1$ & m & m & h & $\frac{1}{N-1}$ & $(0,1,i_1,i_2,i_3,i_4)$ & $\frac{(1+r)i_3+2i_4}{2r(i_1+1)+(1+r)(i_2+i_3)+2i_4}\cdot\frac{1}{2}\cdot\frac{1}{N-1}$ \\[2.5ex] 
  \hline
  \rule{0pt}{20pt}r & m/r & l & $\frac{(1+r)i_3+2i_4}{2r(i_1+1)+(1+r)(i_2+i_3)+2i_4}$ & $\ell_1$ & m & m & l & $\frac{i_1}{N-1}$ & $(1,1,i_1-1,i_2,i_3+1,i_4)$ & $\frac{(1+r)i_3+2i_4}{2r(i_1+1)+(1+r)(i_2+i_3)+2i_4}\cdot\frac{1}{2}\cdot\frac{i_1}{N-1}$ \\[2.5ex] 
  \hline
  \rule{0pt}{20pt}r & m/r & l & $\frac{(1+r)i_3+2i_4}{2r(i_1+1)+(1+r)(i_2+i_3)+2i_4}$ & $\ell_1$ & m & r & l & $\frac{i_2}{N-1}$ & $(1,1,i_1,i_2-1,i_3,i_4+1)$ & $\frac{(1+r)i_3+2i_4}{2r(i_1+1)+(1+r)(i_2+i_3)+2i_4}\cdot\frac{1}{2}\cdot\frac{i_2}{N-1}$ \\[2.5ex] 
  \hline
  \rule{0pt}{20pt}m & m & h & $\frac{2r}{2r(i_1+1)+(1+r)(i_2+i_3)+2i_4}$ & $\ell_2$ & m & r & l & $\frac{i_2}{N-1}$ & $(1,1,i_1+1,i_2-1,i_3,i_4)$ & $\frac{2r}{2r(i_1+1)+(1+r)(i_2+i_3)+2i_4}\cdot\frac{1}{2}\cdot\frac{i_2}{N-1}$ \\[2.5ex] 
  \hline
  \rule{0pt}{20pt}m & m & h & $\frac{2r}{2r(i_1+1)+(1+r)(i_2+i_3)+2i_4}$ & $\ell_2$ & r & r & l & $\frac{i_4}{N-1}$ & $(1,1,i_1,i_2,i_3+1,i_4-1)$ & $\frac{2r}{2r(i_1+1)+(1+r)(i_2+i_3)+2i_4}\cdot\frac{1}{2}\cdot\frac{i_4}{N-1}$ \\[2.5ex] 
  \hline
  \rule{0pt}{20pt}m/r & r & l & $\frac{(1+r)i_2+2i_4}{2r(i_1+1)+(1+r)(i_2+i_3)+2i_4}$ & $\ell_2$ & m & m & h & $1$ & $(1,0,i_1,i_2,i_3,i_4)$ & $\frac{(1+r)i_2+2i_4}{2r(i_1+1)+(1+r)(i_2+i_3)+2i_4}\cdot\frac{1}{2}$ \\[2.5ex] 
  \hline
\end{tabular}
\captionof{table}{\label{table-complete-star-state1}Eight types of state transition from state $(1,1,i_1,i_2,i_3,i_4)$ and their probabilities under model 1 on the two-layer network composed of a complete graph layer and a star graph layer.
``prob.'' abbreviates probability; ``m'' and ``r'' abbreviate the mutant and resident, respectively; ``h'' and ``l'' abbreviate the hub and leaf nodes, respectively; $\ell_1$ and $\ell_2$ represent layers 1 and 2, respectively.
}
\end{table}

\newpage
Assume that the current state is $(1,0,i_1,i_2,i_3,i_4)$. There are nine types of events that can occur next.
In the first type of event, an individual that has the mutant type in layer 1 is selected as the parent, which occurs with probability $[2ri_1+(1+r)(i_2+1)]/[2ri_1+(1+r)(i_2+i_3+1)+2i_4]$, and the layer 1 is selected for reproduction with probability $1/2$. Then, we select a neighbor of the parent in layer 1 as the child, and the child has the resident type in layer 1 and the mutant type in layer 2, which occurs with probability $i_3/(N-1)$. The state after this event is $(1,0,i_1+1,i_2,i_3-1,i_4)$. Therefore, we obtain
\begin{equation}
p_{(1,0,i_1,i_2,i_3,i_4) \to (1,0,i_1+1,i_2,i_3-1,i_4)} = \frac{2ri_1+(1+r)(i_2+1)}{2ri_1+(1+r)(i_2+i_3+1)+2i_4}\cdot\frac{1}{2}\cdot\frac{i_3}{N-1}.
\label{eq:p1-complete-star-state2}
\end{equation}
The first row of Table~\ref{table-complete-star-state2} (except the header rows) represents this state transition event and Eq.~\eqref{eq:p1-complete-star-state2}.
The remaining seven rows of the table represent the other types of state transition. If any other event than the eight types shown in Table~\ref{table-complete-star-state2} occurs, the state remains unchanged. The probability of this case is 1 minus  the sum of all entries in the last column of Table~\ref{table-complete-star-state2}. 

\begin{table}[H]
\hspace*{-3em}
\begin{tabular}{ | w{c}{1.1em} | w{c}{1.1em} | w{c}{1em} | c | w{c}{1.2em} | w{c}{1em} | w{c}{1em} | w{c}{1em} | w{c}{1.3em} | c | M{10em} |}
  \hline
  \multicolumn{4}{|c|}{parent} & \multirow{3}{*}{\shortstack{sele-\\cted\\ layer}} & \multicolumn{4}{c|}{child} & \multirow{3}{*}{state after transition} & \multirow{3}{*}{transition probability}\\
  \cline{1-4}
  \cline{6-9}
    \multicolumn{2}{|c|}{type} & \multirow{2}{*}{h/l} & \multirow{2}{*}{probability} & & \multicolumn{2}{c|}{type} & \multirow{2}{*}{h/l} & \multirow{2}{*}{prob.} & & \\
   \cline{1-2}
   \cline{6-7}      
    $\ell_1$ & $\ell_2$ &  &  &  & $\ell_1$ & $\ell_2$ &  &  &  & \\
  \hline
  \rule{0pt}{20pt}m & m/r & h/l & $\frac{2ri_1+(1+r)(i_2+1)}{2ri_1+(1+r)(i_2+i_3+1)+2i_4}$ & $\ell_1$ & r & m & l & $\frac{i_3}{N-1}$ & $(1,0,i_1+1,i_2,i_3-1,i_4)$ & $\frac{2ri_1+(1+r)(i_2+1)}{2ri_1+(1+r)(i_2+i_3+1)+2i_4}\cdot\frac{1}{2}\cdot\frac{i_3}{N-1}$  \\[2.5ex] 
  \hline
  \rule{0pt}{20pt}m & m/r & h/l & $\frac{2ri_1+(1+r)(i_2+1)}{2ri_1+(1+r)(i_2+i_3+1)+2i_4}$ & $\ell_1$ & r & r & l & $\frac{i_4}{N-1}$ & $(1,0,i_1,i_2+1,i_3,i_4-1)$ & $\frac{2ri_1+(1+r)(i_2+1)}{2ri_1+(1+r)(i_2+i_3+1)+2i_4}\cdot\frac{1}{2}\cdot\frac{i_4}{N-1}$  \\[2.5ex] 
  \hline
  \rule{0pt}{20pt}r & m/r & l & $\frac{(1+r)i_3+2i_4}{2ri_1+(1+r)(i_2+i_3+1)+2i_4}$ & $\ell_1$ & m & r & h & $\frac{1}{N-1}$ & $(0,0,i_1,i_2,i_3,i_4)$ & $\frac{(1+r)i_3+2i_4}{2ri_1+(1+r)(i_2+i_3+1)+2i_4}\cdot\frac{1}{2}\cdot\frac{1}{N-1}$ \\[2.5ex] 
  \hline
  \rule{0pt}{20pt}r & m/r & l & $\frac{(1+r)i_3+2i_4}{2ri_1+(1+r)(i_2+i_3+1)+2i_4}$ & $\ell_1$ & m & m & l & $\frac{i_1}{N-1}$ & $(1,0,i_1-1,i_2,i_3+1,i_4)$ & $\frac{(1+r)i_3+2i_4}{2ri_1+(1+r)(i_2+i_3+1)+2i_4}\cdot\frac{1}{2}\cdot\frac{i_1}{N-1}$ \\[2.5ex] 
  \hline
  \rule{0pt}{20pt}r & m/r & l & $\frac{(1+r)i_3+2i_4}{2ri_1+(1+r)(i_2+i_3+1)+2i_4}$ & $\ell_1$ & m & r & l & $\frac{i_2}{N-1}$ & $(1,0,i_1,i_2-1,i_3,i_4+1)$ & $\frac{(1+r)i_3+2i_4}{2ri_1+(1+r)(i_2+i_3+1)+2i_4}\cdot\frac{1}{2}\cdot\frac{i_2}{N-1}$ \\[2.5ex] 
  \hline
  \rule{0pt}{20pt}m/r & m & l & $\frac{2ri_1+(1+r)i_3}{2ri_1+(1+r)(i_2+i_3+1)+2i_4}$ & $\ell_2$ & m & r & h & $1$ & $(1,1,i_1,i_2,i_3,i_4)$ & $\frac{2ri_1+(1+r)i_3}{2ri_1+(1+r)(i_2+i_3+1)+2i_4}\cdot\frac{1}{2}$ \\[2.5ex] 
  \hline
  \rule{0pt}{20pt}m & r & h & $\frac{1+r}{2ri_1+(1+r)(i_2+i_3+1)+2i_4}$ & $\ell_2$ & r & m & l & $\frac{i_3}{N-1}$ & $(1,0,i_1,i_2,i_3-1,i_4+1)$ & $\frac{1+r}{2ri_1+(1+r)(i_2+i_3+1)+2i_4}\cdot\frac{1}{2}\cdot\frac{i_3}{N-1}$ \\[2.5ex] 
  \hline
  \rule{0pt}{20pt}m & r & h & $\frac{1+r}{2ri_1+(1+r)(i_2+i_3+1)+2i_4}$ & $\ell_2$ & m & m & l & $\frac{i_1}{N-1}$ & $(1,0,i_1-1,i_2+1,i_3,i_4)$ & $\frac{1+r}{2ri_1+(1+r)(i_2+i_3+1)+2i_4}\cdot\frac{1}{2}\cdot\frac{i_1}{N-1}$ \\[2.5ex] 
  \hline
\end{tabular}
\captionof{table}{\label{table-complete-star-state2}Eight types of state transition from state $(1,0,i_1,i_2,i_3,i_4)$ under model 1 on the two-layer network composed of a complete graph layer and a star graph layer.}
\end{table}
\newpage
Assume that the current state is $(0,1,i_1,i_2,i_3,i_4)$. There are nine types of events that can occur next.
%
%
Table~\ref{table-complete-star-state3} shows the eight types of events that accompany a state transition and their probabilities. If any other event than these eight types occurs, the state remains unchanged. The probability of this case is 1 minus the sum of all entries in the last column of Table~\ref{table-complete-star-state3}. 

\begin{table}[H]
\hspace*{-3em}
\begin{tabular}{ | w{c}{1.1em} | w{c}{1.1em} | w{c}{1em} | c | w{c}{1.2em} | w{c}{1em} | w{c}{1em} | w{c}{1em} | w{c}{1.3em} | c | M{10em} |}
  \hline
  \multicolumn{4}{|c|}{parent} & \multirow{3}{*}{\shortstack{sele-\\cted\\ layer}} & \multicolumn{4}{c|}{child} & \multirow{3}{*}{state after transition} & \multirow{3}{*}{transition probability}\\
  \cline{1-4}
  \cline{6-9}
    \multicolumn{2}{|c|}{type} & \multirow{2}{*}{h/l} & \multirow{2}{*}{probability} & & \multicolumn{2}{c|}{type} & \multirow{2}{*}{h/l} & \multirow{2}{*}{prob.} & & \\
   \cline{1-2}
   \cline{6-7}      
    $\ell_1$ & $\ell_2$ &  &  &  & $\ell_1$ & $\ell_2$ &  &  &  & \\
  \hline
  \rule{0pt}{20pt}m & m/r & l & $\frac{2ri_1+(1+r)i_2}{2ri_1+(1+r)(i_2+i_3+1)+2i_4}$ & $\ell_1$ & r & m & h & $\frac{1}{N-1}$ & $(1,1,i_1,i_2,i_3,i_4)$ & $\frac{2ri_1+(1+r)i_2}{2ri_1+(1+r)(i_2+i_3+1)+2i_4}\cdot\frac{1}{2}\cdot\frac{1}{N-1}$  \\[2.5ex] 
  \hline
  \rule{0pt}{20pt}m & m/r & l & $\frac{2ri_1+(1+r)i_2}{2ri_1+(1+r)(i_2+i_3+1)+2i_4}$ & $\ell_1$ & r & m & l & $\frac{i_3}{N-1}$ & $(0,1,i_1+1,i_2,i_3-1,i_4)$ & $\frac{2ri_1+(1+r)i_2}{2ri_1+(1+r)(i_2+i_3+1)+2i_4}\cdot\frac{1}{2}\cdot\frac{i_3}{N-1}$  \\[2.5ex] 
  \hline
  \rule{0pt}{20pt}m & m/r & l & $\frac{2ri_1+(1+r)i_2}{2ri_1+(1+r)(i_2+i_3+1)+2i_4}$ & $\ell_1$ & r & r & l & $\frac{i_4}{N-1}$ & $(0,1,i_1,i_2+1,i_3,i_4-1)$ & $\frac{2ri_1+(1+r)i_2}{2ri_1+(1+r)(i_2+i_3+1)+2i_4}\cdot\frac{1}{2}\cdot\frac{i_4}{N-1}$ \\[2.5ex] 
  \hline
  \rule{0pt}{20pt}r & m/r & h/l & $\frac{(1+r)(i_3+1)+2i_4}{2ri_1+(1+r)(i_2+i_3+1)+2i_4}$ & $\ell_1$ & m & m & l & $\frac{i_1}{N-1}$ & $(0,1,i_1-1,i_2,i_3+1,i_4)$ & $\frac{(1+r)(i_3+1)+2i_4}{2ri_1+(1+r)(i_2+i_3+1)+2i_4}\cdot\frac{1}{2}\cdot\frac{i_1}{N-1}$ \\[2.5ex] 
  \hline
  \rule{0pt}{20pt}r & m/r & h/l & $\frac{(1+r)(i_3+1)+2i_4}{2ri_1+(1+r)(i_2+i_3+1)+2i_4}$ & $\ell_1$ & m & r & l & $\frac{i_2}{N-1}$ & $(0,1,i_1,i_2-1,i_3,i_4+1)$ & $\frac{(1+r)(i_3+1)+2i_4}{2ri_1+(1+r)(i_2+i_3+1)+2i_4}\cdot\frac{1}{2}\cdot\frac{i_2}{N-1}$ \\[2.5ex] 
  \hline
  \rule{0pt}{20pt}r & m & h & $\frac{1+r}{2ri_1+(1+r)(i_2+i_3+1)+2i_4}$ & $\ell_2$ & m & r & l & $\frac{i_2}{N-1}$ & $(0,1,i_1+1,i_2-1,i_3,i_4)$ & $\frac{1+r}{2ri_1+(1+r)(i_2+i_3+1)+2i_4}\cdot\frac{1}{2}\cdot\frac{i_2}{N-1}$ \\[2.5ex] 
  \hline
  \rule{0pt}{20pt}r & m & h & $\frac{1+r}{2ri_1+(1+r)(i_2+i_3+1)+2i_4}$ & $\ell_2$ & r & r & l & $\frac{i_4}{N-1}$ & $(0,1,i_1,i_2,i_3+1,i_4-1)$ & $\frac{1+r}{2ri_1+(1+r)(i_2+i_3+1)+2i_4}\cdot\frac{1}{2}\cdot\frac{i_4}{N-1}$ \\[2.5ex] 
  \hline
  \rule{0pt}{20pt}m/r & r & l & $\frac{(1+r)i_2+2i_4}{2ri_1+(1+r)(i_2+i_3+1)+2i_4}$ & $\ell_2$ & r & m & h & $1$ & $(0,0,i_1,i_2,i_3,i_4)$ & $\frac{(1+r)i_2+2i_4}{2ri_1+(1+r)(i_2+i_3+1)+2i_4}\cdot\frac{1}{2}$ \\[2.5ex] 
  \hline
\end{tabular}
\captionof{table}{\label{table-complete-star-state3}Eight types of state transition from state $(0,1,i_1,i_2,i_3,i_4)$ under model 1 on the two-layer network composed of a complete graph layer and a star graph layer.}
\end{table}
\newpage
Assume that the current state is $(0,0,i_1,i_2,i_3,i_4)$. There are nine types of events that can occur next.
%
%
Table~\ref{table-complete-star-state4} shows the eight types of events that accompany a state transition and their probabilities. If any other event than these eight types occurs, the state remains unchanged. The probability of this case is 
1 minus  the sum of all entries in the last column of Table~\ref{table-complete-star-state4}.  

\begin{table}[H]
\hspace*{-3em}
\begin{tabular}{ | w{c}{1.1em} | w{c}{1.1em} | w{c}{1em} | c | w{c}{1.2em} | w{c}{1em} | w{c}{1em} | w{c}{1em} | w{c}{1.3em} | c | M{10em} |} 
  \hline
  \multicolumn{4}{|c|}{parent} & \multirow{3}{*}{\shortstack{sele-\\cted\\ layer}} & \multicolumn{4}{c|}{child} & \multirow{3}{*}{state after transition} & \multirow{3}{*}{transition probability}\\
  \cline{1-4}
  \cline{6-9}
    \multicolumn{2}{|c|}{type} & \multirow{2}{*}{h/l} & \multirow{2}{*}{probability} & & \multicolumn{2}{c|}{type} & \multirow{2}{*}{h/l} & \multirow{2}{*}{prob.} & & \\
   \cline{1-2}
   \cline{6-7}      
    $\ell_1$ & $\ell_2$ &  &  &  & $\ell_1$ & $\ell_2$ &  &  &  & \\
  \hline
  \rule{0pt}{20pt}m & m/r & l & $\frac{2ri_1+(1+r)i_2}{2ri_1+(1+r)(i_2+i_3)+2(i_4+1)}$ & $\ell_1$ & r & r & h & $\frac{1}{N-1}$ & $(1,0,i_1,i_2,i_3,i_4)$ & $\frac{2ri_1+(1+r)i_2}{2ri_1+(1+r)(i_2+i_3)+2(i_4+1)}\cdot\frac{1}{2}\cdot\frac{1}{N-1}$  \\[2.5ex] 
  \hline
  \rule{0pt}{20pt}m & m/r & l & $\frac{2ri_1+(1+r)i_2}{2ri_1+(1+r)(i_2+i_3)+2(i_4+1)}$ & $\ell_1$ & r & m & 1 & $\frac{i_3}{N-1}$ & $(0,0,i_1+1,i_2,i_3-1,i_4)$ & $\frac{2ri_1+(1+r)i_2}{2ri_1+(1+r)(i_2+i_3)+2(i_4+1)}\cdot\frac{1}{2}\cdot\frac{i_3}{N-1}$  \\[2.5ex] 
  \hline
  \rule{0pt}{20pt}m & m/r & l & $\frac{2ri_1+(1+r)i_2}{2ri_1+(1+r)(i_2+i_3)+2(i_4+1)}$ & $\ell_1$ & r & r & l & $\frac{i_4}{N-1}$ & $(0,0,i_1,i_2+1,i_3,i_4-1)$ & $\frac{2ri_1+(1+r)i_2}{2ri_1+(1+r)(i_2+i_3)+2(i_4+1)}\cdot\frac{1}{2}\cdot\frac{i_4}{N-1}$ \\[2.5ex] 
  \hline
  \rule{0pt}{20pt}r & m/r & h/l & $\frac{(1+r)i_3+2(i_4+1)}{2ri_1+(1+r)(i_2+i_3)+2(i_4+1)}$ & $\ell_1$ & m & m & l & $\frac{i_1}{N-1}$ & $(0,0,i_1-1,i_2,i_3+1,i_4)$ & $\frac{(1+r)i_3+2(i_4+1)}{2ri_1+(1+r)(i_2+i_3)+2(i_4+1)}\cdot\frac{1}{2}\cdot\frac{i_1}{N-1}$ \\[2.5ex] 
  \hline
  \rule{0pt}{20pt}r & m/r & h/l & $\frac{(1+r)i_3+2(i_4+1)}{2ri_1+(1+r)(i_2+i_3)+2(i_4+1)}$ & $\ell_1$ & m & r & l & $\frac{i_2}{N-1}$ & $(0,0,i_1,i_2-1,i_3,i_4+1)$ & $\frac{(1+r)i_3+2(i_4+1)}{2ri_1+(1+r)(i_2+i_3)+2(i_4+1)}\cdot\frac{1}{2}\cdot\frac{i_2}{N-1}$ \\[2.5ex] 
  \hline
  \rule{0pt}{20pt}m/r & m & l & $\frac{2ri_1+(1+r)i_3}{2ri_1+(1+r)(i_2+i_3)+2(i_4+1)}$ & $\ell_2$ & r & r & h & $1$ & $(0,1,i_1,i_2,i_3,i_4)$ & $\frac{2ri_1+(1+r)i_3}{2ri_1+(1+r)(i_2+i_3)+2(i_4+1)}\cdot\frac{1}{2}$ \\[2.5ex] 
  \hline
  \rule{0pt}{20pt}r & r & h & $\frac{2}{2ri_1+(1+r)(i_2+i_3)+2(i_4+1)}$ & $\ell_2$ & m & m & l & $\frac{i_1}{N-1}$ & $(0,0,i_1-1,i_2+1,i_3,i_4)$ & $\frac{2}{2ri_1+(1+r)(i_2+i_3)+2(i_4+1)}\cdot\frac{1}{2}\cdot\frac{i_1}{N-1}$ \\[2.5ex] 
  \hline
  \rule{0pt}{20pt}r & r & h & $\frac{2}{2ri_1+(1+r)(i_2+i_3)+2(i_4+1)}$ & $\ell_2$ & r & m & l & $\frac{i_3}{N-1}$ & $(0,0,i_1,i_2,i_3-1,i_4+1)$ & $\frac{2}{2ri_1+(1+r)(i_2+i_3)+2(i_4+1)}\cdot\frac{1}{2}\cdot\frac{i_3}{N-1}$ \\[2.5ex] 
  \hline
\end{tabular}
\captionof{table}{\label{table-complete-star-state4}Eight types of state transition from state $(0,0,i_1,i_2,i_3,i_4)$ under model 1 on the two-layer network composed of a complete graph layer and a star graph layer.}
\end{table}

\newpage

\section{Derivation of the transition probability matrix for model 1 on the coupled star graph\label{star-star-m1-derivation}}

Assume that the current state is $(1,1,1,1,i_1,i_2,i_3,i_4)$. There are seven types of events that can occur next.
In the first type of event, the individual that is the hub node in layer 1 is selected as the parent, which occurs with probability $2r/[2r(i_1+2)+(1+r)(i_2+i_3)+2i_4]$, and the layer 1 is selected with probability $1/2$. Then, we select a neighbor of the parent in layer 1 as the child, and the child that is a leaf node in both layers has the resident type in layer 1 and the mutant type in layer 2, which occurs with probability $i_3/(N-1)$. The state after this event is $(1,1,1,1,i_1+1,i_2,i_3-1,i_4)$. Therefore, we obtain
\begin{equation}
p_{(1,1,1,1,i_1,i_2,i_3,i_4) \to (1,1,1,1,i_1+1,i_2,i_3-1,i_4)} = \frac{2r}{2r(i_1+2)+(1+r)(i_2+i_3)+2i_4}\cdot\frac{1}{2}\cdot\frac{i_3}{N-1}.
\label{eq:p1-star-star-state1}
\end{equation}
The first row of Table~\ref{table-star-star-state1} (except the header rows) represents this state transition event and Eq.~\eqref{eq:p1-star-star-state1}.
The remaining five types of events are listed in Table~\ref{table-star-star-state1}. If any other event than the six types shown in Table~\ref{table-star-star-state1} occurs, the state remains unchanged. The probability of this case is 1 minus the sum of all entries in the last column of Table~\ref{table-star-star-state1}.  

\begin{table}[H]
\hspace*{-2.5em}
\begin{tabular}{ | w{c}{1.1em} | w{c}{1.1em} | M{2.5em} | c | w{c}{1.1em} | w{c}{1.1em} | w{c}{1.1em} | M{2.5em} | w{c}{1.3em} | M{7em} | M{10em} |} 
  \hline
  \multicolumn{4}{|c|}{parent} & \multirow{3}{*}{layer} & \multicolumn{4}{c|}{child} & \multirow{3}{*}{\shortstack{state after\\ transition}} & \multirow{3}{*}{transition probability}\\
  \cline{1-4}
  \cline{6-9}
    \multicolumn{2}{|c|}{type} & \multirow{2}{*}{h/l} & \multirow{2}{*}{probability} & & \multicolumn{2}{c|}{type} & \multirow{2}{*}{h/l} & \multirow{2}{*}{prob.} & & \\
   \cline{1-2}
   \cline{6-7}      
    $\ell_1$ & $\ell_2$ &  &  &  & $\ell_1$ & $\ell_2$ &  &  &  & \\
  \hline
  \rule{0pt}{20pt}m & m & h in $\ell_1$ & $\frac{2r}{2r(i_1+2)+(1+r)(i_2+i_3)+2i_4}$ & $\ell_1$ & r & m & l in $\ell_1$, $\ell_2$ & $\frac{i_3}{N-1}$ & $(1,1,1,1,i_1+$\par$1,i_2,i_3-1,i_4)$ & $\frac{2r}{2r(i_1+2)+(1+r)(i_2+i_3)+2i_4}\cdot\frac{1}{2}\cdot\frac{i_3}{N-1}$  \\[2.5ex] 
  \hline
  \rule{0pt}{20pt}m & m & h in $\ell_1$ & $\frac{2r}{2r(i_1+2)+(1+r)(i_2+i_3)+2i_4}$ & $\ell_1$ & r & r & l in $\ell_1$, $\ell_2$ & $\frac{i_4}{N-1}$ & $(1,1,1,1,i_1,$\par$i_2+1,i_3,i_4-1)$ & $\frac{2r}{2r(i_1+2)+(1+r)(i_2+i_3)+2i_4}\cdot\frac{1}{2}\cdot\frac{i_4}{N-1}$  \\[2.5ex] 
  \hline
  \rule{0pt}{20pt}r & m/r & l in $\ell_1$, $\ell_2$ & $\frac{(1+r)i_3+2i_4}{2r(i_1+2)+(1+r)(i_2+i_3)+2i_4}$ & $\ell_1$ & m & m & h in $\ell_1$ & $1$ & $(0,1,1,1,i_1,$\par$i_2,i_3,i_4)$ & $\frac{(1+r)i_3+2i_4}{2r(i_1+2)+(1+r)(i_2+i_3)+2i_4}\cdot\frac{1}{2}$ \\[2.5ex] 
  \hline
  \rule{0pt}{20pt}m & m & h in $\ell_2$ & $\frac{2r}{2r(i_1+2)+(1+r)(i_2+i_3)+2i_4}$ & $\ell_2$ & m & r & l in $\ell_1$, $\ell_2$ & $\frac{i_2}{N-1}$ & $(1,1,1,1,i_1+$\par$1,i_2-1,i_3,i_4)$ & $\frac{2r}{2r(i_1+2)+(1+r)(i_2+i_3)+2i_4}\cdot\frac{1}{2}\cdot\frac{i_2}{N-1}$ \\[2.5ex] 
  \hline
  \rule{0pt}{20pt}m & m & h in $\ell_2$ & $\frac{2r}{2r(i_1+2)+(1+r)(i_2+i_3)+2i_4}$ & $\ell_2$ & r & r & l in $\ell_1$, $\ell_2$ & $\frac{i_4}{N-1}$ & $(1,1,1,1,i_1,$\par$i_2,i_3+1,i_4-1)$ & $\frac{2r}{2r(i_1+2)+(1+r)(i_2+i_3)+2i_4}\cdot\frac{1}{2}\cdot\frac{i_4}{N-1}$ \\[2.5ex] 
  \hline
  \rule{0pt}{20pt}m/r & r & l in $\ell_1$, $\ell_2$ & $\frac{(1+r)i_2+2i_4}{2r(i_1+2)+(1+r)(i_2+i_3)+2i_4}$ & $\ell_2$ & m & m & h in $\ell_2$ & $1$ & $(1,1,1,0,i_1,$\par$i_2,i_3,i_4)$ & $\frac{(1+r)i_2+2i_4}{2r(i_1+2)+(1+r)(i_2+i_3)+2i_4}\cdot\frac{1}{2}$ \\[2.5ex] 
  \hline
\end{tabular}
\captionof{table}{\label{table-star-star-state1}Six types of state transition from state $(1,1,1,1,i_1,i_2,i_3,i_4)$ under model 1 on the coupled star graph.}
\end{table}
\newpage
Assume that the current state is $(1,1,1,0,i_1,i_2,i_3,i_4)$. There are eight types of events that can occur next.
%
%
Table~\ref{table-star-star-state2} shows the seven types of events that accompany a state transition and their probabilities. If any other event than these seven types occurs, the state remains unchanged. The probability of this case is 1 minus the sum of all entries in the last column of Table~\ref{table-star-star-state2}.  

\begin{table}[H]
\hspace*{-3.5em}
\begin{tabular}{ | w{c}{1.1em} | w{c}{1.1em} | M{2.5em} | c | w{c}{1.1em} | w{c}{1.1em} | w{c}{1.1em} | M{2.5em} | w{c}{1.3em} | M{7em} | M{11em} |}  
  \hline
  \multicolumn{4}{|c|}{parent} & \multirow{3}{*}{layer} & \multicolumn{4}{c|}{child} & \multirow{3}{*}{\shortstack{state after\\ transition}} & \multirow{3}{*}{transition probability}\\
  \cline{1-4}
  \cline{6-9}
    \multicolumn{2}{|c|}{type} & \multirow{2}{*}{h/l} & \multirow{2}{*}{probability} & & \multicolumn{2}{c|}{type} & \multirow{2}{*}{h/l} & \multirow{2}{*}{prob.} & & \\
   \cline{1-2}
   \cline{6-7}      
    $\ell_1$ & $\ell_2$ &  &  &  & $\ell_1$ & $\ell_2$ &  &  &  & \\
  \hline
  \rule{0pt}{20pt}m & m & h in $\ell_1$ & $\frac{2r}{2r(i_1+1)+(1+r)(i_2+i_3+1)+2i_4}$ & $\ell_1$ & r & m & l in $\ell_1$, $\ell_2$ & $\frac{i_3}{N-1}$ & $(1,1,1,0,i_1+$\par$1,i_2,i_3-1,i_4)$ & $\frac{2r}{2r(i_1+1)+(1+r)(i_2+i_3+1)+2i_4}\cdot\frac{1}{2}\cdot\frac{i_3}{N-1}$  \\[2.5ex] 
  \hline
  \rule{0pt}{20pt}m & m & h in $\ell_1$ & $\frac{2r}{2r(i_1+1)+(1+r)(i_2+i_3+1)+2i_4}$ & $\ell_1$ & r & r & l in $\ell_1$, $\ell_2$ & $\frac{i_4}{N-1}$ & $(1,1,1,0,i_1,$\par$i_2+1,i_3,i_4-1)$ & $\frac{2r}{2r(i_1+1)+(1+r)(i_2+i_3+1)+2i_4}\cdot\frac{1}{2}\cdot\frac{i_4}{N-1}$  \\[2.5ex] 
  \hline
  \rule{0pt}{20pt}m/r & m & l in $\ell_1$, $\ell_2$ & $\frac{(1+r)i_3+2i_4}{2r(i_1+1)+(1+r)(i_2+i_3+1)+2i_4}$ & $\ell_1$ & m & m & h in $\ell_1$ & $1$ & $(0,1,1,0,i_1,$\par$i_2,i_3,i_4)$ & $\frac{(1+r)i_3+2i_4}{2r(i_1+1)+(1+r)(i_2+i_3+1)+2i_4}\cdot\frac{1}{2}$ \\[2.5ex] 
  \hline
  \rule{0pt}{20pt}r & m/r & l in $\ell_2$ & $\frac{2r(i_1+1)+(1+r)i_3}{2r(i_1+1)+(1+r)(i_2+i_3+1)+2i_4}$ & $\ell_2$ & m & r & h in $\ell_2$ & $1$ & $(1,1,1,1,i_1,$\par$i_2,i_3,i_4)$ & $\frac{2r(i_1+1)+(1+r)i_3}{2r(i_1+1)+(1+r)(i_2+i_3+1)+2i_4}\cdot\frac{1}{2}$ \\[2.5ex] 
  \hline
  \rule{0pt}{20pt}m & r & h in $\ell_2$ & $\frac{1+r}{2r(i_1+1)+(1+r)(i_2+i_3+1)+2i_4}$ & $\ell_2$ & m & m & h in $\ell_1$ & $\frac{1}{N-1}$ & $(1,0,1,0,i_1,$\par$i_2,i_3,i_4)$ & $\frac{1+r}{2r(i_1+1)+(1+r)(i_2+i_3+1)+2i_4}\cdot\frac{1}{2}\cdot\frac{1}{N-1}$ \\[2.5ex] 
  \hline
  \rule{0pt}{20pt}m & r & h in $\ell_2$ & $\frac{1+r}{2r(i_1+1)+(1+r)(i_2+i_3+1)+2i_4}$ & $\ell_2$ & m & m & l in $\ell_1$, $\ell_2$ & $\frac{i_1}{N-1}$ & $(1,1,1,0,i_1-$\par$1,i_2+1,i_3,i_4)$ & $\frac{1+r}{2r(i_1+1)+(1+r)(i_2+i_3+1)+2i_4}\cdot\frac{1}{2}\cdot\frac{i_1}{N-1}$ \\[2.5ex] 
  \hline
  \rule{0pt}{20pt}m & r & h in $\ell_2$ & $\frac{1+r}{2r(i_1+1)+(1+r)(i_2+i_3+1)+2i_4}$ & $\ell_2$ & r & m & l in $\ell_1$, $\ell_2$ & $\frac{i_3}{N-1}$ & $(1,1,1,0,i_1,$\par$i_2,i_3-1,i_4+1)$ & $\frac{1+r}{2r(i_1+1)+(1+r)(i_2+i_3+1)+2i_4}\cdot\frac{1}{2}\cdot\frac{i_3}{N-1}$ \\[2.5ex] 
  \hline
\end{tabular}
\captionof{table}{\label{table-star-star-state2}Seven types of state transition from state $(1,1,1,0,i_1,i_2,i_3,i_4)$ under model 1 on the coupled star graph.}
\end{table}
\newpage
Assume that the current state is $(1,1,0,1,i_1,i_2,i_3,i_4)$. There are eight types of events that can occur next.
%
%
Table~\ref{table-star-star-state3} shows the seven types of events that accompany a state transition and their probabilities. If any other event than these seven types occurs, the state remains unchanged. The probability of this case is 1 minus the sum of all entries in the last column of Table~\ref{table-star-star-state3}. 

\begin{table}[H]
\hspace*{-3.5em}
\begin{tabular}{ | w{c}{1.1em} | w{c}{1.1em} | M{2.5em} | c | w{c}{1.1em} | w{c}{1.1em} | w{c}{1.1em} | M{2.5em} | w{c}{1.3em} | M{7em} | M{11em} |} 
  \hline
  \multicolumn{4}{|c|}{parent} & \multirow{3}{*}{layer} & \multicolumn{4}{c|}{child} & \multirow{3}{*}{\shortstack{state after\\ transition}} & \multirow{3}{*}{transition probability}\\
  \cline{1-4}
  \cline{6-9}
    \multicolumn{2}{|c|}{type} & \multirow{2}{*}{h/l} & \multirow{2}{*}{probability} & & \multicolumn{2}{c|}{type} & \multirow{2}{*}{h/l} & \multirow{2}{*}{prob.} & & \\
   \cline{1-2}
   \cline{6-7}      
    $\ell_1$ & $\ell_2$ &  &  &  & $\ell_1$ & $\ell_2$ &  &  &  & \\
  \hline
  \rule{0pt}{20pt}m & m & h in $\ell_1$ & $\frac{2r}{2r(i_1+1)+(1+r)(i_2+i_3+1)+2i_4}$ & $\ell_1$ & r & m & h in $\ell_2$ & $\frac{1}{N-1}$ & $(1,1,1,1,i_1,$\par$i_2,i_3,i_4)$ & $\frac{2r}{2r(i_1+1)+(1+r)(i_2+i_3+1)+2i_4}\cdot\frac{1}{2}\cdot\frac{1}{N-1}$  \\[2.5ex] 
  \hline
  \rule{0pt}{20pt}m & m & h in $\ell_1$ & $\frac{2r}{2r(i_1+1)+(1+r)(i_2+i_3+1)+2i_4}$ & $\ell_1$ & r & m & l in $\ell_1$, $\ell_2$ & $\frac{i_3}{N-1}$ & $(1,1,0,1,i_1+$\par$1,i_2,i_3-1,i_4)$ & $\frac{2r}{2r(i_1+1)+(1+r)(i_2+i_3+1)+2i_4}\cdot\frac{1}{2}\cdot\frac{i_3}{N-1}$  \\[2.5ex] 
  \hline
  \rule{0pt}{20pt}m & m & h in $\ell_1$ & $\frac{2r}{2r(i_1+1)+(1+r)(i_2+i_3+1)+2i_4}$ & $\ell_1$ & r & r & l in $\ell_1$, $\ell_2$ & $\frac{i_4}{N-1}$ & $(1,1,0,1,i_1,$\par$i_2+1,i_3,i_4-1)$ & $\frac{2r}{2r(i_1+1)+(1+r)(i_2+i_3+1)+2i_4}\cdot\frac{1}{2}\cdot\frac{i_4}{N-1}$ \\[2.5ex] 
  \hline
  \rule{0pt}{20pt}r & m/r & l in $\ell_1$ & $\frac{(1+r)(i_3+1)+2i_4}{2r(i_1+1)+(1+r)(i_2+i_3+1)+2i_4}$ & $\ell_1$ & m & m & h in $\ell_1$ & $1$ & $(0,1,0,1,i_1,$\par$i_2,i_3,i_4)$ & $\frac{(1+r)(i_3+1)+2i_4}{2r(i_1+1)+(1+r)(i_2+i_3+1)+2i_4}\cdot\frac{1}{2}$ \\[2.5ex] 
  \hline
  \rule{0pt}{20pt}r & m & h in $\ell_2$ & $\frac{1+r}{2r(i_1+1)+(1+r)(i_2+i_3+1)+2i_4}$ & $\ell_2$ & m & r & l in $\ell_1$, $\ell_2$ & $\frac{i_2}{N-1}$ & $(1,1,0,1,i_1+$\par$1,i_2-1,i_3,i_4)$ & $\frac{1+r}{2r(i_1+1)+(1+r)(i_2+i_3+1)+2i_4}\cdot\frac{1}{2}\cdot\frac{i_2}{N-1}$ \\[2.5ex] 
  \hline
  \rule{0pt}{20pt}r & m & h in $\ell_2$ & $\frac{1+r}{2r(i_1+1)+(1+r)(i_2+i_3+1)+2i_4}$ & $\ell_2$ & r & r & l in $\ell_1$, $\ell_2$ & $\frac{i_4}{N-1}$ & $(1,1,0,1,i_1,$\par$i_2,i_3+1,i_4-1)$ & $\frac{1+r}{2r(i_1+1)+(1+r)(i_2+i_3+1)+2i_4}\cdot\frac{1}{2}\cdot\frac{i_4}{N-1}$ \\[2.5ex] 
  \hline
  \rule{0pt}{20pt}m/r & r & l in $\ell_1$, $\ell_2$ & $\frac{(1+r)i_2+2i_4}{2r(i_1+1)+(1+r)(i_2+i_3+1)+2i_4}$ & $\ell_2$ & r & m & h in $\ell_2$ & $1$ & $(1,1,0,0,i_1,$\par$i_2,i_3,i_4)$ & $\frac{(1+r)i_2+2i_4}{2r(i_1+1)+(1+r)(i_2+i_3+1)+2i_4}\cdot\frac{1}{2}$ \\[2.5ex] 
  \hline
\end{tabular}
\captionof{table}{\label{table-star-star-state3}Seven types of state transition from state $(1,1,0,1,i_1,i_2,i_3,i_4)$ under model 1 on the coupled star graph.}
\end{table}
\newpage
Assume that the current state is $(1,1,0,0,i_1,i_2,i_3,i_4)$. There are nine types of events that can occur next.
%
%
Table~\ref{table-star-star-state4} shows the eight types of events that accompany a state transition and their probabilities. If any other event than these eight types occurs, the state remains unchanged. The probability of this case is 1 minus the sum of all entries in the last column of Table~\ref{table-star-star-state4}. 

\begin{table}[H]
\hspace*{-4em}
\begin{tabular}{ | w{c}{1.1em} | w{c}{1.1em} | M{2.5em} | c | w{c}{1.1em} | w{c}{1.1em} | w{c}{1.1em} | M{2.5em} | w{c}{1.3em} | M{7em} | M{11.6em} |} 
  \hline
  \multicolumn{4}{|c|}{parent} & \multirow{3}{*}{layer} & \multicolumn{4}{c|}{child} & \multirow{3}{*}{\shortstack{state after\\ transition}} & \multirow{3}{*}{transition probability}\\
  \cline{1-4}
  \cline{6-9}
    \multicolumn{2}{|c|}{type} & \multirow{2}{*}{h/l} & \multirow{2}{*}{probability} & & \multicolumn{2}{c|}{type} & \multirow{2}{*}{h/l} & \multirow{2}{*}{prob.} & & \\
   \cline{1-2}
   \cline{6-7}      
    $\ell_1$ & $\ell_2$ &  &  &  & $\ell_1$ & $\ell_2$ &  &  &  & \\
  \hline
  \rule{0pt}{20pt}m & m & h in $\ell_1$ & $\frac{2r}{2r(i_1+1)+(1+r)(i_2+i_3)+2(i_4+1)}$ & $\ell_1$ & r & r & h in $\ell_2$ & $\frac{1}{N-1}$ & $(1,1,1,0,i_1,$\par$i_2,i_3,i_4)$ & $\frac{2r}{2r(i_1+1)+(1+r)(i_2+i_3)+2(i_4+1)}\cdot\frac{1}{2}\cdot\frac{1}{N-1}$  \\[2.5ex] 
  \hline
  \rule{0pt}{20pt}m & m & h in $\ell_1$ & $\frac{2r}{2r(i_1+1)+(1+r)(i_2+i_3)+2(i_4+1)}$ & $\ell_1$ & r & m & l in $\ell_1$, $\ell_2$ & $\frac{i_3}{N-1}$ & $(1,1,0,0,i_1+$\par$1,i_2,i_3-1,i_4)$ & $\frac{2r}{2r(i_1+1)+(1+r)(i_2+i_3)+2(i_4+1)}\cdot\frac{1}{2}\cdot\frac{i_3}{N-1}$  \\[2.5ex] 
  \hline
  \rule{0pt}{20pt}m & m & h in $\ell_1$ & $\frac{2r}{2r(i_1+1)+(1+r)(i_2+i_3)+2(i_4+1)}$ & $\ell_1$ & r & r & l in $\ell_1$, $\ell_2$ & $\frac{i_4}{N-1}$ & $(1,1,0,0,i_1,$\par$i_2+1,i_3,i_4-1)$ & $\frac{2r}{2r(i_1+1)+(1+r)(i_2+i_3)+2(i_4+1)}\cdot\frac{1}{2}\cdot\frac{i_4}{N-1}$ \\[2.5ex] 
  \hline
  \rule{0pt}{20pt}r & m/r & l in $\ell_1$ & $\frac{(1+r)i_3+2(i_4+1)}{2r(i_1+1)+(1+r)(i_2+i_3)+2(i_4+1)}$ & $\ell_1$ & m & m & h in $\ell_1$ & $1$ & $(0,1,0,0,i_1,$\par$i_2,i_3,i_4)$ & $\frac{(1+r)i_3+2(i_4+1)}{2r(i_1+1)+(1+r)(i_2+i_3)+2(i_4+1)}\cdot\frac{1}{2}$ \\[2.5ex] 
  \hline
  \rule{0pt}{20pt}m/r & m & l in $\ell_2$ & $\frac{2r(i_1+1)+(1+r)i_3}{2r(i_1+1)+(1+r)(i_2+i_3)+2(i_4+1)}$ & $\ell_2$ & r & r & h in $\ell_2$ & $1$ & $(1,1,0,1,i_1,$\par$i_2,i_3,i_4)$ & $\frac{2r(i_1+1)+(1+r)i_3}{2r(i_1+1)+(1+r)(i_2+i_3)+2(i_4+1)}\cdot\frac{1}{2}$ \\[2.5ex] 
  \hline
  \rule{0pt}{20pt}r & r & h in $\ell_2$ & $\frac{2}{2r(i_1+1)+(1+r)(i_2+i_3)+2(i_4+1)}$ & $\ell_2$ & m & m & h in $\ell_1$ & $\frac{1}{N-1}$ & $(1,0,0,0,i_1,$\par$i_2,i_3,i_4)$ & $\frac{2}{2r(i_1+1)+(1+r)(i_2+i_3)+2(i_4+1)}\cdot\frac{1}{2}\cdot\frac{1}{N-1}$ \\[2.5ex] 
  \hline
  \rule{0pt}{20pt}r & r & h in $\ell_2$ & $\frac{2}{2r(i_1+1)+(1+r)(i_2+i_3)+2(i_4+1)}$ & $\ell_2$ & m & m & l in $\ell_1$, $\ell_2$ & $\frac{i_1}{N-1}$ & $(1,1,0,0,i_1-$\par$1,i_2+1,i_3,i_4)$ & $\frac{2}{2r(i_1+1)+(1+r)(i_2+i_3)+2(i_4+1)}\cdot\frac{1}{2}\cdot\frac{i_1}{N-1}$ \\[2.5ex] 
  \hline
  \rule{0pt}{20pt}r & r & h in $\ell_2$ & $\frac{2}{2r(i_1+1)+(1+r)(i_2+i_3)+2(i_4+1)}$ & $\ell_2$ & r & m & l in $\ell_1$, $\ell_2$ & $\frac{i_3}{N-1}$ & $(1,1,0,0,i_1,$\par$i_2,i_3-1,i_4+1)$ & $\frac{2}{2r(i_1+1)+(1+r)(i_2+i_3)+2(i_4+1)}\cdot\frac{1}{2}\cdot\frac{i_3}{N-1}$ \\[2.5ex] 
  \hline
\end{tabular}
\captionof{table}{\label{table-star-star-state4}Eight types of state transition from state $(1,1,0,0,i_1,i_2,i_3,i_4)$ under model 1 on the coupled star graph.}
\end{table}
\newpage
Assume that the current state is $(1,0,1,1,i_1,i_2,i_3,i_4)$. There are eight types of events that can occur next.
%
%
Table~\ref{table-star-star-state5} shows the seven types of events that accompany a state transition and their probabilities. If any other event than these seven types occurs, the state remains unchanged. The probability of this case is 1 minus the sum of all entries in the last column of Table~\ref{table-star-star-state5}. 

\begin{table}[H]
\hspace*{-3.5em}
\begin{tabular}{ | w{c}{1.1em} | w{c}{1.1em} | M{2.5em} | c | w{c}{1.1em} | w{c}{1.1em} | w{c}{1.1em} | M{2.5em} | w{c}{1.3em} | M{7em} | M{11em} |} 
  \hline
  \multicolumn{4}{|c|}{parent} & \multirow{3}{*}{layer} & \multicolumn{4}{c|}{child} & \multirow{3}{*}{\shortstack{state after\\ transition}} & \multirow{3}{*}{transition probability}\\
  \cline{1-4}
  \cline{6-9}
    \multicolumn{2}{|c|}{type} & \multirow{2}{*}{h/l} & \multirow{2}{*}{probability} & & \multicolumn{2}{c|}{type} & \multirow{2}{*}{h/l} & \multirow{2}{*}{prob.} & & \\
   \cline{1-2}
   \cline{6-7}      
    $\ell_1$ & $\ell_2$ &  &  &  & $\ell_1$ & $\ell_2$ &  &  &  & \\
  \hline
  \rule{0pt}{20pt}m & r & h in $\ell_1$ & $\frac{1+r}{2r(i_1+1)+(1+r)(i_2+i_3+1)+2i_4}$ & $\ell_1$ & r & m & l in $\ell_1$, $\ell_2$ & $\frac{i_3}{N-1}$ & $(1,0,1,1,i_1+$\par$1,i_2,i_3-1,i_4)$ & $\frac{1+r}{2r(i_1+1)+(1+r)(i_2+i_3+1)+2i_4}\cdot\frac{1}{2}\cdot\frac{i_3}{N-1}$  \\[2.5ex] 
  \hline
  \rule{0pt}{20pt}m & r & h in $\ell_1$ & $\frac{1+r}{2r(i_1+1)+(1+r)(i_2+i_3+1)+2i_4}$ & $\ell_1$ & r & r & l in $\ell_1$, $\ell_2$ & $\frac{i_4}{N-1}$ & $(1,0,1,1,i_1,$\par$i_2+1,i_3,i_4-1)$ & $\frac{1+r}{2r(i_1+1)+(1+r)(i_2+i_3+1)+2i_4}\cdot\frac{1}{2}\cdot\frac{i_4}{N-1}$  \\[2.5ex] 
  \hline
  \rule{0pt}{20pt}r & m/r & l in $\ell_1$, $\ell_2$ & $\frac{(1+r)i_3+2i_4}{2r(i_1+1)+(1+r)(i_2+i_3+1)+2i_4}$ & $\ell_1$ & m & r & h in $\ell_1$ & $1$ & $(0,0,1,1,i_1,$\par$i_2,i_3,i_4)$ & $\frac{(1+r)i_3+2i_4}{2r(i_1+1)+(1+r)(i_2+i_3+1)+2i_4}\cdot\frac{1}{2}$ \\[2.5ex] 
  \hline
  \rule{0pt}{20pt}m & m & h in $\ell_2$ & $\frac{2r}{2r(i_1+1)+(1+r)(i_2+i_3+1)+2i_4}$ & $\ell_2$ & m & r & h in $\ell_1$ & $\frac{1}{N-1}$ & $(1,1,1,1,i_1,$\par$i_2,i_3,i_4)$ & $\frac{2r}{2r(i_1+1)+(1+r)(i_2+i_3+1)+2i_4}\cdot\frac{1}{2}\cdot\frac{1}{N-1}$ \\[2.5ex] 
  \hline
  \rule{0pt}{20pt}m & m & h in $\ell_2$ & $\frac{2r}{2r(i_1+1)+(1+r)(i_2+i_3+1)+2i_4}$ & $\ell_2$ & m & r & l in $\ell_1$, $\ell_2$ & $\frac{i_2}{N-1}$ & $(1,0,1,1,i_1+$\par$1,i_2-1,i_3,i_4)$ & $\frac{2r}{2r(i_1+1)+(1+r)(i_2+i_3+1)+2i_4}\cdot\frac{1}{2}\cdot\frac{i_2}{N-1}$ \\[2.5ex] 
  \hline
  \rule{0pt}{20pt}m & m & h in $\ell_2$ & $\frac{2r}{2r(i_1+1)+(1+r)(i_2+i_3+1)+2i_4}$ & $\ell_2$ & r & r & l in $\ell_1$, $\ell_2$ & $\frac{i_4}{N-1}$ & $(1,0,1,1,i_1,$\par$i_2,i_3+1,i_4-1)$ & $\frac{2r}{2r(i_1+1)+(1+r)(i_2+i_3+1)+2i_4}\cdot\frac{1}{2}\cdot\frac{i_4}{N-1}$ \\[2.5ex] 
  \hline
  \rule{0pt}{20pt}m/r & r & l in $\ell_2$ & $\frac{(1+r)(i_2+1)+2i_4}{2r(i_1+1)+(1+r)(i_2+i_3+1)+2i_4}$ & $\ell_2$ & m & m & h in $\ell_2$ & $1$ & $(1,0,1,0,i_1,$\par$i_2,i_3,i_4)$ & $\frac{(1+r)(i_2+1)+2i_4}{2r(i_1+1)+(1+r)(i_2+i_3+1)+2i_4}\cdot\frac{1}{2}$ \\[2.5ex] 
  \hline
\end{tabular}
\captionof{table}{\label{table-star-star-state5}Seven types of state transition from state $(1,0,1,1,i_1,i_2,i_3,i_4)$ under model 1 on the coupled star graph.}
\end{table}
\newpage
Assume that the current state is $(1,0,1,0,i_1,i_2,i_3,i_4)$. There are seven types of events that can occur next.
%
%
Table~\ref{table-star-star-state6} shows the six types of events that accompany a state transition and their probabilities. If any other event than these six types occurs, the state remains unchanged. The probability of this case is 1 minus the sum of all entries in the last column of Table~\ref{table-star-star-state6}. 

\begin{table}[H]
\hspace*{-2.5em}
\begin{tabular}{ | w{c}{1.1em} | w{c}{1.1em} | M{2.5em} | c | w{c}{1.1em} | w{c}{1.1em} | w{c}{1.1em} | M{2.5em} | w{c}{1.3em} | M{7em} | M{10em} |}  
  \hline
  \multicolumn{4}{|c|}{parent} & \multirow{3}{*}{layer} & \multicolumn{4}{c|}{child} & \multirow{3}{*}{\shortstack{state after\\ transition}} & \multirow{3}{*}{transition probability}\\
  \cline{1-4}
  \cline{6-9}
    \multicolumn{2}{|c|}{type} & \multirow{2}{*}{h/l} & \multirow{2}{*}{probability} & & \multicolumn{2}{c|}{type} & \multirow{2}{*}{h/l} & \multirow{2}{*}{prob.} & & \\
   \cline{1-2}
   \cline{6-7}      
    $\ell_1$ & $\ell_2$ &  &  &  & $\ell_1$ & $\ell_2$ &  &  &  & \\
  \hline
  \rule{0pt}{20pt}m & r & h in $\ell_1$ & $\frac{1+r}{2ri_1+(1+r)(i_2+i_3+2)+2i_4}$ & $\ell_1$ & r & m & l in $\ell_1$, $\ell_2$ & $\frac{i_3}{N-1}$ & $(1,0,1,0,i_1+$\par$1,i_2,i_3-1,i_4)$ & $\frac{1+r}{2ri_1+(1+r)(i_2+i_3+2)+2i_4}\cdot\frac{1}{2}\cdot\frac{i_3}{N-1}$  \\[2.5ex] 
  \hline
  \rule{0pt}{20pt}m & r & h in $\ell_1$ & $\frac{1+r}{2ri_1+(1+r)(i_2+i_3+2)+2i_4}$ & $\ell_1$ & r & r & l in $\ell_1$, $\ell_2$ & $\frac{i_4}{N-1}$ & $(1,0,1,0,i_1,$\par$i_2+1,i_3,i_4-1)$ & $\frac{1+r}{2ri_1+(1+r)(i_2+i_3+2)+2i_4}\cdot\frac{1}{2}\cdot\frac{i_4}{N-1}$  \\[2.5ex] 
  \hline
  \rule{0pt}{20pt}r & m/r & l in $\ell_1$, $\ell_2$ & $\frac{(1+r)i_3+2i_4}{2ri_1+(1+r)(i_2+i_3+2)+2i_4}$ & $\ell_1$ & m & r & h in $\ell_1$ & $1$ & $(0,0,1,0,i_1,$\par$i_2,i_3,i_4)$ & $\frac{(1+r)i_3+2i_4}{2ri_1+(1+r)(i_2+i_3+2)+2i_4}\cdot\frac{1}{2}$ \\[2.5ex] 
  \hline
  \rule{0pt}{20pt}m/r & m & l in $\ell_1$, $\ell_2$ & $\frac{2ri_1+(1+r)i_3}{2ri_1+(1+r)(i_2+i_3+2)+2i_4}$ & $\ell_2$ & m & r & h in $\ell_2$ & $1$ & $(1,0,1,1,i_1,$\par$i_2,i_3,i_4)$ & $\frac{2ri_1+(1+r)i_3}{2ri_1+(1+r)(i_2+i_3+2)+2i_4}\cdot\frac{1}{2}$ \\[2.5ex] 
  \hline
  \rule{0pt}{20pt}m & r & h in $\ell_2$ & $\frac{1+r}{2ri_1+(1+r)(i_2+i_3+2)+2i_4}$ & $\ell_2$ & m & m & l in $\ell_1$, $\ell_2$ & $\frac{i_1}{N-1}$ & $(1,0,1,0,i_1-$\par$1,i_2+1,i_3,i_4)$ & $\frac{1+r}{2ri_1+(1+r)(i_2+i_3+2)+2i_4}\cdot\frac{1}{2}\cdot\frac{i_1}{N-1}$ \\[2.5ex] 
  \hline
  \rule{0pt}{20pt}m & r & h in $\ell_2$ & $\frac{1+r}{2ri_1+(1+r)(i_2+i_3+2)+2i_4}$ & $\ell_2$ & r & m & l in $\ell_1$, $\ell_2$ & $\frac{i_3}{N-1}$ & $(1,0,1,0,i_1,$\par$i_2,i_3-1,i_4+1)$ & $\frac{1+r}{2ri_1+(1+r)(i_2+i_3+2)+2i_4}\cdot\frac{1}{2}\cdot\frac{i_3}{N-1}$ \\[2.5ex] 
  \hline
\end{tabular}
\captionof{table}{\label{table-star-star-state6}Six types of state transition from state $(1,0,1,0,i_1,i_2,i_3,i_4)$ under model 1 on the coupled star graph.}
\end{table}
\newpage
Assume that the current state is $(1,0,0,1,i_1,i_2,i_3,i_4)$. There are nine types of events that can occur next.
%
%
Table~\ref{table-star-star-state7} shows the eight types of events that accompany a state transition and their probabilities. If any other event than these eight types occurs, the state remains unchanged. The probability of this case is 1 minus the sum of all entries in the last column of Table~\ref{table-star-star-state7}. 

\begin{table}[H]
\hspace*{-2.5em}
\begin{tabular}{ | w{c}{1.1em} | w{c}{1.1em} | M{2.5em} | c | w{c}{1.1em} | w{c}{1.1em} | w{c}{1.1em} | M{2.5em} | w{c}{1.3em} | M{7em} | M{10em} |} 
  \hline
  \multicolumn{4}{|c|}{parent} & \multirow{3}{*}{layer} & \multicolumn{4}{c|}{child} & \multirow{3}{*}{\shortstack{state after\\ transition}} & \multirow{3}{*}{transition probability}\\
  \cline{1-4}
  \cline{6-9}
    \multicolumn{2}{|c|}{type} & \multirow{2}{*}{h/l} & \multirow{2}{*}{probability} & & \multicolumn{2}{c|}{type} & \multirow{2}{*}{h/l} & \multirow{2}{*}{prob.} & & \\
   \cline{1-2}
   \cline{6-7}      
    $\ell_1$ & $\ell_2$ &  &  &  & $\ell_1$ & $\ell_2$ &  &  &  & \\
  \hline
  \rule{0pt}{20pt}m & r & h in $\ell_1$ & $\frac{1+r}{2ri_1+(1+r)(i_2+i_3+2)+2i_4}$ & $\ell_1$ & r & m & h in $\ell_2$ & $\frac{1}{N-1}$ & $(1,0,1,1,i_1,$\par$i_2,i_3,i_4)$ & $\frac{1+r}{2ri_1+(1+r)(i_2+i_3+2)+2i_4}\cdot\frac{1}{2}\cdot\frac{1}{N-1}$  \\[2.5ex] 
  \hline
  \rule{0pt}{20pt}m & r & h in $\ell_1$ & $\frac{1+r}{2ri_1+(1+r)(i_2+i_3+2)+2i_4}$ & $\ell_1$ & r & m & l in $\ell_1$, $\ell_2$ & $\frac{i_3}{N-1}$ & $(1,0,0,1,i_1+$\par$1,i_2,i_3-1,i_4)$ & $\frac{1+r}{2ri_1+(1+r)(i_2+i_3+2)+2i_4}\cdot\frac{1}{2}\cdot\frac{i_3}{N-1}$  \\[2.5ex] 
  \hline
  \rule{0pt}{20pt}m & r & h in $\ell_1$ & $\frac{1+r}{2ri_1+(1+r)(i_2+i_3+2)+2i_4}$ & $\ell_1$ & r & r & l in $\ell_1$, $\ell_2$ & $\frac{i_4}{N-1}$ & $(1,0,0,1,i_1,$\par$i_2+1,i_3,i_4-1)$ & $\frac{1+r}{2ri_1+(1+r)(i_2+i_3+2)+2i_4}\cdot\frac{1}{2}\cdot\frac{i_4}{N-1}$  \\[2.5ex] 
  \hline
  \rule{0pt}{20pt}r & m/r & l in $\ell_1$ & $\frac{(1+r)(i_3+1)+2i_4}{2ri_1+(1+r)(i_2+i_3+2)+2i_4}$ & $\ell_1$ & m & r & h in $\ell_1$ & $1$ & $(0,0,0,1,i_1,$\par$i_2,i_3,i_4)$ & $\frac{(1+r)(i_3+1)+2i_4}{2ri_1+(1+r)(i_2+i_3+2)+2i_4}\cdot\frac{1}{2}$ \\[2.5ex] 
  \hline
  \rule{0pt}{20pt}r & m & h in $\ell_2$ & $\frac{1+r}{2ri_1+(1+r)(i_2+i_3+2)+2i_4}$ & $\ell_2$ & m & r & h in $\ell_2$ & $\frac{1}{N-1}$ & $(1,1,0,1,i_1,$\par$i_2,i_3,i_4)$ & $\frac{1+r}{2ri_1+(1+r)(i_2+i_3+2)+2i_4}\cdot\frac{1}{2}\cdot\frac{1}{N-1}$ \\[2.5ex] 
  \hline
  \rule{0pt}{20pt}r & m & h in $\ell_2$ & $\frac{1+r}{2ri_1+(1+r)(i_2+i_3+2)+2i_4}$ & $\ell_2$ & m & r & l in $\ell_1$, $\ell_2$ & $\frac{i_2}{N-1}$ & $(1,0,0,1,i_1+$\par$1,i_2-1,i_3,i_4)$ & $\frac{1+r}{2ri_1+(1+r)(i_2+i_3+2)+2i_4}\cdot\frac{1}{2}\cdot\frac{i_2}{N-1}$ \\[2.5ex] 
  \hline
  \rule{0pt}{20pt}r & m & h in $\ell_2$ & $\frac{1+r}{2ri_1+(1+r)(i_2+i_3+2)+2i_4}$ & $\ell_2$ & r & r & l in $\ell_1$, $\ell_2$ & $\frac{i_4}{N-1}$ & $(1,0,0,1,i_1,$\par$i_2,i_3+1,i_4-1)$ & $\frac{1+r}{2ri_1+(1+r)(i_2+i_3+2)+2i_4}\cdot\frac{1}{2}\cdot\frac{i_4}{N-1}$ \\[2.5ex] 
  \hline
  \rule{0pt}{20pt}m/r & r & l in $\ell_2$ & $\frac{(1+r)(i_2+1)+2i_4}{2ri_1+(1+r)(i_2+i_3+2)+2i_4}$ & $\ell_2$ & r & m & h in $\ell_2$ & $1$ & $(1,0,0,0,i_1,$\par$i_2,i_3,i_4)$ & $\frac{(1+r)(i_2+1)+2i_4}{2ri_1+(1+r)(i_2+i_3+2)+2i_4}\cdot\frac{1}{2}$ \\[2.5ex] 
  \hline
\end{tabular}
\captionof{table}{\label{table-star-star-state7}Eight types of state transition from state $(1,0,0,1,i_1,i_2,i_3,i_4)$ under model 1 on the coupled star graph.}
\end{table}
\newpage
Assume that the current state is $(1,0,0,0,i_1,i_2,i_3,i_4)$. There are eight types of events that can occur next.
%
%
Table~\ref{table-star-star-state8} shows the seven types of events that accompany a state transition and their probabilities. If any other event than these seven types occurs, the state remains unchanged. The probability of this case is 1 minus the sum of all entries in the last column of Table~\ref{table-star-star-state8}. 

\begin{table}[H]
\hspace*{-3.5em}
\begin{tabular}{ | w{c}{1.1em} | w{c}{1.1em} | M{2.5em} | c | w{c}{1.1em} | w{c}{1.1em} | w{c}{1.1em} | M{2.5em} | w{c}{1.3em} | M{7em} | M{11em} |} 
  \hline
  \multicolumn{4}{|c|}{parent} & \multirow{3}{*}{layer} & \multicolumn{4}{c|}{child} & \multirow{3}{*}{\shortstack{state after\\ transition}} & \multirow{3}{*}{transition probability}\\
  \cline{1-4}
  \cline{6-9}
    \multicolumn{2}{|c|}{type} & \multirow{2}{*}{h/l} & \multirow{2}{*}{probability} & & \multicolumn{2}{c|}{type} & \multirow{2}{*}{h/l} & \multirow{2}{*}{prob.} & & \\
   \cline{1-2}
   \cline{6-7}      
    $\ell_1$ & $\ell_2$ &  &  &  & $\ell_1$ & $\ell_2$ &  &  &  & \\
  \hline
  \rule{0pt}{20pt}m & r & h in $\ell_1$ & $\frac{1+r}{2ri_1+(1+r)(i_2+i_3+1)+2(i_4+1)}$ & $\ell_1$ & r & r & h in $\ell_2$ & $\frac{1}{N-1}$ & $(1,0,1,0,i_1,$\par$i_2,i_3,i_4)$ & $\frac{1+r}{2ri_1+(1+r)(i_2+i_3+1)+2(i_4+1)}\cdot\frac{1}{2}\cdot\frac{1}{N-1}$  \\[2.5ex] 
  \hline
  \rule{0pt}{20pt}m & r & h in $\ell_1$ & $\frac{1+r}{2ri_1+(1+r)(i_2+i_3+1)+2(i_4+1)}$ & $\ell_1$ & r & m & l in $\ell_1$, $\ell_2$ & $\frac{i_3}{N-1}$ & $(1,0,0,0,i_1+$\par$1,i_2,i_3-1,i_4)$ & $\frac{1+r}{2ri_1+(1+r)(i_2+i_3+1)+2(i_4+1)}\cdot\frac{1}{2}\cdot\frac{i_3}{N-1}$  \\[2.5ex] 
  \hline
  \rule{0pt}{20pt}m & r & h in $\ell_1$ & $\frac{1+r}{2ri_1+(1+r)(i_2+i_3+1)+2(i_4+1)}$ & $\ell_1$ & r & r & l in $\ell_1$, $\ell_2$ & $\frac{i_4}{N-1}$ & $(1,0,0,0,i_1,$\par$i_2+1,i_3,i_4-1)$ & $\frac{1+r}{2ri_1+(1+r)(i_2+i_3+1)+2(i_4+1)}\cdot\frac{1}{2}\cdot\frac{i_4}{N-1}$  \\[2.5ex] 
  \hline
  \rule{0pt}{20pt}r & m/r & l in $\ell_1$ & $\frac{(1+r)i_3+2(i_4+1)}{2ri_1+(1+r)(i_2+i_3+1)+2(i_4+1)}$ & $\ell_1$ & m & r & h in $\ell_1$ & $1$ & $(0,0,0,0,i_1,$\par$i_2,i_3,i_4)$ & $\frac{(1+r)i_3+2(i_4+1)}{2ri_1+(1+r)(i_2+i_3+1)+2(i_4+1)}\cdot\frac{1}{2}$ \\[2.5ex] 
  \hline
  \rule{0pt}{20pt}m/r & m & l in $\ell_1$, $\ell_2$ & $\frac{2ri_1+(1+r)i_3}{2ri_1+(1+r)(i_2+i_3+1)+2(i_4+1)}$ & $\ell_2$ & r & r & h in $\ell_2$ & $1$ & $(1,0,0,1,i_1,$\par$i_2,i_3,i_4)$ & $\frac{2ri_1+(1+r)i_3}{2ri_1+(1+r)(i_2+i_3+1)+2(i_4+1)}\cdot\frac{1}{2}$ \\[2.5ex] 
  \hline
  \rule{0pt}{20pt}r & r & h in $\ell_2$ & $\frac{2}{2ri_1+(1+r)(i_2+i_3+1)+2(i_4+1)}$ & $\ell_2$ & m & m & l in $\ell_1$, $\ell_2$ & $\frac{i_1}{N-1}$ & $(1,0,0,0,i_1-$\par$1,i_2+1,i_3,i_4)$ & $\frac{2}{2ri_1+(1+r)(i_2+i_3+1)+2(i_4+1)}\cdot\frac{1}{2}\cdot\frac{i_1}{N-1}$ \\[2.5ex] 
  \hline
  \rule{0pt}{20pt}r & r & h in $\ell_2$ & $\frac{2}{2ri_1+(1+r)(i_2+i_3+1)+2(i_4+1)}$ & $\ell_2$ & r & m & l in $\ell_1$, $\ell_2$ & $\frac{i_3}{N-1}$ & $(1,0,0,0,i_1,$\par$i_2,i_3-1,i_4+1)$ & $\frac{2}{2ri_1+(1+r)(i_2+i_3+1)+2(i_4+1)}\cdot\frac{1}{2}\cdot\frac{i_3}{N-1}$ \\[2.5ex] 
  \hline
\end{tabular}
\captionof{table}{\label{table-star-star-state8}Seven types of state transition from state $(1,0,0,0,i_1,i_2,i_3,i_4)$ under model 1 on the coupled star graph.}
\end{table}
\newpage
Assume that the current state is $(0,1,1,1,i_1,i_2,i_3,i_4)$. There are eight types of events that can occur next.
%
%
Table~\ref{table-star-star-state9} shows the seven types of events that accompany a state transition and their probabilities. If any other event than these seven types occurs, the state remains unchanged. The probability of this case is 1 minus the sum of all entries in the last column of Table~\ref{table-star-star-state9}. 

\begin{table}[H]
\hspace*{-3.5em}
\begin{tabular}{ | w{c}{1.1em} | w{c}{1.1em} | M{2.5em} | c | w{c}{1.1em} | w{c}{1.1em} | w{c}{1.1em} | M{2.5em} | w{c}{1.3em} | M{7em} | M{11em} |}  
  \hline
  \multicolumn{4}{|c|}{parent} & \multirow{3}{*}{layer} & \multicolumn{4}{c|}{child} & \multirow{3}{*}{\shortstack{state after\\ transition}} & \multirow{3}{*}{transition probability}\\
  \cline{1-4}
  \cline{6-9}
    \multicolumn{2}{|c|}{type} & \multirow{2}{*}{h/l} & \multirow{2}{*}{probability} & & \multicolumn{2}{c|}{type} & \multirow{2}{*}{h/l} & \multirow{2}{*}{prob.} & & \\
   \cline{1-2}
   \cline{6-7}      
    $\ell_1$ & $\ell_2$ &  &  &  & $\ell_1$ & $\ell_2$ &  &  &  & \\
  \hline
  \rule{0pt}{20pt}m & m/r & l in $\ell_1$ & $\frac{2r(i_1+1)+(1+r)i_2}{2r(i_1+1)+(1+r)(i_2+i_3+1)+2i_4}$ & $\ell_1$ & r & m & h in $\ell_1$ & $1$ & $(1,1,1,1,i_1,$\par$i_2,i_3,i_4)$ & $\frac{2r(i_1+1)+(1+r)i_2}{2r(i_1+1)+(1+r)(i_2+i_3+1)+2i_4}\cdot\frac{1}{2}$  \\[2.5ex] 
  \hline
  \rule{0pt}{20pt}r & m & h in $\ell_1$ & $\frac{1+r}{2r(i_1+1)+(1+r)(i_2+i_3+1)+2i_4}$ & $\ell_1$ & m & m & h in $\ell_2$ & $\frac{1}{N-1}$ & $(0,1,0,1,i_1,$\par$i_2,i_3,i_4)$ & $\frac{1+r}{2r(i_1+1)+(1+r)(i_2+i_3+1)+2i_4}\cdot\frac{1}{2}\cdot\frac{1}{N-1}$  \\[2.5ex] 
  \hline
  \rule{0pt}{20pt}r & m & h in $\ell_1$ & $\frac{1+r}{2r(i_1+1)+(1+r)(i_2+i_3+1)+2i_4}$ & $\ell_1$ & m & m & l in $\ell_1$, $\ell_2$ & $\frac{i_1}{N-1}$ & $(0,1,1,1,i_1-$\par$1,i_2,i_3+1,i_4)$ & $\frac{1+r}{2r(i_1+1)+(1+r)(i_2+i_3+1)+2i_4}\cdot\frac{1}{2}\cdot\frac{i_1}{N-1}$ \\[2.5ex] 
  \hline
  \rule{0pt}{20pt}r & m & h in $\ell_1$ & $\frac{1+r}{2r(i_1+1)+(1+r)(i_2+i_3+1)+2i_4}$ & $\ell_1$ & m & r & l in $\ell_1$, $\ell_2$ & $\frac{i_2}{N-1}$ & $(0,1,1,1,i_1,$\par$i_2-1,i_3,i_4+1)$ & $\frac{1+r}{2r(i_1+1)+(1+r)(i_2+i_3+1)+2i_4}\cdot\frac{1}{2}\cdot\frac{i_2}{N-1}$ \\[2.5ex] 
  \hline
  \rule{0pt}{20pt}m & m & h in $\ell_2$ & $\frac{2r}{2r(i_1+1)+(1+r)(i_2+i_3+1)+2i_4}$ & $\ell_2$ & m & r & l in $\ell_1$, $\ell_2$ & $\frac{i_2}{N-1}$ & $(0,1,1,1,i_1+$\par$1,i_2-1,i_3,i_4)$ & $\frac{2r}{2r(i_1+1)+(1+r)(i_2+i_3+1)+2i_4}\cdot\frac{1}{2}\cdot\frac{i_2}{N-1}$ \\[2.5ex] 
  \hline
  \rule{0pt}{20pt}m & m & h in $\ell_2$ & $\frac{2r}{2r(i_1+1)+(1+r)(i_2+i_3+1)+2i_4}$ & $\ell_2$ & r & r & l in $\ell_1$, $\ell_2$ & $\frac{i_4}{N-1}$ & $(0,1,1,1,i_1,$\par$i_2,i_3+1,i_4-1)$ & $\frac{2r}{2r(i_1+1)+(1+r)(i_2+i_3+1)+2i_4}\cdot\frac{1}{2}\cdot\frac{i_4}{N-1}$ \\[2.5ex] 
  \hline
  \rule{0pt}{20pt}m/r & r & l in $\ell_2$ & $\frac{(1+r)i_2+2i_4}{2r(i_1+1)+(1+r)(i_2+i_3+1)+2i_4}$ & $\ell_2$ & m & m & h in $\ell_2$ & $1$ & $(0,1,1,0,i_1,$\par$i_2,i_3,i_4)$ & $\frac{(1+r)i_2+2i_4}{2r(i_1+1)+(1+r)(i_2+i_3+1)+2i_4}\cdot\frac{1}{2}$ \\[2.5ex] 
  \hline
\end{tabular}
\captionof{table}{\label{table-star-star-state9}Seven types of state transition from state $(0,1,1,1,i_1,i_2,i_3,i_4)$ under model 1 on the coupled star graph.}
\end{table}
\newpage
Assume that the current state is $(0,1,1,0,i_1,i_2,i_3,i_4)$. There are nine types of events that can occur next.
%
%
Table~\ref{table-star-star-state10} shows the eight types of events that accompany a state transition and their probabilities. If any other event than these eight types occurs, the state remains unchanged. The probability of this case is 1 minus the sum of all entries in the last column of Table~\ref{table-star-star-state10}. 

\begin{table}[H]
\hspace*{-2.5em}
\begin{tabular}{ | w{c}{1.1em} | w{c}{1.1em} | M{2.5em} | c | w{c}{1.1em} | w{c}{1.1em} | w{c}{1.1em} | M{2.5em} | w{c}{1.3em} | M{7em} | M{10em} |} 
  \hline
  \multicolumn{4}{|c|}{parent} & \multirow{3}{*}{layer} & \multicolumn{4}{c|}{child} & \multirow{3}{*}{\shortstack{state after\\ transition}} & \multirow{3}{*}{transition probability}\\
  \cline{1-4}
  \cline{6-9}
    \multicolumn{2}{|c|}{type} & \multirow{2}{*}{h/l} & \multirow{2}{*}{probability} & & \multicolumn{2}{c|}{type} & \multirow{2}{*}{h/l} & \multirow{2}{*}{prob.} & & \\
   \cline{1-2}
   \cline{6-7}      
    $\ell_1$ & $\ell_2$ &  &  &  & $\ell_1$ & $\ell_2$ &  &  &  & \\
  \hline
  \rule{0pt}{20pt}m & m/r & l in $\ell_1$ & $\frac{2ri_1+(1+r)(i_2+1)}{2ri_1+(1+r)(i_2+i_3+2)+2i_4}$ & $\ell_1$ & r & m & h in $\ell_2$ & $1$ & $(1,1,1,0,i_1,$\par$i_2,i_3,i_4)$ & $\frac{2ri_1+(1+r)(i_2+1)}{2ri_1+(1+r)(i_2+i_3+2)+2i_4}\cdot\frac{1}{2}$  \\[2.5ex] 
  \hline
  \rule{0pt}{20pt}r & m & h in $\ell_1$ & $\frac{1+r}{2ri_1+(1+r)(i_2+i_3+2)+2i_4}$ & $\ell_1$ & m & r & h in $\ell_2$ & $\frac{1}{N-1}$ & $(0,1,0,0,i_1,$\par$i_2,i_3,i_4)$ & $\frac{1+r}{2ri_1+(1+r)(i_2+i_3+2)+2i_4}\cdot\frac{1}{2}\cdot\frac{1}{N-1}$  \\[2.5ex] 
  \hline
  \rule{0pt}{20pt}r & m & h in $\ell_1$ & $\frac{1+r}{2ri_1+(1+r)(i_2+i_3+2)+2i_4}$ & $\ell_1$ & m & m & l in $\ell_1$, $\ell_2$ & $\frac{i_1}{N-1}$ & $(0,1,1,0,i_1-$\par$1,i_2,i_3+1,i_4)$ & $\frac{1+r}{2ri_1+(1+r)(i_2+i_3+2)+2i_4}\cdot\frac{1}{2}\cdot\frac{i_1}{N-1}$  \\[2.5ex] 
  \hline
  \rule{0pt}{20pt}r & m & h in $\ell_1$ & $\frac{1+r}{2ri_1+(1+r)(i_2+i_3+2)+2i_4}$ & $\ell_1$ & m & r & l in $\ell_1$, $\ell_2$ & $\frac{i_2}{N-1}$ & $(0,1,1,0,i_1,$\par$i_2-1,i_3,i_4+1)$ & $\frac{1+r}{2ri_1+(1+r)(i_2+i_3+2)+2i_4}\cdot\frac{1}{2}\cdot\frac{i_2}{N-1}$ \\[2.5ex] 
  \hline
  \rule{0pt}{20pt}m/r & m & l in $\ell_2$ & $\frac{2ri_1+(1+r)(i_3+1)}{2ri_1+(1+r)(i_2+i_3+2)+2i_4}$ & $\ell_2$ & m & r & h in $\ell_2$ & $1$ & $(0,1,1,1,i_1,$\par$i_2,i_3,i_4)$ & $\frac{2ri_1+(1+r)(i_3+1)}{2ri_1+(1+r)(i_2+i_3+2)+2i_4}\cdot\frac{1}{2}$ \\[2.5ex] 
  \hline
  \rule{0pt}{20pt}m & r & h in $\ell_2$ & $\frac{1+r}{2ri_1+(1+r)(i_2+i_3+2)+2i_4}$ & $\ell_2$ & r & m & h in $\ell_1$ & $\frac{1}{N-1}$ & $(0,0,1,0,i_1,$\par$i_2,i_3,i_4)$ & $\frac{1+r}{2ri_1+(1+r)(i_2+i_3+2)+2i_4}\cdot\frac{1}{2}\cdot\frac{1}{N-1}$ \\[2.5ex] 
  \hline
  \rule{0pt}{20pt}m & r & h in $\ell_2$ & $\frac{1+r}{2ri_1+(1+r)(i_2+i_3+2)+2i_4}$ & $\ell_2$ & m & m & l in $\ell_1$, $\ell_2$ & $\frac{i_1}{N-1}$ & $(0,1,1,0,i_1-$\par$1,i_2+1,i_3,i_4)$ & $\frac{1+r}{2ri_1+(1+r)(i_2+i_3+2)+2i_4}\cdot\frac{1}{2}\cdot\frac{i_1}{N-1}$ \\[2.5ex] 
  \hline
  \rule{0pt}{20pt}m & r & h in $\ell_2$ & $\frac{1+r}{2ri_1+(1+r)(i_2+i_3+2)+2i_4}$ & $\ell_2$ & r & m & l in $\ell_1$, $\ell_2$ & $\frac{i_3}{N-1}$ & $(0,1,1,0,i_1,$\par$i_2,i_3-1,i_4+1)$ & $\frac{1+r}{2ri_1+(1+r)(i_2+i_3+2)+2i_4}\cdot\frac{1}{2}\cdot\frac{i_3}{N-1}$ \\[2.5ex] 
  \hline
\end{tabular}
\captionof{table}{\label{table-star-star-state10}Eight types of state transition from state $(0,1,1,0,i_1,i_2,i_3,i_4)$ under model 1 on the coupled star graph.}
\end{table}
 \newpage
Assume that the current state is $(0,1,0,1,i_1,i_2,i_3,i_4)$. There are seven types of events that can occur next.
%
%
Table~\ref{table-star-star-state11} shows the six types of events that accompany a state transition and their probabilities. If any other event than these six types occurs, the state remains unchanged. The probability of this case is 1 minus the sum of all entries in the last column of Table~\ref{table-star-star-state11}. 

\begin{table}[H]
\hspace*{-2.5em}
\begin{tabular}{ | w{c}{1.1em} | w{c}{1.1em} | M{2.5em} | c | w{c}{1.1em} | w{c}{1.1em} | w{c}{1.1em} | M{2.5em} | w{c}{1.3em} | M{7em} | M{10em} |}  
  \hline
  \multicolumn{4}{|c|}{parent} & \multirow{3}{*}{layer} & \multicolumn{4}{c|}{child} & \multirow{3}{*}{\shortstack{state after\\ transition}} & \multirow{3}{*}{transition probability}\\
  \cline{1-4}
  \cline{6-9}
    \multicolumn{2}{|c|}{type} & \multirow{2}{*}{h/l} & \multirow{2}{*}{probability} & & \multicolumn{2}{c|}{type} & \multirow{2}{*}{h/l} & \multirow{2}{*}{prob.} & & \\
   \cline{1-2}
   \cline{6-7}      
    $\ell_1$ & $\ell_2$ &  &  &  & $\ell_1$ & $\ell_2$ &  &  &  & \\
  \hline
  \rule{0pt}{20pt}m & m/r & l in $\ell_1$, $\ell_2$ & $\frac{2ri_1+(1+r)i_2}{2ri_1+(1+r)(i_2+i_3+2)+2i_4}$ & $\ell_1$ & r & m & h in $\ell_1$ & $1$ & $(1,1,0,1,i_1,$\par$i_2,i_3,i_4)$ & $\frac{2ri_1+(1+r)i_2}{2ri_1+(1+r)(i_2+i_3+2)+2i_4}\cdot\frac{1}{2}$  \\[2.5ex] 
  \hline
  \rule{0pt}{20pt}r & m & h in $\ell_1$ & $\frac{1+r}{2ri_1+(1+r)(i_2+i_3+2)+2i_4}$ & $\ell_1$ & m & m & l in $\ell_1$, $\ell_2$ & $\frac{i_1}{N-1}$ & $(0,1,0,1,i_1-$\par$1,i_2,i_3+1,i_4)$ & $\frac{1+r}{2ri_1+(1+r)(i_2+i_3+2)+2i_4}\cdot\frac{1}{2}\cdot\frac{i_1}{N-1}$  \\[2.5ex] 
  \hline
  \rule{0pt}{20pt}r & m & h in $\ell_1$ & $\frac{1+r}{2ri_1+(1+r)(i_2+i_3+2)+2i_4}$ & $\ell_1$ & m & r & l in $\ell_1$, $\ell_2$ & $\frac{i_2}{N-1}$ & $(0,1,0,1,i_1,$\par$i_2-1,i_3,i_4+1)$ & $\frac{1+r}{2ri_1+(1+r)(i_2+i_3+2)+2i_4}\cdot\frac{1}{2}\cdot\frac{i_2}{N-1}$  \\[2.5ex] 
  \hline
  \rule{0pt}{20pt}r & m & h in $\ell_2$ & $\frac{1+r}{2ri_1+(1+r)(i_2+i_3+2)+2i_4}$ & $\ell_2$ & m & r & l in $\ell_1$, $\ell_2$ & $\frac{i_2}{N-1}$ & $(0,1,0,1,i_1+$\par$1,i_2-1,i_3,i_4)$ & $\frac{1+r}{2ri_1+(1+r)(i_2+i_3+2)+2i_4}\cdot\frac{1}{2}\cdot\frac{i_2}{N-1}$ \\[2.5ex] 
  \hline
  \rule{0pt}{20pt}r & m & h in $\ell_2$ & $\frac{1+r}{2ri_1+(1+r)(i_2+i_3+2)+2i_4}$ & $\ell_2$ & r & r & l in $\ell_1$, $\ell_2$ & $\frac{i_4}{N-1}$ & $(0,1,0,1,i_1,$\par$i_2,i_3+1,i_4-1)$ & $\frac{1+r}{2ri_1+(1+r)(i_2+i_3+2)+2i_4}\cdot\frac{1}{2}\cdot\frac{i_4}{N-1}$ \\[2.5ex] 
  \hline
  \rule{0pt}{20pt}m/r & r & l in $\ell_1$, $\ell_2$ & $\frac{(1+r)i_2+2i_4}{2ri_1+(1+r)(i_2+i_3+2)+2i_4}$ & $\ell_2$ & r & m & h in $\ell_2$ & $1$ & $(0,1,0,0,i_1,$\par$i_2,i_3,i_4)$ & $\frac{(1+r)i_2+2i_4}{2ri_1+(1+r)(i_2+i_3+2)+2i_4}\cdot\frac{1}{2}$ \\[2.5ex] 
  \hline
\end{tabular}
\captionof{table}{\label{table-star-star-state11}Six types of state transition from state $(0,1,0,1,i_1,i_2,i_3,i_4)$ under model 1 on the coupled star graph.}
\end{table}
\newpage
Assume that the current state is $(0,1,0,0,i_1,i_2,i_3,i_4)$. There are eight types of events that can occur next.
%
%
Table~\ref{table-star-star-state12} shows the seven types of events that accompany a state transition and their probabilities. If any other event than these seven types occurs, the state remains unchanged. The probability of this case is 1 minus the sum of all entries in the last column of Table~\ref{table-star-star-state12}. 

\begin{table}[H]
\hspace*{-3.5em}
\begin{tabular}{ | w{c}{1.1em} | w{c}{1.1em} | M{2.5em} | c | w{c}{1.1em} | w{c}{1.1em} | w{c}{1.1em} | M{2.5em} | w{c}{1.3em} | M{7em} | M{11em} |}  
  \hline
  \multicolumn{4}{|c|}{parent} & \multirow{3}{*}{layer} & \multicolumn{4}{c|}{child} & \multirow{3}{*}{\shortstack{state after\\ transition}} & \multirow{3}{*}{transition probability}\\
  \cline{1-4}
  \cline{6-9}
    \multicolumn{2}{|c|}{type} & \multirow{2}{*}{h/l} & \multirow{2}{*}{probability} & & \multicolumn{2}{c|}{type} & \multirow{2}{*}{h/l} & \multirow{2}{*}{prob.} & & \\
   \cline{1-2}
   \cline{6-7}      
    $\ell_1$ & $\ell_2$ &  &  &  & $\ell_1$ & $\ell_2$ &  &  &  & \\
  \hline
  \rule{0pt}{20pt}m & m/r & l in $\ell_1$, $\ell_2$ & $\frac{2ri_1+(1+r)i_2}{2ri_1+(1+r)(i_2+i_3+1)+2(i_4+1)}$ & $\ell_1$ & r & m & h in $\ell_1$ & $1$ & $(1,1,0,0,i_1,$\par$i_2,i_3,i_4)$ & $\frac{2ri_1+(1+r)i_2}{2ri_1+(1+r)(i_2+i_3+1)+2(i_4+1)}\cdot\frac{1}{2}$  \\[2.5ex] 
  \hline
  \rule{0pt}{20pt}r & m & h in $\ell_1$ & $\frac{1+r}{2ri_1+(1+r)(i_2+i_3+1)+2(i_4+1)}$ & $\ell_1$ & m & m & l in $\ell_1$, $\ell_2$ & $\frac{i_1}{N-1}$ & $(0,1,0,0,i_1-$\par$1,i_2,i_3+1,i_4)$ & $\frac{1+r}{2ri_1+(1+r)(i_2+i_3+1)+2(i_4+1)}\cdot\frac{1}{2}\cdot\frac{i_1}{N-1}$  \\[2.5ex] 
  \hline
  \rule{0pt}{20pt}r & m & h in $\ell_1$ & $\frac{1+r}{2ri_1+(1+r)(i_2+i_3+1)+2(i_4+1)}$ & $\ell_1$ & m & r & l in $\ell_1$, $\ell_2$ & $\frac{i_2}{N-1}$ & $(0,1,0,0,i_1,$\par$i_2-1,i_3,i_4+1)$ & $\frac{1+r}{2ri_1+(1+r)(i_2+i_3+1)+2(i_4+1)}\cdot\frac{1}{2}\cdot\frac{i_2}{N-1}$  \\[2.5ex] 
  \hline
  \rule{0pt}{20pt}m/r & m & l in $\ell_2$ & $\frac{2ri_1+(1+r)(i_3+1)}{2ri_1+(1+r)(i_2+i_3+1)+2(i_4+1)}$ & $\ell_2$ & r & r & h in $\ell_2$ & $1$ & $(0,1,0,1,i_1,$\par$i_2,i_3,i_4)$ & $\frac{2ri_1+(1+r)(i_3+1)}{2ri_1+(1+r)(i_2+i_3+1)+2(i_4+1)}\cdot\frac{1}{2}$ \\[2.5ex] 
  \hline
  \rule{0pt}{20pt}r & r & h in $\ell_2$ & $\frac{2}{2ri_1+(1+r)(i_2+i_3+1)+2(i_4+1)}$ & $\ell_2$ & r & m & h in $\ell_1$ & $\frac{1}{N-1}$ & $(0,0,0,0,i_1,$\par$i_2,i_3,i_4)$ & $\frac{2}{2ri_1+(1+r)(i_2+i_3+1)+2(i_4+1)}\cdot\frac{1}{2}\cdot\frac{1}{N-1}$ \\[2.5ex] 
  \hline
  \rule{0pt}{20pt}r & r & h in $\ell_2$ & $\frac{2}{2ri_1+(1+r)(i_2+i_3+1)+2(i_4+1)}$ & $\ell_2$ & m & m & l in $\ell_1$, $\ell_2$ & $\frac{i_1}{N-1}$ & $(0,1,0,0,i_1-$\par$1,i_2+1,i_3,i_4)$ & $\frac{2}{2ri_1+(1+r)(i_2+i_3+1)+2(i_4+1)}\cdot\frac{1}{2}\cdot\frac{i_1}{N-1}$ \\[2.5ex] 
  \hline
  \rule{0pt}{20pt}r & r & h in $\ell_2$ & $\frac{2}{2ri_1+(1+r)(i_2+i_3+1)+2(i_4+1)}$ & $\ell_2$ & r & m & l in $\ell_1$, $\ell_2$ & $\frac{i_3}{N-1}$ & $(0,1,0,0,i_1,$\par$i_2,i_3-1,i_4+1)$ & $\frac{2}{2ri_1+(1+r)(i_2+i_3+1)+2(i_4+1)}\cdot\frac{1}{2}\cdot\frac{i_3}{N-1}$ \\[2.5ex] 
  \hline
\end{tabular}
\captionof{table}{\label{table-star-star-state12}Seven types of state transition from state $(0,1,0,0,i_1,i_2,i_3,i_4)$ under model 1 on the coupled star graph.}
\end{table}
\newpage
Assume that the current state is $(0,0,1,1,i_1,i_2,i_3,i_4)$. There are nine types of events that can occur next.
%
%
Table~\ref{table-star-star-state13} shows the eight types of events that accompany a state transition and their probabilities. If any other event than these eight types occurs, the state remains unchanged. The probability of this case is 1 minus the sum of all entries in the last column of Table~\ref{table-star-star-state13}. 

\begin{table}[H]
\hspace*{-4.5em}
\begin{tabular}{ | w{c}{1.1em} | w{c}{1.1em} | M{2.5em} | c | w{c}{1.1em} | w{c}{1.1em} | w{c}{1.1em} | M{2.5em} | w{c}{1.3em} | M{7em} | M{11.6em} |}  
  \hline
  \multicolumn{4}{|c|}{parent} & \multirow{3}{*}{layer} & \multicolumn{4}{c|}{child} & \multirow{3}{*}{\shortstack{state after\\ transition}} & \multirow{3}{*}{transition probability}\\
  \cline{1-4}
  \cline{6-9}
    \multicolumn{2}{|c|}{type} & \multirow{2}{*}{h/l} & \multirow{2}{*}{probability} & & \multicolumn{2}{c|}{type} & \multirow{2}{*}{h/l} & \multirow{2}{*}{prob.} & & \\
   \cline{1-2}
   \cline{6-7}      
    $\ell_1$ & $\ell_2$ &  &  &  & $\ell_1$ & $\ell_2$ &  &  &  & \\
  \hline
  \rule{0pt}{20pt}m & m/r & l in $\ell_1$ & $\frac{2r(i_1+1)+(1+r)i_2}{2r(i_1+1)+(1+r)(i_2+i_3)+2(i_4+1)}$ & $\ell_1$ & r & r & h in $\ell_1$ & $1$ & $(1,0,1,1,i_1,$\par$i_2,i_3,i_4)$ & $\frac{2r(i_1+1)+(1+r)i_2}{2r(i_1+1)+(1+r)(i_2+i_3)+2(i_4+1)}\cdot\frac{1}{2}$  \\[2.5ex] 
  \hline
  \rule{0pt}{20pt}r & r & h in $\ell_1$ & $\frac{2}{2r(i_1+1)+(1+r)(i_2+i_3)+2(i_4+1)}$ & $\ell_1$ & m & m & h in $\ell_2$ & $\frac{1}{N-1}$ & $(0,0,0,1,i_1,$\par$i_2,i_3,i_4)$ & $\frac{2}{2r(i_1+1)+(1+r)(i_2+i_3)+2(i_4+1)}\cdot\frac{1}{2}\cdot\frac{1}{N-1}$  \\[2.5ex] 
  \hline
  \rule{0pt}{20pt}r & r & h in $\ell_1$ & $\frac{2}{2r(i_1+1)+(1+r)(i_2+i_3)+2(i_4+1)}$ & $\ell_1$ & m & m & l in $\ell_1$, $\ell_2$ & $\frac{i_1}{N-1}$ & $(0,0,1,1,i_1-$\par$1,i_2,i_3+1,i_4)$ & $\frac{2}{2r(i_1+1)+(1+r)(i_2+i_3)+2(i_4+1)}\cdot\frac{1}{2}\cdot\frac{i_1}{N-1}$ \\[2.5ex] 
  \hline
  \rule{0pt}{20pt}r & r & h in $\ell_1$ & $\frac{2}{2r(i_1+1)+(1+r)(i_2+i_3)+2(i_4+1)}$ & $\ell_1$ & m & r & l in $\ell_1$, $\ell_2$ & $\frac{i_2}{N-1}$ & $(0,0,1,1,i_1,$\par$i_2-1,i_3,i_4+1)$ & $\frac{2}{2r(i_1+1)+(1+r)(i_2+i_3)+2(i_4+1)}\cdot\frac{1}{2}\cdot\frac{i_2}{N-1}$ \\[2.5ex] 
  \hline
  \rule{0pt}{20pt}m & m & h in $\ell_2$ & $\frac{2r}{2r(i_1+1)+(1+r)(i_2+i_3)+2(i_4+1)}$ & $\ell_2$ & r & r & h in $\ell_1$ & $\frac{1}{N-1}$ & $(0,1,1,1,i_1,$\par$i_2,i_3,i_4)$ & $\frac{2r}{2r(i_1+1)+(1+r)(i_2+i_3)+2(i_4+1)}\cdot\frac{1}{2}\cdot\frac{1}{N-1}$ \\[2.5ex] 
  \hline
  \rule{0pt}{20pt}m & m & h in $\ell_2$ & $\frac{2r}{2r(i_1+1)+(1+r)(i_2+i_3)+2(i_4+1)}$ & $\ell_2$ & m & r & l in $\ell_1$, $\ell_2$ & $\frac{i_2}{N-1}$ & $(0,0,1,1,i_1+$\par$1,i_2-1,i_3,i_4)$ & $\frac{2r}{2r(i_1+1)+(1+r)(i_2+i_3)+2(i_4+1)}\cdot\frac{1}{2}\cdot\frac{i_2}{N-1}$ \\[2.5ex] 
  \hline
  \rule{0pt}{20pt}m & m & h in $\ell_2$ & $\frac{2r}{2r(i_1+1)+(1+r)(i_2+i_3)+2(i_4+1)}$ & $\ell_2$ & r & m & l in $\ell_1$, $\ell_2$ & $\frac{i_3}{N-1}$ & $(0,0,1,1,i_1,$\par$i_2,i_3-1,i_4+1)$ & $\frac{2r}{2r(i_1+1)+(1+r)(i_2+i_3)+2(i_4+1)}\cdot\frac{1}{2}\cdot\frac{i_3}{N-1}$ \\[2.5ex] 
  \hline
  \rule{0pt}{20pt}m/r & r & l in $\ell_2$ & $\frac{(1+r)i_2+2(i_4+1)}{2r(i_1+1)+(1+r)(i_2+i_3)+2(i_4+1)}$ & $\ell_2$ & m & m & h in $\ell_2$ & $1$ & $(0,0,1,0,i_1,$\par$i_2,i_3,i_4)$ & $\frac{(1+r)i_2+2(i_4+1)}{2r(i_1+1)+(1+r)(i_2+i_3)+2(i_4+1)}\cdot\frac{1}{2}$ \\[2.5ex] 
  \hline
\end{tabular}
\captionof{table}{\label{table-star-star-state13}Eight types of state transition from state $(0,0,1,1,i_1,i_2,i_3,i_4)$ under model 1 on the coupled star graph.}
\end{table}
\newpage
Assume that the current state is $(0,0,1,0,i_1,i_2,i_3,i_4)$. There are eight types of events that can occur next.
%
%
Table~\ref{table-star-star-state14} shows the seven types of events that accompany a state transition and their probabilities. If any other event than these seven types occurs, the state remains unchanged. The probability of this case is 1 minus the sum of all entries in the last column of Table~\ref{table-star-star-state14}. 

\begin{table}[H]
\hspace*{-4em}
\begin{tabular}{ | w{c}{1.1em} | w{c}{1.1em} | M{2.5em} | c | w{c}{1.1em} | w{c}{1.1em} | w{c}{1.1em} | M{2.5em} | w{c}{1.3em} | M{7em} | M{11em} |}  
  \hline
  \multicolumn{4}{|c|}{parent} & \multirow{3}{*}{layer} & \multicolumn{4}{c|}{child} & \multirow{3}{*}{\shortstack{state after\\ transition}} & \multirow{3}{*}{transition probability}\\
  \cline{1-4}
  \cline{6-9}
    \multicolumn{2}{|c|}{type} & \multirow{2}{*}{h/l} & \multirow{2}{*}{probability} & & \multicolumn{2}{c|}{type} & \multirow{2}{*}{h/l} & \multirow{2}{*}{prob.} & & \\
   \cline{1-2}
   \cline{6-7}      
    $\ell_1$ & $\ell_2$ &  &  &  & $\ell_1$ & $\ell_2$ &  &  &  & \\
  \hline
  \rule{0pt}{20pt}m & m/r & l in $\ell_1$ & $\frac{2ri_1+(1+r)(i_2+1)}{2ri_1+(1+r)(i_2+i_3+1)+2(i_4+1)}$ & $\ell_1$ & r & r & h in $\ell_1$ & $1$ & $(1,0,1,0,i_1,$\par$i_2,i_3,i_4)$ & $\frac{2ri_1+(1+r)(i_2+1)}{2ri_1+(1+r)(i_2+i_3+1)+2(i_4+1)}\cdot\frac{1}{2}$  \\[2.5ex] 
  \hline
  \rule{0pt}{20pt}r & r & h in $\ell_1$ & $\frac{2}{2ri_1+(1+r)(i_2+i_3+1)+2(i_4+1)}$ & $\ell_1$ & m & r & h in $\ell_2$ & $\frac{1}{N-1}$ & $(0,0,0,0,i_1,$\par$i_2,i_3,i_4)$ & $\frac{2}{2ri_1+(1+r)(i_2+i_3+1)+2(i_4+1)}\cdot\frac{1}{2}\cdot\frac{1}{N-1}$  \\[2.5ex] 
  \hline
  \rule{0pt}{20pt}r & r & h in $\ell_1$ & $\frac{2}{2ri_1+(1+r)(i_2+i_3+1)+2(i_4+1)}$ & $\ell_1$ & m & m & l in $\ell_1$, $\ell_2$ & $\frac{i_1}{N-1}$ & $(0,0,1,0,i_1-$\par$1,i_2,i_3+1,i_4)$ & $\frac{2}{2ri_1+(1+r)(i_2+i_3+1)+2(i_4+1)}\cdot\frac{1}{2}\cdot\frac{i_1}{N-1}$  \\[2.5ex] 
  \hline
  \rule{0pt}{20pt}r & r & h in $\ell_1$ & $\frac{2}{2ri_1+(1+r)(i_2+i_3+1)+2(i_4+1)}$ & $\ell_1$ & m & r & l in $\ell_1$, $\ell_2$ & $\frac{i_2}{N-1}$ & $(0,0,1,0,i_1,$\par$i_2-1,i_3,i_4+1)$ & $\frac{2}{2ri_1+(1+r)(i_2+i_3+1)+2(i_4+1)}\cdot\frac{1}{2}\cdot\frac{i_2}{N-1}$ \\[2.5ex] 
  \hline
  \rule{0pt}{20pt}m/r & m & l in $\ell_1$, $\ell_2$ & $\frac{2ri_1+(1+r)i_3}{2ri_1+(1+r)(i_2+i_3+1)+2(i_4+1)}$ & $\ell_2$ & m & r & h in $\ell_2$ & $1$ & $(0,0,1,1,i_1,$\par$i_2,i_3,i_4)$ & $\frac{2ri_1+(1+r)i_3}{2ri_1+(1+r)(i_2+i_3+1)+2(i_4+1)}\cdot\frac{1}{2}$ \\[2.5ex] 
  \hline
  \rule{0pt}{20pt}m & r & h in $\ell_2$ & $\frac{1+r}{2ri_1+(1+r)(i_2+i_3+1)+2(i_4+1)}$ & $\ell_2$ & m & m & l in $\ell_1$, $\ell_2$ & $\frac{i_1}{N-1}$ & $(0,0,1,0,i_1-$\par$1,i_2+1,i_3,i_4)$ & $\frac{1+r}{2ri_1+(1+r)(i_2+i_3+1)+2(i_4+1)}\cdot\frac{1}{2}\cdot\frac{i_1}{N-1}$ \\[2.5ex] 
  \hline
  \rule{0pt}{20pt}m & r & h in $\ell_2$ & $\frac{1+r}{2ri_1+(1+r)(i_2+i_3+1)+2(i_4+1)}$ & $\ell_2$ & r & m & l in $\ell_1$, $\ell_2$ & $\frac{i_3}{N-1}$ & $(0,0,1,0,i_1,$\par$i_2,i_3-1,i_4+1)$ & $\frac{1+r}{2ri_1+(1+r)(i_2+i_3+1)+2(i_4+1)}\cdot\frac{1}{2}\cdot\frac{i_3}{N-1}$ \\[2.5ex] 
  \hline
\end{tabular}
\captionof{table}{\label{table-star-star-state14}Seven types of state transition from state $(0,0,1,0,i_1,i_2,i_3,i_4)$ under model 1 on the coupled star graph.}
\end{table}
\newpage
Assume that the current state is $(0,0,0,1,i_1,i_2,i_3,i_4)$. There are eight types of events that can occur next.
%
%
Table~\ref{table-star-star-state15} shows the seven types of events that accompany a state transition and their probabilities. If any other event than these seven types occurs, the state remains unchanged. The probability of this case is 1 minus the sum of all entries in the last column of Table~\ref{table-star-star-state15}. 

\begin{table}[H]
\hspace*{-4em}
\begin{tabular}{ | w{c}{1.1em} | w{c}{1.1em} | M{2.5em} | c | w{c}{1.1em} | w{c}{1.1em} | w{c}{1.1em} | M{2.5em} | w{c}{1.3em} | M{7em} | M{11em} |}  
  \hline
  \multicolumn{4}{|c|}{parent} & \multirow{3}{*}{layer} & \multicolumn{4}{c|}{child} & \multirow{3}{*}{\shortstack{state after\\ transition}} & \multirow{3}{*}{transition probability}\\
  \cline{1-4}
  \cline{6-9}
    \multicolumn{2}{|c|}{type} & \multirow{2}{*}{h/l} & \multirow{2}{*}{probability} & & \multicolumn{2}{c|}{type} & \multirow{2}{*}{h/l} & \multirow{2}{*}{prob.} & & \\
   \cline{1-2}
   \cline{6-7}      
    $\ell_1$ & $\ell_2$ &  &  &  & $\ell_1$ & $\ell_2$ &  &  &  & \\
  \hline
  \rule{0pt}{20pt}m & m/r & l in $\ell_1$, $\ell_2$ & $\frac{2ri_1+(1+r)i_2}{2ri_1+(1+r)(i_2+i_3+1)+2(i_4+1)}$ & $\ell_1$ & r & r & h in $\ell_1$ & $1$ & $(1,0,0,1,i_1,$\par$i_2,i_3,i_4)$ & $\frac{2ri_1+(1+r)i_2}{2ri_1+(1+r)(i_2+i_3+1)+2(i_4+1)}\cdot\frac{1}{2}$  \\[2.5ex] 
  \hline
  \rule{0pt}{20pt}r & r & h in $\ell_1$ & $\frac{2}{2ri_1+(1+r)(i_2+i_3+1)+2(i_4+1)}$ & $\ell_1$ & m & m & l in $\ell_1$, $\ell_2$ & $\frac{i_1}{N-1}$ & $(0,0,0,1,i_1-$\par$1,i_2,i_3+1,i_4)$ & $\frac{2}{2ri_1+(1+r)(i_2+i_3+1)+2(i_4+1)}\cdot\frac{1}{2}\cdot\frac{i_1}{N-1}$  \\[2.5ex] 
  \hline
  \rule{0pt}{20pt}r & r & h in $\ell_1$ & $\frac{2}{2ri_1+(1+r)(i_2+i_3+1)+2(i_4+1)}$ & $\ell_1$ & m & r & l in $\ell_1$, $\ell_2$ & $\frac{i_2}{N-1}$ & $(0,0,0,1,i_1,$\par$i_2-1,i_3,i_4+1)$ & $\frac{2}{2ri_1+(1+r)(i_2+i_3+1)+2(i_4+1)}\cdot\frac{1}{2}\cdot\frac{i_2}{N-1}$  \\[2.5ex] 
  \hline
  \rule{0pt}{20pt}r & m & h in $\ell_2$ & $\frac{1+r}{2ri_1+(1+r)(i_2+i_3+1)+2(i_4+1)}$ & $\ell_2$ & r & r & h in $\ell_1$ & $\frac{1}{N-1}$ & $(0,1,0,1,i_1,$\par$i_2,i_3,i_4)$ & $\frac{1+r}{2ri_1+(1+r)(i_2+i_3+1)+2(i_4+1)}\cdot\frac{1}{2}\cdot\frac{1}{N-1}$ \\[2.5ex] 
  \hline
  \rule{0pt}{20pt}r & m & h in $\ell_2$ & $\frac{1+r}{2ri_1+(1+r)(i_2+i_3+1)+2(i_4+1)}$ & $\ell_2$ & m & r & l in $\ell_1$, $\ell_2$ & $\frac{i_2}{N-1}$ & $(0,0,0,1,i_1+$\par$1,i_2-1,i_3,i_4)$ & $\frac{1+r}{2ri_1+(1+r)(i_2+i_3+1)+2(i_4+1)}\cdot\frac{1}{2}\cdot\frac{i_2}{N-1}$ \\[2.5ex] 
  \hline
  \rule{0pt}{20pt}r & m & h in $\ell_2$ & $\frac{1+r}{2ri_1+(1+r)(i_2+i_3+1)+2(i_4+1)}$ & $\ell_2$ & r & r & l in $\ell_1$, $\ell_2$ & $\frac{i_4}{N-1}$ & $(0,0,0,1,i_1,$\par$i_2,i_3+1,i_4-1)$ & $\frac{1+r}{2ri_1+(1+r)(i_2+i_3+1)+2(i_4+1)}\cdot\frac{1}{2}\cdot\frac{i_4}{N-1}$ \\[2.5ex] 
  \hline
  \rule{0pt}{20pt}m/r & r & l in $\ell_2$ & $\frac{(1+r)i_2+2(i_4+1)}{2ri_1+(1+r)(i_2+i_3+1)+2(i_4+1)}$ & $\ell_2$ & r & m & h in $\ell_2$ & $1$ & $(0,0,0,0,i_1,$\par$i_2,i_3,i_4)$ & $\frac{(1+r)i_2+2(i_4+1)}{2ri_1+(1+r)(i_2+i_3+1)+2(i_4+1)}\cdot\frac{1}{2}$ \\[2.5ex] 
  \hline
\end{tabular}
\captionof{table}{\label{table-star-star-state15}Seven types of state transition from state $(0,0,0,1,i_1,i_2,i_3,i_4)$ under model 1 on the coupled star graph.}
\end{table}
\newpage
Assume that the current state is $(0,0,0,0,i_1,i_2,i_3,i_4)$. There are seven types of events that can occur next.
%
%
Table~\ref{table-star-star-state16} shows the six types of events that accompany a state transition and their probabilities. If any other event than these six types occurs, the state remains unchanged. The probability of this case is 1 minus the sum of all entries in the last column of Table~\ref{table-star-star-state16}. 

\begin{table}[H]
\hspace*{-2.5em}
\begin{tabular}{ | w{c}{1.1em} | w{c}{1.1em} | M{2.5em} | c | w{c}{1.1em} | w{c}{1.1em} | w{c}{1.1em} | M{2.5em} | w{c}{1.3em} | M{7em} | M{10em} |} 
  \hline
  \multicolumn{4}{|c|}{parent} & \multirow{3}{*}{layer} & \multicolumn{4}{c|}{child} & \multirow{3}{*}{\shortstack{state after\\ transition}} & \multirow{3}{*}{transition probability}\\
  \cline{1-4}
  \cline{6-9}
    \multicolumn{2}{|c|}{type} & \multirow{2}{*}{h/l} & \multirow{2}{*}{probability} & & \multicolumn{2}{c|}{type} & \multirow{2}{*}{h/l} & \multirow{2}{*}{prob.} & & \\
   \cline{1-2}
   \cline{6-7}      
    $\ell_1$ & $\ell_2$ &  &  &  & $\ell_1$ & $\ell_2$ &  &  &  & \\
  \hline
  \rule{0pt}{20pt}m & m/r & l in $\ell_1$, $\ell_2$ & $\frac{2ri_1+(1+r)i_2}{2ri_1+(1+r)(i_2+i_3)+2(i_4+2)}$ & $\ell_1$ & r & r & h in $\ell_1$ & $1$ & $(1,0,0,0,i_1,$\par$i_2,i_3,i_4)$ & $\frac{2ri_1+(1+r)i_2}{2ri_1+(1+r)(i_2+i_3)+2(i_4+2)}\cdot\frac{1}{2}$  \\[2.5ex] 
  \hline
  \rule{0pt}{20pt}r & r & h in $\ell_1$ & $\frac{2}{2ri_1+(1+r)(i_2+i_3)+2(i_4+2)}$ & $\ell_1$ & m & m & l in $\ell_1$, $\ell_2$ & $\frac{i_1}{N-1}$ & $(0,0,0,0,i_1-$\par$1,i_2,i_3+1,i_4)$ & $\frac{2}{2ri_1+(1+r)(i_2+i_3)+2(i_4+2)}\cdot\frac{1}{2}\cdot\frac{i_1}{N-1}$  \\[2.5ex] 
  \hline
  \rule{0pt}{20pt}r & r & h in $\ell_1$ & $\frac{2}{2ri_1+(1+r)(i_2+i_3)+2(i_4+2)}$ & $\ell_1$ & m & r & l in $\ell_1$, $\ell_2$ & $\frac{i_2}{N-1}$ & $(0,0,0,0,i_1,$\par$i_2-1,i_3,i_4+1)$ & $\frac{2}{2ri_1+(1+r)(i_2+i_3)+2(i_4+2)}\cdot\frac{1}{2}\cdot\frac{i_2}{N-1}$  \\[2.5ex] 
  \hline
  \rule{0pt}{20pt}m/r & m & l in $\ell_1$, $\ell_2$ & $\frac{2ri_1+(1+r)i_3}{2ri_1+(1+r)(i_2+i_3)+2(i_4+2)}$ & $\ell_2$ & r & r & h in $\ell_2$ & $1$ & $(0,0,0,1,i_1,$\par$i_2,i_3,i_4)$ & $\frac{2ri_1+(1+r)i_3}{2ri_1+(1+r)(i_2+i_3)+2(i_4+2)}\cdot\frac{1}{2}$ \\[2.5ex] 
  \hline
  \rule{0pt}{20pt}r & r & h in $\ell_2$ & $\frac{2}{2ri_1+(1+r)(i_2+i_3)+2(i_4+2)}$ & $\ell_2$ & m & m & l in $\ell_1$, $\ell_2$ & $\frac{i_1}{N-1}$ & $(0,0,0,0,i_1-$\par$1,i_2+1,i_3,i_4)$ & $\frac{2}{2ri_1+(1+r)(i_2+i_3)+2(i_4+2)}\cdot\frac{1}{2}\cdot\frac{i_1}{N-1}$ \\[2.5ex] 
  \hline
  \rule{0pt}{20pt}r & r & h in $\ell_2$ & $\frac{2}{2ri_1+(1+r)(i_2+i_3)+2(i_4+2)}$ & $\ell_2$ & r & m & l in $\ell_1$, $\ell_2$ & $\frac{i_3}{N-1}$ & $(0,0,0,0,i_1,$\par$i_2,i_3-1,i_4+1)$ & $\frac{2}{2ri_1+(1+r)(i_2+i_3)+2(i_4+2)}\cdot\frac{1}{2}\cdot\frac{i_3}{N-1}$ \\[2.5ex] 
  \hline
\end{tabular}
\captionof{table}{\label{table-star-star-state16}Six types of state transition from state $(0,0,0,0,i_1,i_2,i_3,i_4)$ under model 1 on the coupled star graph.}
\end{table}
\newpage
\section{Derivation of the transition probability matrix for model 1 on two-layer networks composed of a complete graph layer and a complete bipartite network layer\label{complete-bi-m1-derivation}}

Denote the state of the evolutionary dynamics by $(i_1,i_2,i_3,i_4,j_1,j_2,j_3,j_4)$, i.e., an 8-tuple, where $i_1$ is the number of individuals in $V_1$ that have the mutant type in both layers; $i_2$ is the number of individuals in $V_1$ that have the mutant type in layer 1 and the resident type in layer 2; $i_3$ is the number of individuals in $V_1$ that have the resident type in layer 1 and the mutant type in layer 2; $i_4$ is the number of individuals in $V_1$ that have the resident type in both layers. We obtain $i_k\in \{0, 1, \ldots, N_1\}$, $\forall k\in \{1, 2, 3, 4\}$ and $i_1+i_2+i_3+i_4=N_1$. Similarly, $j_1$ is the number of individuals in $V_2$ that have the mutant type in both layers; $j_2$ is the number of individuals in $V_2$ that have the mutant type in layer 1 and the resident type in layer 2; $j_3$ is the number of individuals in $V_2$ that have the resident type in layer 1 and the mutant type in layer 2; $j_4$ is the number of individuals in $V_2$ that have the resident type in both layers. We obtain $j_k\in \{0, 1, \ldots, N_2\}$, $\forall k\in \{1, 2, 3, 4\}$ and $j_1+j_2+j_3+j_4=N_2$. There are $\binom{N_1+3}{3}\binom{N_2+3}{3}$ states in total.

Assume that the current state is $(i_1,i_2,i_3,i_4,j_1,j_2,j_3,j_4)$. There are 17 types of events that can occur next.

In the first type of event, an individual that has the mutant type in layer 1 is selected as the parent, which occurs with probability $[2r(i_1+j_1)+(1+r)(i_2+j_2)]/[2r(i_1+j_1)+(1+r)(i_2+i_3+j_2+j_3)+2(i_4+j_4)]$, and the layer 1 is selected with probability $1/2$. Then, we select a neighbor of the parent in layer 1 as the child, and the child that is a node in $V_1$ in layer 2 has the resident type in layer 1 and the mutant type in layer 2, which occurs with probability $i_3/(N-1)$. The state after this event is $(i_1+1,i_2,i_3-1,i_4,j_1,j_2,j_3,j_4)$. Therefore, we obtain
\begin{align}
&p_{(i_1,i_2,i_3,i_4,j_1,j_2,j_3,j_4) \to (i_1+1,i_2,i_3-1,i_4,j_1,j_2,j_3,j_4)}=\nonumber\\ 
&\quad\quad\quad\quad\quad\quad\frac{2r(i_1+j_1)+(1+r)(i_2+j_2)}{2r(i_1+j_1)+(1+r)(i_2+i_3+j_2+j_3)+2(i_4+j_4)}\cdot\frac{1}{2}\cdot\frac{i_3}{N-1}.
\label{eq:p1-complete-bi-state}
\end{align}
The first row of Table~\ref{table-complete-bipartite} (except the header rows) represents this state transition event and Eq.~\eqref{eq:p1-complete-bi-state}.
The remaining 15 types of events are listed in Table~\ref{table-complete-bipartite}. If any other event than the 16 types shown in Table~\ref{table-complete-bipartite} occurs, the state remains unchanged. The probability of this case is 1 minus the sum of all entries in the last column of Table~\ref{table-complete-bipartite}. 

\begin{table}[H]
\hspace*{-2.8em}
\footnotesize
\begin{tabular}{ | w{c}{1.1em} | w{c}{1.1em} | w{c}{1.5em} | w{c}{12em} | w{c}{1.1em} | w{c}{1.1em} | w{c}{1.1em} | w{c}{1.5em} | w{c}{1.3em} | M{7.7em} | M{16.5em} |} 
  \hline
  \multicolumn{4}{|c|}{parent} & \multirow{3}{*}{layer} & \multicolumn{4}{c|}{child} & \multirow{3}{*}{\shortstack{state after\\ transition}} & \multirow{3}{*}{transition probability}\\
  \cline{1-4}
  \cline{6-9}
    \multicolumn{2}{|c|}{type} & \multirow{2}{*}{$V_1$/$V_2$} & \multirow{2}{*}{probability} & & \multicolumn{2}{c|}{type} & \multirow{2}{*}{$V_1$/$V_2$} & \multirow{2}{*}{prob.} & & \\
   \cline{1-2}
   \cline{6-7}      
    $\ell_1$ & $\ell_2$ &  &  &  & $\ell_1$ & $\ell_2$ &  &  &  & \\
  \hline
  \rule{0pt}{13pt}m & m/r & $V_1$/$V_2$ & $\frac{2r(i_1+j_1)+(1+r)(i_2+j_2)}{2r(i_1+j_1)+(1+r)c+2(i_4+j_4)}$ & $\ell_1$ & r & m & $V_1$ & $\frac{i_3}{N-1}$ & $(i_1+1,i_2,i_3-1,i_4,j_1,j_2,j_3,j_4)$ & $\frac{2r(i_1+j_1)+(1+r)(i_2+j_2)}{2r(i_1+j_1)+(1+r)c+2(i_4+j_4)}\cdot\frac{1}{2}\cdot\frac{i_3}{N-1}$  \\[1.8ex] 
  \hline
  \rule{0pt}{13pt}m & m/r & $V_1$/$V_2$ & $\frac{2r(i_1+j_1)+(1+r)(i_2+j_2)}{2r(i_1+j_1)+(1+r)c+2(i_4+j_4)}$ & $\ell_1$ & r & r & $V_1$ & $\frac{i_4}{N-1}$ & $(i_1,i_2+1,i_3,i_4-1,j_1,j_2,j_3,j_4)$ & $\frac{2r(i_1+j_1)+(1+r)(i_2+j_2)}{2r(i_1+j_1)+(1+r)c+2(i_4+j_4)}\cdot\frac{1}{2}\cdot\frac{i_4}{N-1}$  \\[1.8ex] 
  \hline
  \rule{0pt}{13pt}m & m/r & $V_1$/$V_2$ & $\frac{2r(i_1+j_1)+(1+r)(i_2+j_2)}{2r(i_1+j_1)+(1+r)c+2(i_4+j_4)}$ & $\ell_1$ & r & m & $V_2$ & $\frac{j_3}{N-1}$ & $(i_1,i_2,i_3,i_4,j_1+1,j_2,j_3-1,j_4)$ & $\frac{2r(i_1+j_1)+(1+r)(i_2+j_2)}{2r(i_1+j_1)+(1+r)c+2(i_4+j_4)}\cdot\frac{1}{2}\cdot\frac{j_3}{N-1}$ \\[1.8ex] 
  \hline
  \rule{0pt}{13pt}m & m/r & $V_1$/$V_2$ & $\frac{2r(i_1+j_1)+(1+r)(i_2+j_2)}{2r(i_1+j_1)+(1+r)c+2(i_4+j_4)}$ & $\ell_1$ & r & r & $V_2$ & $\frac{j_4}{N-1}$ & $(i_1,i_2,i_3,i_4,j_1,$\par$j_2+1,j_3,j_4-1)$ & $\frac{2r(i_1+j_1)+(1+r)(i_2+j_2)}{2r(i_1+j_1)+(1+r)c+2(i_4+j_4)}\cdot\frac{1}{2}\cdot\frac{j_4}{N-1}$ \\[1.8ex] 
  \hline
  \rule{0pt}{13pt}r & m/r & $V_1$/$V_2$ & $\frac{(1+r)(i_3+j_3)+2(i_4+j_4)}{2r(i_1+j_1)+(1+r)c+2(i_4+j_4)}$ & $\ell_1$ & m & m & $V_1$ & $\frac{i_1}{N-1}$ & $(i_1-1,i_2,i_3+1,i_4,j_1,j_2,j_3,j_4)$ & $\frac{(1+r)(i_3+j_3)+2(i_4+j_4)}{2r(i_1+j_1)+(1+r)c+2(i_4+j_4)}\cdot\frac{1}{2}\cdot\frac{i_1}{N-1}$ \\[1.8ex] 
  \hline
  \rule{0pt}{13pt}r & m/r & $V_1$/$V_2$ & $\frac{(1+r)(i_3+j_3)+2(i_4+j_4)}{2r(i_1+j_1)+(1+r)c+2(i_4+j_4)}$ & $\ell_1$ & m & r & $V_1$ & $\frac{i_2}{N-1}$ & $(i_1,i_2-1,i_3,i_4+1,j_1,j_2,j_3,j_4)$ & $\frac{(1+r)(i_3+j_3)+2(i_4+j_4)}{2r(i_1+j_1)+(1+r)c+2(i_4+j_4)}\cdot\frac{1}{2}\cdot\frac{i_2}{N-1}$ \\[1.8ex] 
  \hline
  \rule{0pt}{13pt}r & m/r & $V_1$/$V_2$ & $\frac{(1+r)(i_3+j_3)+2(i_4+j_4)}{2r(i_1+j_1)+(1+r)c+2(i_4+j_4)}$ & $\ell_1$ & m & m & $V_2$ & $\frac{j_1}{N-1}$ & $(i_1,i_2,i_3,i_4,j_1-1,j_2,j_3+1,j_4)$ & $\frac{(1+r)(i_3+j_3)+2(i_4+j_4)}{2r(i_1+j_1)+(1+r)c+2(i_4+j_4)}\cdot\frac{1}{2}\cdot\frac{j_1}{N-1}$ \\[1.8ex] 
  \hline
  \rule{0pt}{13pt}r & m/r & $V_1$/$V_2$ & $\frac{(1+r)(i_3+j_3)+2(i_4+j_4)}{2r(i_1+j_1)+(1+r)c+2(i_4+j_4)}$ & $\ell_1$ & m & r & $V_2$ & $\frac{j_2}{N-1}$ & $(i_1,i_2,i_3,i_4,j_1,$\par$j_2-1,j_3,j_4+1)$ & $\frac{(1+r)(i_3+j_3)+2(i_4+j_4)}{2r(i_1+j_1)+(1+r)c+2(i_4+j_4)}\cdot\frac{1}{2}\cdot\frac{j_2}{N-1}$ \\[1.8ex]  
  \hline
  \rule{0pt}{13pt}m/r & m & $V_1$ & $\frac{2ri_1+(1+r)i_3}{2r(i_1+j_1)+(1+r)c+2(i_4+j_4)}$ & $\ell_2$ & m & r & $V_2$ & $\frac{j_2}{N_2}$ & $(i_1,i_2,i_3,i_4,j_1+1,j_2-1,j_3,j_4)$ & $\frac{2ri_1+(1+r)i_3}{2r(i_1+j_1)+(1+r)c+2(i_4+j_4)}\cdot\frac{1}{2}\cdot\frac{j_2}{N_2}$ \\[1.8ex]  
  \hline
  \rule{0pt}{13pt}m/r & m & $V_1$ & $\frac{2ri_1+(1+r)i_3}{2r(i_1+j_1)+(1+r)c+2(i_4+j_4)}$ & $\ell_2$ & r & r & $V_2$ & $\frac{j_4}{N_2}$ & $(i_1,i_2,i_3,i_4,j_1,j_2,$\par$j_3+1,j_4-1)$ & $\frac{2ri_1+(1+r)i_3}{2r(i_1+j_1)+(1+r)c+2(i_4+j_4)}\cdot\frac{1}{2}\cdot\frac{j_4}{N_2}$ \\[1.8ex]  
  \hline
  \rule{0pt}{13pt}m/r & m & $V_2$ & $\frac{2rj_1+(1+r)j_3}{2r(i_1+j_1)+(1+r)c+2(i_4+j_4)}$ & $\ell_2$ & m & r & $V_1$ & $\frac{i_2}{N_1}$ & $(i_1+1,i_2-1,$\par$i_3,i_4,j_1,j_2,j_3,j_4)$ & $\frac{2rj_1+(1+r)j_3}{2r(i_1+j_1)+(1+r)c+2(i_4+j_4)}\cdot\frac{1}{2}\cdot\frac{i_2}{N_1}$ \\[1.8ex]  
  \hline
  \rule{0pt}{13pt}m/r & m & $V_2$ & $\frac{2rj_1+(1+r)j_3}{2r(i_1+j_1)+(1+r)c+2(i_4+j_4)}$ & $\ell_2$ & r & r & $V_1$ & $\frac{i_4}{N_1}$ & $(i_1,i_2,i_3+1,i_4-1,j_1,j_2,j_3,j_4)$ & $\frac{2rj_1+(1+r)j_3}{2r(i_1+j_1)+(1+r)c+2(i_4+j_4)}\cdot\frac{1}{2}\cdot\frac{i_4}{N_1}$ \\[1.8ex]  
  \hline
  \rule{0pt}{13pt}m/r & r & $V_1$ & $\frac{(1+r)i_2+2i_4}{2r(i_1+j_1)+(1+r)c+2(i_4+j_4)}$ & $\ell_2$ & m & m & $V_2$ & $\frac{j_1}{N_2}$ & $(i_1,i_2,i_3,i_4,j_1-1,j_2+1,j_3,j_4)$ & $\frac{(1+r)i_2+2i_4}{2r(i_1+j_1)+(1+r)c+2(i_4+j_4)}\cdot\frac{1}{2}\cdot\frac{j_1}{N_2}$ \\[1.8ex]  
  \hline
  \rule{0pt}{13pt}m/r & r & $V_1$ & $\frac{(1+r)i_2+2i_4}{2r(i_1+j_1)+(1+r)c+2(i_4+j_4)}$ & $\ell_2$ & r & m & $V_2$ & $\frac{j_3}{N_2}$ & $(i_1,i_2,i_3,i_4,j_1,j_2,$\par$j_3-1,j_4+1)$ & $\frac{(1+r)i_2+2i_4}{2r(i_1+j_1)+(1+r)c+2(i_4+j_4)}\cdot\frac{1}{2}\cdot\frac{j_3}{N_2}$ \\[1.8ex]  
  \hline
  \rule{0pt}{13pt}m/r & r & $V_2$ & $\frac{(1+r)j_2+2j_4}{2r(i_1+j_1)+(1+r)c+2(i_4+j_4)}$ & $\ell_2$ & m & m & $V_1$ & $\frac{i_1}{N_1}$ & $(i_1-1,i_2+1,$\par$i_3,i_4,j_1,j_2,j_3,j_4)$ & $\frac{(1+r)j_2+2j_4}{2r(i_1+j_1)+(1+r)c+2(i_4+j_4)}\cdot\frac{1}{2}\cdot\frac{i_1}{N_1}$ \\[1.8ex]  
  \hline
  \rule{0pt}{13pt}m/r & r & $V_2$ & $\frac{(1+r)j_2+2j_4}{2r(i_1+j_1)+(1+r)c+2(i_4+j_4)}$ & $\ell_2$ & r & m & $V_1$ & $\frac{i_3}{N_1}$ & $(i_1,i_2,i_3-1,i_4+1,j_1,j_2,j_3,j_4)$ & $\frac{(1+r)j_2+2j_4}{2r(i_1+j_1)+(1+r)c+2(i_4+j_4)}\cdot\frac{1}{2}\cdot\frac{i_3}{N_1}$ \\[1.8ex]  
  \hline
\end{tabular}
\captionof{table}{\label{table-complete-bipartite}Sixteen types of state transition from state $(i_1,i_2,i_3,i_4,j_1,j_2,j_3,j_4)$ under model 1 on the two-layer network composed of a complete graph layer and a complete bipartite graph layer. We set $c=i_2+i_3+j_2+j_3$ to simplify the notation.}
\end{table}

\newpage
\section{Derivation of the transition probability matrix for model 1 on two-layer networks composed of the complete graph and a two-community network\label{complete-community-m1-derivation}}

We use the same 8-tuple as that used in section~\ref{complete-bi-m1-derivation}, i.e., $(i_1,i_2,i_3,i_4,j_1,j_2,j_3,j_4)$, to describe the state of the evolutionary dynamics. Assume that the current state is $(i_1,i_2,i_3,i_4,j_1,j_2,j_3,j_4)$. There are 17 types of events that can occur next.

In the first type of event, an individual that has the mutant type in layer 1 is selected as the parent, which occurs with probability $[2r(i_1+j_1)+(1+r)(i_2+j_2)]/[2r(i_1+j_1)+(1+r)(i_2+i_3+j_2+j_3)+2(i_4+j_4)]$, and the layer 1 is selected with probability $1/2$. Then, we select a neighbor of the parent in layer 1 as the child, and the child that is a node in $V_1$ in layer 2 has the resident type in layer 1 and the mutant type in layer 2, which occurs with probability $i_3/(N-1)$. The state after this event is $(i_1+1,i_2,i_3-1,i_4,j_1,j_2,j_3,j_4)$. Therefore, we obtain
\begin{align}
&p_{(i_1,i_2,i_3,i_4,j_1,j_2,j_3,j_4) \to (i_1+1,i_2,i_3-1,i_4,j_1,j_2,j_3,j_4)}=\nonumber\\ 
&\quad\quad\quad\quad\quad\quad\frac{2r(i_1+j_1)+(1+r)(i_2+j_2)}{2r(i_1+j_1)+(1+r)(i_2+i_3+j_2+j_3)+2(i_4+j_4)}\cdot\frac{1}{2}\cdot\frac{i_3}{N-1}.
\label{eq:p1-complete-community-state}
\end{align}
The first row of Table~\ref{table-complete-community} (except the header rows) represents this state transition event and Eq.~\eqref{eq:p1-complete-community-state}.
The remaining 15 types of the table represent the other types of state transition. 
In Table~\ref{table-complete-community}, some types of events have two rows because both cases result in the same state after the state transition.
If any other event than the 16 types shown in Table~\ref{table-complete-community} occurs, the state remains unchanged. The probability of this case is 1 minus the sum of all entries in the last column of Table~\ref{table-complete-community}.

\begin{table}[H]
\hspace*{-4.3em}
\footnotesize
\begin{tabular}{ | w{c}{1.1em} | w{c}{1.1em} | w{c}{1.5em} | w{c}{12em} | w{c}{1.1em} | w{c}{1.1em} | w{c}{1.1em} | w{c}{1.5em} | w{c}{4em} | M{7.3em} | M{17.5em} |} 
  \hline
  \multicolumn{4}{|c|}{parent} & \multirow{3}{*}{layer} & \multicolumn{4}{c|}{child} & \multirow{3}{*}{\shortstack{state after\\ transition}} & \multirow{3}{*}{transition probability}\\
  \cline{1-4}
  \cline{6-9}
    \multicolumn{2}{|c|}{type} & \multirow{2}{*}{$V_1$/$V_2$} & \multirow{2}{*}{probability} & & \multicolumn{2}{c|}{type} & \multirow{2}{*}{$V_1$/$V_2$} & \multirow{2}{*}{prob.} & & \\
   \cline{1-2}
   \cline{6-7}      
    $\ell_1$ & $\ell_2$ &  &  &  & $\ell_1$ & $\ell_2$ &  &  &  & \\
  \hline
  \rule{0pt}{14pt}m & m/r & $V_1$/$V_2$ & $\frac{2r(i_1+j_1)+(1+r)(i_2+j_2)}{2r(i_1+j_1)+(1+r)c+2(i_4+j_4)}$ & $\ell_1$ & r & m & $V_1$ & $\frac{i_3}{N-1}$ & $(i_1+1,i_2,i_3-1,i_4,j_1,j_2,j_3,j_4)$ & $\frac{2r(i_1+j_1)+(1+r)(i_2+j_2)}{2r(i_1+j_1)+(1+r)c+2(i_4+j_4)}\cdot\frac{1}{2}\cdot\frac{i_3}{N-1}$  \\[2ex] 
  \hline
  \rule{0pt}{14pt}m & m/r & $V_1$/$V_2$ & $\frac{2r(i_1+j_1)+(1+r)(i_2+j_2)}{2r(i_1+j_1)+(1+r)c+2(i_4+j_4)}$ & $\ell_1$ & r & r & $V_1$ & $\frac{i_4}{N-1}$ & $(i_1,i_2+1,i_3,i_4-1,j_1,j_2,j_3,j_4)$ & $\frac{2r(i_1+j_1)+(1+r)(i_2+j_2)}{2r(i_1+j_1)+(1+r)c+2(i_4+j_4)}\cdot\frac{1}{2}\cdot\frac{i_4}{N-1}$  \\[2ex] 
  \hline
  \rule{0pt}{14pt}m & m/r & $V_1$/$V_2$ & $\frac{2r(i_1+j_1)+(1+r)(i_2+j_2)}{2r(i_1+j_1)+(1+r)c+2(i_4+j_4)}$ & $\ell_1$ & r & m & $V_2$ & $\frac{j_3}{N-1}$ & $(i_1,i_2,i_3,i_4,j_1+1,j_2,j_3-1,j_4)$ & $\frac{2r(i_1+j_1)+(1+r)(i_2+j_2)}{2r(i_1+j_1)+(1+r)c+2(i_4+j_4)}\cdot\frac{1}{2}\cdot\frac{j_3}{N-1}$ \\[2ex] 
  \hline
  \rule{0pt}{14pt}m & m/r & $V_1$/$V_2$ & $\frac{2r(i_1+j_1)+(1+r)(i_2+j_2)}{2r(i_1+j_1)+(1+r)c+2(i_4+j_4)}$ & $\ell_1$ & r & r & $V_2$ & $\frac{j_4}{N-1}$ & $(i_1,i_2,i_3,i_4,j_1,$\par$j_2+1,j_3,j_4-1)$ & $\frac{2r(i_1+j_1)+(1+r)(i_2+j_2)}{2r(i_1+j_1)+(1+r)c+2(i_4+j_4)}\cdot\frac{1}{2}\cdot\frac{j_4}{N-1}$ \\[2ex] 
  \hline
  \rule{0pt}{14pt}r & m/r & $V_1$/$V_2$ & $\frac{(1+r)(i_3+j_3)+2(i_4+j_4)}{2r(i_1+j_1)+(1+r)c+2(i_4+j_4)}$ & $\ell_1$ & m & m & $V_1$ & $\frac{i_1}{N-1}$ & $(i_1-1,i_2,i_3+1,i_4,j_1,j_2,j_3,j_4)$ & $\frac{(1+r)(i_3+j_3)+2(i_4+j_4)}{2r(i_1+j_1)+(1+r)c+2(i_4+j_4)}\cdot\frac{1}{2}\cdot\frac{i_1}{N-1}$ \\[2ex] 
  \hline
  \rule{0pt}{14pt}r & m/r & $V_1$/$V_2$ & $\frac{(1+r)(i_3+j_3)+2(i_4+j_4)}{2r(i_1+j_1)+(1+r)c+2(i_4+j_4)}$ & $\ell_1$ & m & r & $V_1$ & $\frac{i_2}{N-1}$ & $(i_1,i_2-1,i_3,i_4+1,j_1,j_2,j_3,j_4)$ & $\frac{(1+r)(i_3+j_3)+2(i_4+j_4)}{2r(i_1+j_1)+(1+r)c+2(i_4+j_4)}\cdot\frac{1}{2}\cdot\frac{i_2}{N-1}$ \\[2ex] 
  \hline
  \rule{0pt}{14pt}r & m/r & $V_1$/$V_2$ & $\frac{(1+r)(i_3+j_3)+2(i_4+j_4)}{2r(i_1+j_1)+(1+r)c+2(i_4+j_4)}$ & $\ell_1$ & m & m & $V_2$ & $\frac{j_1}{N-1}$ & $(i_1,i_2,i_3,i_4,j_1-1,j_2,j_3+1,j_4)$ & $\frac{(1+r)(i_3+j_3)+2(i_4+j_4)}{2r(i_1+j_1)+(1+r)c+2(i_4+j_4)}\cdot\frac{1}{2}\cdot\frac{j_1}{N-1}$ \\[2ex] 
  \hline
  \rule{0pt}{14pt}r & m/r & $V_1$/$V_2$ & $\frac{(1+r)(i_3+j_3)+2(i_4+j_4)}{2r(i_1+j_1)+(1+r)c+2(i_4+j_4)}$ & $\ell_1$ & m & r & $V_2$ & $\frac{j_2}{N-1}$ & $(i_1,i_2,i_3,i_4,j_1,$\par$j_2-1,j_3,j_4+1)$ & $\frac{(1+r)(i_3+j_3)+2(i_4+j_4)}{2r(i_1+j_1)+(1+r)c+2(i_4+j_4)}\cdot\frac{1}{2}\cdot\frac{j_2}{N-1}$ \\[2ex]  
  \hline
  \rule{0pt}{14pt}m/r & m & $V_1$ & $\frac{2ri_1+(1+r)i_3}{2r(i_1+j_1)+(1+r)c+2(i_4+j_4)}$ & \multirow{2}[5]{*}{$\ell_2$} & m & r & $V_2$ & $\frac{\overline{\epsilon} j_2}{N_1-1+\overline{\epsilon} N_2}$ & \multirow{2}[5]{*}{\shortstack{$(i_1,i_2,i_3,i_4,j_1+$\\$1,j_2-1,j_3,j_4)$}} & \multirow[b]{2}{*}{\shortstack{$\frac{[2ri_1+(1+r)i_3]\overline{\epsilon} j_2}{[2r(i_1+j_1)+(1+r)c+2(i_4+j_4)]2(N_1-1+\overline{\epsilon} N_2)}$\\$+$\\$\frac{[2rj_1+(1+r)j_3]j_2}{[2r(i_1+j_1)+(1+r)c+2(i_4+j_4)]2(N_2-1+\overline{\epsilon} N_1)}$}} \\[2ex]
  \cline{1-4}
  \cline{6-9}
  \rule{0pt}{14pt}m/r & m & $V_2$ & $\frac{2rj_1+(1+r)j_3}{2r(i_1+j_1)+(1+r)c+2(i_4+j_4)}$ &  & m & r & $V_2$ & $\frac{j_2}{N_2-1+\overline{\epsilon} N_1}$ &  &  \\[2ex]
  \hline
  \rule{0pt}{14pt}m/r & m & $V_1$ & $\frac{2ri_1+(1+r)i_3}{2r(i_1+j_1)+(1+r)c+2(i_4+j_4)}$ & \multirow{2}[5]{*}{$\ell_2$} & r & r & $V_2$ & $\frac{\overline{\epsilon} j_4}{N_1-1+\overline{\epsilon} N_2}$ & \multirow{2}[5]{*}{\shortstack{$(i_1,i_2,i_3,i_4,j_1,$\\$j_2,j_3+1,j_4-1)$}} & \multirow[b]{2}{*}{\shortstack{$\frac{[2ri_1+(1+r)i_3]\overline{\epsilon} j_4}{[2r(i_1+j_1)+(1+r)c+2(i_4+j_4)]2(N_1-1+\overline{\epsilon} N_2)}$\\$+$\\$\frac{[2rj_1+(1+r)j_3]j_4}{[2r(i_1+j_1)+(1+r)c+2(i_4+j_4)]2(N_2-1+\overline{\epsilon} N_1)}$}} \\[2ex]
  \cline{1-4}
  \cline{6-9}
  \rule{0pt}{14pt}m/r & m & $V_2$ & $\frac{2rj_1+(1+r)j_3}{2r(i_1+j_1)+(1+r)c+2(i_4+j_4)}$ &  & r & r & $V_2$ & $\frac{j_4}{N_2-1+\overline{\epsilon} N_1}$ &  &   \\[2ex]  
  \hline
  \rule{0pt}{14pt}m/r & m & $V_1$ & $\frac{2ri_1+(1+r)i_3}{2r(i_1+j_1)+(1+r)c+2(i_4+j_4)}$ & \multirow{2}[5]{*}{$\ell_2$} & m & r & $V_1$ & $\frac{i_2}{N_1-1+\overline{\epsilon} N_2}$ & \multirow{2}[5]{*}{\shortstack{$(i_1+1,i_2-1,i_3,$\\$i_4,j_1,j_2,j_3,j_4)$}} & \multirow[b]{2}{*}{\shortstack{$\frac{[2ri_1+(1+r)i_3]i_2}{[2r(i_1+j_1)+(1+r)c+2(i_4+j_4)]2(N_1-1+\overline{\epsilon} N_2)}$\\$+$\\$\frac{[2rj_1+(1+r)j_3]\overline{\epsilon} i_2}{[2r(i_1+j_1)+(1+r)c+2(i_4+j_4)]2(N_2-1+\overline{\epsilon} N_1)}$}} \\[2ex]
  \cline{1-4}
  \cline{6-9}
  \rule{0pt}{14pt}m/r & m & $V_2$ & $\frac{2rj_1+(1+r)j_3}{2r(i_1+j_1)+(1+r)c+2(i_4+j_4)}$ &  & m & r & $V_1$ & $\frac{\overline{\epsilon} i_2}{N_2-1+\overline{\epsilon} N_1}$ &  &   \\[2ex]  
  \hline
  \rule{0pt}{14pt}m/r & m & $V_1$ & $\frac{2ri_1+(1+r)i_3}{2r(i_1+j_1)+(1+r)c+2(i_4+j_4)}$ & \multirow{2}[5]{*}{$\ell_2$} & r & r & $V_1$ & $\frac{i_4}{N_1-1+\overline{\epsilon} N_2}$ & \multirow{2}[5]{*}{\shortstack{$(i_1,i_2,i_3+1,i_4$\\$-1,j_1,j_2,j_3,j_4)$}} & \multirow[b]{2}{*}{\shortstack{$\frac{[2ri_1+(1+r)i_3]i_4}{[2r(i_1+j_1)+(1+r)c+2(i_4+j_4)]2(N_1-1+\overline{\epsilon} N_2)}$\\$+$\\$\frac{[2rj_1+(1+r)j_3]\overline{\epsilon} i_4}{[2r(i_1+j_1)+(1+r)c+2(i_4+j_4)]2(N_2-1+\overline{\epsilon} N_1)}$}} \\[2ex]
  \cline{1-4}
  \cline{6-9}
  \rule{0pt}{14pt}m/r & m & $V_2$ & $\frac{2rj_1+(1+r)j_3}{2r(i_1+j_1)+(1+r)c+2(i_4+j_4)}$ &  & r & r & $V_1$ & $\frac{\overline{\epsilon} i_4}{N_2-1+\overline{\epsilon} N_1}$ &  &   \\[2ex]  
  \hline
   \rule{0pt}{14pt}m/r & r & $V_1$ & $\frac{(1+r)i_2+2i_4}{2r(i_1+j_1)+(1+r)c+2(i_4+j_4)}$ & \multirow{2}[5]{*}{$\ell_2$} & m & m & $V_2$ & $\frac{\overline{\epsilon} j_1}{N_1-1+\overline{\epsilon} N_2}$ & \multirow{2}[5]{*}{\shortstack{$(i_1,i_2,i_3,i_4,j_1$\\$-1,j_2+1,j_3,j_4)$}} & \multirow[b]{2}{*}{\shortstack{$\frac{[(1+r)i_2+2i_4]\overline{\epsilon} j_1}{[2r(i_1+j_1)+(1+r)c+2(i_4+j_4)]2(N_1-1+\overline{\epsilon} N_2)}$\\$+$\\$\frac{[(1+r)j_2+2j_4]j_1}{[2r(i_1+j_1)+(1+r)c+2(i_4+j_4)]2(N_2-1+\overline{\epsilon} N_1)}$}} \\[2ex]
  \cline{1-4}
  \cline{6-9}
  \rule{0pt}{14pt}m/r & r & $V_2$ & $\frac{(1+r)j_2+2j_4}{2r(i_1+j_1)+(1+r)c+2(i_4+j_4)}$ &  & m & m & $V_2$ & $\frac{j_1}{N_2-1+\overline{\epsilon} N_1}$ &  &   \\[2ex] 
  \hline
\end{tabular}
\captionof{table}{\label{table-complete-community}Sixteen types of state transition from state $(i_1,i_2,i_3,i_4,j_1,j_2,j_3,j_4)$ under model 1 on the two-layer network composed of a complete graph layer and a two-community network layer. We set $c=i_2+i_3+j_2+j_3$ to simplify the notation.}
\end{table}

\begin{table}[H]
\hspace*{-4.3em}
\footnotesize
\begin{tabular}{ | w{c}{1.1em} | w{c}{1.1em} | w{c}{1.5em} | w{c}{12em} | w{c}{1.1em} | w{c}{1.1em} | w{c}{1.1em} | w{c}{1.5em} | w{c}{4em} | M{7.3em} | M{17.5em} |} 
  \hline
  \multicolumn{4}{|c|}{parent} & \multirow{3}{*}{layer} & \multicolumn{4}{c|}{child} & \multirow{3}{*}{\shortstack{state after\\ transition}} & \multirow{3}{*}{transition probability}\\
  \cline{1-4}
  \cline{6-9}
    \multicolumn{2}{|c|}{type} & \multirow{2}{*}{$V_1$/$V_2$} & \multirow{2}{*}{probability} & & \multicolumn{2}{c|}{type} & \multirow{2}{*}{$V_1$/$V_2$} & \multirow{2}{*}{prob.} & & \\
   \cline{1-2}
   \cline{6-7}      
    $\ell_1$ & $\ell_2$ &  &  &  & $\ell_1$ & $\ell_2$ &  &  &  & \\
  \hline
  \rule{0pt}{14pt}m/r & r & $V_1$ & $\frac{(1+r)i_2+2i_4}{2r(i_1+j_1)+(1+r)c+2(i_4+j_4)}$ & \multirow{2}[5]{*}{$\ell_2$} & r & m & $V_2$ & $\frac{\overline{\epsilon} j_3}{N_1-1+\overline{\epsilon} N_2}$ & \multirow{2}[5]{*}{\shortstack{$(i_1,i_2,i_3,i_4,j_1$\\$,j_2,j_3-1,j_4+1)$}} & \multirow[b]{2}{*}{\shortstack{$\frac{[(1+r)i_2+2i_4]\overline{\epsilon} j_3}{[2r(i_1+j_1)+(1+r)c+2(i_4+j_4)]2(N_1-1+\overline{\epsilon} N_2)}$\\$+$\\$\frac{[(1+r)j_2+2j_4]j_3}{[2r(i_1+j_1)+(1+r)c+2(i_4+j_4)]2(N_2-1+\overline{\epsilon} N_1)}$}} \\[2ex]
  \cline{1-4}
  \cline{6-9}
  \rule{0pt}{14pt}m/r & r & $V_2$ & $\frac{(1+r)j_2+2j_4}{2r(i_1+j_1)+(1+r)c+2(i_4+j_4)}$ &  & r & m & $V_2$ & $\frac{j_3}{N_2-1+\overline{\epsilon} N_1}$ &  &   \\[2ex]  
  \hline
  \rule{0pt}{14pt}m/r & r & $V_1$ & $\frac{(1+r)i_2+2i_4}{2r(i_1+j_1)+(1+r)c+2(i_4+j_4)}$ & \multirow{2}[5]{*}{$\ell_2$} & m & m & $V_1$ & $\frac{i_1}{N_1-1+\overline{\epsilon} N_2}$ & \multirow{2}[5]{*}{\shortstack{$(i_1-1,i_2+1,i_3,$\\$i_4,j_1,j_2,j_3,j_4)$}} & \multirow[b]{2}{*}{\shortstack{$\frac{[(1+r)i_2+2i_4]i_1}{[2r(i_1+j_1)+(1+r)c+2(i_4+j_4)]2(N_1-1+\overline{\epsilon} N_2)}$\\$+$\\$\frac{[(1+r)j_2+2j_4]\overline{\epsilon} i_1}{[2r(i_1+j_1)+(1+r)c+2(i_4+j_4)]2(N_2-1+\overline{\epsilon} N_1)}$}} \\[2ex]
  \cline{1-4}
  \cline{6-9}
  \rule{0pt}{14pt}m/r & r & $V_2$ & $\frac{(1+r)j_2+2j_4}{2r(i_1+j_1)+(1+r)c+2(i_4+j_4)}$ &  & m & m & $V_1$ & $\frac{\overline{\epsilon} i_1}{N_2-1+\overline{\epsilon} N_1}$ &  &   \\[2ex]  
  \hline
  \rule{0pt}{14pt}m/r & r & $V_1$ & $\frac{(1+r)i_2+2i_4}{2r(i_1+j_1)+(1+r)c+2(i_4+j_4)}$ & \multirow{2}[5]{*}{$\ell_2$} & r & m & $V_1$ & $\frac{i_3}{N_1-1+\overline{\epsilon} N_2}$ & \multirow{2}[5]{*}{\shortstack{$(i_1,i_2,i_3-1,i_4$\\$+1,j_1,j_2,j_3,j_4)$}} & \multirow[b]{2}{*}{\shortstack{$\frac{[(1+r)i_2+2i_4]i_3}{[2r(i_1+j_1)+(1+r)c+2(i_4+j_4)]2(N_1-1+\overline{\epsilon} N_2)}$\\$+$\\$\frac{[(1+r)j_2+2j_4]\overline{\epsilon} i_3}{[2r(i_1+j_1)+(1+r)c+2(i_4+j_4)]2(N_2-1+\overline{\epsilon} N_1)}$}} \\[2ex]
  \cline{1-4}
  \cline{6-9}
  \rule{0pt}{14pt}m/r & r & $V_2$ & $\frac{(1+r)j_2+2j_4}{2r(i_1+j_1)+(1+r)c+2(i_4+j_4)}$ &  & r & m & $V_1$ & $\frac{\overline{\epsilon} i_3}{N_2-1+\overline{\epsilon} N_1}$ &  &   \\[2ex]  
  \hline
\end{tabular}
\end{table}

\newpage
\section{Derivation of the transition probability matrix for the variant of model 1 with the dB updating rule on the coupled complete graph \label{complete-db-m1-derivation}}

Assume that the current state is $\bm{i}=(i_1, i_2, i_3, i_4)$, where $i_1$, $i_2$, $i_3$, and $i_4$ are as defined in section~5.2. There are nine types of events that can occur next.

In the first type of event, an individual that has the resident type in layer 1 and the mutant type in layer 2 is selected to die, which occurs with probability $i_3/N$, and the layer 1 is selected with probability $1/2$. Then, we select a neighbor of the dying individual in layer 1, the selected neighbor has the mutant type in layer 1, and it occupies the empty site, which occurs with probability $[2ri_1+(1+r)i_2]/[2ri_1+(1+r)(i_2+i_3-1)+2i_4]$. The state after this event is $(i_1+1, i_2, i_3-1, i_4)$. Therefore, we obtain
\begin{equation}
p_{(i_1,i_2,i_3,i_4) \to (i_1+1, i_2, i_3-1, i_4)} = \frac{i_3}{N}\cdot\frac{1}{2}\cdot\frac{2ri_1+(1+r)i_2}{2ri_1+(1+r)(i_2+i_3-1)+2i_4}.
\label{eq:p1-complete-state-db}
\end{equation}
The first row of Table~\ref{table-complete-db} (except the header rows) represents this state transition event and Eq.~\eqref{eq:p1-complete-state-db}.
The remaining seven rows of the table represent the other types of state transition. If any other event than the eight types shown in Table~\ref{table-complete-db} occurs, the state remains unchanged. The probability of this case is 1 minus the sum of all entries in the last column of Table~\ref{table-complete-db}. 

\begin{table}[H]
\hspace*{-2.7em}
\begin{tabular}{ | w{c}{1.5em} | w{c}{1.5em} | w{c}{1.5em} | w{c}{1.5em} | c | c | c | c | M{12.5em} |} 
  \hline
  \multicolumn{3}{|c|}{dying individual} & \multirow{3}{*}{\shortstack{sele-\\cted\\ layer}} & \multicolumn{3}{c|}{reproducing individual} & \multirow{3}{*}{state after transition} & \multirow{3}{*}{transition probability}\\
  \cline{1-3}
  \cline{5-7}
    \multicolumn{2}{|c|}{type} & \multirow{2}{*}{prob.} & & \multicolumn{2}{c|}{type} & \multirow{2}{*}{probability} & & \\
   \cline{1-2}
   \cline{5-6}      
    $\ell_1$ & $\ell_2$ &  &  & $\ell_1$ & $\ell_2$ &  &  & \\
  \hline
  \rule{0pt}{14pt}r & m & $\frac{i_3}{N}$ & $\ell_1$ & m & m/r & $\frac{2ri_1+(1+r)i_2}{2ri_1+(1+r)(i_2+i_3-1)+2i_4}$ & $(i_1+1,i_2,i_3-1,i_4)$ & $\frac{i_3}{N}\cdot\frac{1}{2}\cdot\frac{2ri_1+(1+r)i_2}{2ri_1+(1+r)(i_2+i_3-1)+2i_4}$  \\[1.2ex] 
  \hline
  \rule{0pt}{14pt}r & r & $\frac{i_4}{N}$ & $\ell_1$ & m & m/r & $\frac{2ri_1+(1+r)i_2}{2ri_1+(1+r)(i_2+i_3)+2(i_4-1)}$ & $(i_1,i_2+1,i_3,i_4-1)$ & $\frac{i_4}{N}\cdot\frac{1}{2}\cdot\frac{2ri_1+(1+r)i_2}{2ri_1+(1+r)(i_2+i_3)+2(i_4-1)}$  \\[1.2ex]
  \hline
  \rule{0pt}{14pt}m & m & $\frac{i_1}{N}$ & $\ell_1$ & r & m/r & $\frac{(1+r)i_3+2i_4}{2r(i_1-1)+(1+r)(i_2+i_3)+2i_4}$ & $(i_1-1,i_2,i_3+1,i_4)$ & $\frac{i_1}{N}\cdot\frac{1}{2}\cdot\frac{(1+r)i_3+2i_4}{2r(i_1-1)+(1+r)(i_2+i_3)+2i_4}$  \\[1.2ex] 
  \hline
  \rule{0pt}{14pt}m & r & $\frac{i_2}{N}$ & $\ell_1$ & r & m/r & $\frac{(1+r)i_3+2i_4}{2ri_1+(1+r)(i_2+i_3-1)+2i_4}$ & $(i_1,i_2-1,i_3,i_4+1)$ & $\frac{i_2}{N}\cdot\frac{1}{2}\cdot\frac{(1+r)i_3+2i_4}{2ri_1+(1+r)(i_2+i_3-1)+2i_4}$  \\[1.2ex] 
  \hline
  \rule{0pt}{14pt}m & r & $\frac{i_2}{N}$ & $\ell_2$ & m/r & m & $\frac{2ri_1+(1+r)i_3}{2ri_1+(1+r)(i_2+i_3-1)+2i_4}$ & $(i_1+1,i_2-1,i_3,i_4)$ & $\frac{i_2}{N}\cdot\frac{1}{2}\cdot\frac{2ri_1+(1+r)i_3}{2ri_1+(1+r)(i_2+i_3-1)+2i_4}$  \\[1.2ex] 
  \hline
  \rule{0pt}{14pt}r & r & $\frac{i_4}{N}$ & $\ell_2$ & m/r & m & $\frac{2ri_1+(1+r)i_3}{2ri_1+(1+r)(i_2+i_3)+2(i_4-1)}$ & $(i_1,i_2,i_3+1,i_4-1)$ & $\frac{i_4}{N}\cdot\frac{1}{2}\cdot\frac{2ri_1+(1+r)i_3}{2ri_1+(1+r)(i_2+i_3)+2(i_4-1)}$  \\[1.2ex] 
  \hline
  \rule{0pt}{14pt}m & m & $\frac{i_1}{N}$ & $\ell_2$ & m/r & r & $\frac{(1+r)i_2+2i_4}{2r(i_1-1)+(1+r)(i_2+i_3)+2i_4}$ & $(i_1-1,i_2+1,i_3,i_4)$ & $\frac{i_1}{N}\cdot\frac{1}{2}\cdot\frac{(1+r)i_2+2i_4}{2r(i_1-1)+(1+r)(i_2+i_3)+2i_4}$  \\[1.2ex] 
  \hline
  \rule{0pt}{14pt}r & m & $\frac{i_3}{N}$ & $\ell_2$ & m/r & r & $\frac{(1+r)i_2+2i_4}{2ri_1+(1+r)(i_2+i_3-1)+2i_4}$ & $(i_1,i_2,i_3-1,i_4+1)$ & $\frac{i_3}{N}\cdot\frac{1}{2}\cdot\frac{(1+r)i_2+2i_4}{2ri_1+(1+r)(i_2+i_3-1)+2i_4}$  \\[1.2ex] 
  \hline
\end{tabular}
\captionof{table}{\label{table-complete-db}Eight types of state transition from state $(i_1,i_2,i_3,i_4)$ under model 1 and the dB updating rule on the coupled complete graphs.}
\end{table}

\newpage
\section{Derivation of the transition probability matrix for model 2 on the coupled complete graph\label{complete-m2-derivation}}

Assume that the current state is $\bm{i}=(i_1, i_2, i_3, i_4)$. There are 21 types of events that can occur next.

In the first type of event, the state changes to $(i_1+1,i_2,i_3,i_4-1)$.
This state change can occur in one of the following three manners.

In the first case, an individual that has the mutant type in layer 1 is selected as the parent, which occurs with probability $[2ri_1+(1+r)i_2]/[2ri_1+(1+r)(i_2+i_3)+2i_4]$. Then, we select a neighbor of the parent in layer 1, denoted by $v$, as the child, and $v$ has the resident type in both layers, which occurs with probability $i_4/(N-1)$. At the same time, an individual that has the mutant type in layer 2 is selected as the parent, which occurs with probability $[2ri_1+(1+r)i_3]/[2ri_1+(1+r)(i_2+i_3)+2i_4]$. Then, we select $v$ in layer 2, which occurs with probability $1/(N-1)$. 

In the second case, an individual that has the mutant type in layer 1 is selected as the parent. Then, we select a neighbor of the parent in layer 1 as the child, and the child has the resident type in layer 1 and the mutant type in layer 2, which occurs with probability $i_3/(N-1)$. At the same time, an individual that has the mutant type in layer 2 is selected as the parent. Then, we select a neighbor of the parent in layer 2 as the child, and the child has the resident type in both layers, which occurs with probability $i_4/(N-1)$. 

In the third case, an individual that has the mutant type in layer 1 is selected as the parent. Then, we select a neighbor of the parent in layer 1 as the child, and the child has the resident type in both layers, which occurs with probability $i_4/(N-1)$. At the same time, an individual that has the mutant type in layer 2 is selected as the parent. Then, we select a neighbor of the parent in layer 2 as the child, and the child has the mutant type in layer 1 and the resident type in layer 2, which occurs with probability $i_2/(N-1)$. 

In all the three cases, the state after this event is $(i_1+1,i_2,i_3,i_4-1)$. Therefore, we obtain
\begin{align}
p_{(i_1,i_2,i_3,i_4)\to (i_1+1,i_2,i_3,i_4-1)} = & \frac{2ri_1+(1+r)i_2}{2ri_1+(1+r)(i_2+i_3)+2i_4}\cdot\frac{i_4}{N-1}\cdot\frac{2ri_1+(1+r)i_3}{2ri_1+(1+r)(i_2+i_3)+2i_4}\cdot\frac{1}{N-1}\nonumber\\
+&\frac{2ri_1+(1+r)i_2}{2ri_1+(1+r)(i_2+i_3)+2i_4}\cdot\frac{i_3}{N-1}\cdot\frac{2ri_1+(1+r)i_3}{2ri_1+(1+r)(i_2+i_3)+2i_4}\cdot\frac{i_4}{N-1}\nonumber\\
+&\frac{2ri_1+(1+r)i_2}{2ri_1+(1+r)(i_2+i_3)+2i_4}\cdot\frac{i_4}{N-1}\cdot\frac{2ri_1+(1+r)i_3}{2ri_1+(1+r)(i_2+i_3)+2i_4}\cdot\frac{i_2}{N-1}\nonumber\\
=&\frac{[2ri_1+(1+r)i_2][2ri_1+(1+r)i_3]i_4(i_2+i_3+1)}{[2ri_1+(1+r)(i_2+i_3)+2i_4]^2(N-1)^2}.
\label{eq:p1-complete-m2-state}
\end{align}

The first row of Tables~\ref{table-complete-m2-layer1} and~\ref{table-complete-m2-layer2}  (except the header rows) represent this state transition event in layer 1 and layer 2, respectively, and Eq.~\eqref{eq:p1-complete-m2-state}. The remaining nineteen rows of each table represent the other types of state transition. In Tables~\ref{table-complete-m2-layer1} and~\ref{table-complete-m2-layer2}, some types of events have two or three rows because these cases result in the same state after the state transition.
We split each type of state transition into two tables because of limited space. If any other event than the twenty types shown in Tables~\ref{table-complete-m2-layer1} and~\ref{table-complete-m2-layer2} occurs, the state remains unchanged. The probability of this case is 1 minus the sum of all entries in the last column of Table~\ref{table-complete-m2-layer2}. 

\begin{table}[H]
\begin{tabular}{ | c | c | c | c | c | c | M{10em} |} 
  \hline
  \multicolumn{3}{|c|}{parent} & \multicolumn{3}{c|}{child} & \multirow{3}{*}{state after transition} \\
  \cline{1-3}
  \cline{4-6}
    \multicolumn{2}{|c|}{type} & \multirow{2}{*}{probability} & \multicolumn{2}{c|}{type} & \multirow{2}{*}{prob.} & \\
   \cline{1-2}
   \cline{4-5}      
    $\ell_1$ & $\ell_2$ &  & $\ell_1$ & $\ell_2$ &  & \\
  \hline
  \rule{0pt}{11pt}\multirow{3}[3]{*}{m} & \multirow{3}[3]{*}{m/r} & \multirow{3}[3]{*}{$\frac{2ri_1+(1+r)i_2}{2ri_1+(1+r)(i_2+i_3)+2i_4}$} & r & r & $\frac{i_4}{N-1}$ & \multirow{3}[3]{*}{$(i_1+1,i_2,i_3,i_4-1)$} \\[0.5ex] 
  \cline{4-6}
  \rule{0pt}{11pt}& & & r & m & $\frac{i_3}{N-1}$ & \\[0.5ex]
  \cline{4-6}
  \rule{0pt}{11pt}& & & r & r & $\frac{i_4}{N-1}$ & \\[0.5ex]
  \cline{4-6}
  \hline
  \rule{0pt}{13pt}m & m/r & $\frac{2ri_1+(1+r)i_2}{2ri_1+(1+r)(i_2+i_3)+2i_4}$ & r & r & $\frac{i_4}{N-1}$ & $(i_1,i_2+1,i_3+1,i_4-2)$\\[1ex]
  \hline
  \rule{0pt}{13pt}m & m/r & $\frac{2ri_1+(1+r)i_2}{2ri_1+(1+r)(i_2+i_3)+2i_4}$ & r & m & $\frac{i_3}{N-1}$ & $(i_1+2,i_2-1,i_3-1,i_4)$\\[1ex]
  \hline
  \rule{0pt}{11pt}\multirow{3}[3]{*}{m} & \multirow{3}[3]{*}{m/r} & \multirow{3}[3]{*}{$\frac{2ri_1+(1+r)i_2}{2ri_1+(1+r)(i_2+i_3)+2i_4}$} & r & m & $\frac{i_3}{N-1}$ & \multirow{3}[3]{*}{$(i_1,i_2+1,i_3-1,i_4)$} \\[0.5ex] 
  \cline{4-6}
  \rule{0pt}{11pt}& & & r & m & $\frac{i_3}{N-1}$ & \\[0.5ex]
  \cline{4-6}
  \rule{0pt}{11pt}& & & r & r & $\frac{i_4}{N-1}$ & \\[0.5ex]
  \cline{4-6}
  \hline
  \rule{0pt}{13pt}m & m/r & $\frac{2ri_1+(1+r)i_2}{2ri_1+(1+r)(i_2+i_3)+2i_4}$ & r & m & $\frac{i_3}{N-1}$ & $(i_1+1,i_2,i_3-2,i_4+1)$\\[1ex]
  \hline
  \rule{0pt}{13pt}m & m/r & $\frac{2ri_1+(1+r)i_2}{2ri_1+(1+r)(i_2+i_3)+2i_4}$ & r & r & $\frac{i_4}{N-1}$ & $(i_1-1,i_2+2,i_3,i_4-1)$\\[1ex]
  \hline
  \rule{0pt}{11pt}\multirow{2}[2]{*}{m} & \multirow{2}[2]{*}{m/r} & \multirow{2}[2]{*}{$\frac{2ri_1+(1+r)i_2}{2ri_1+(1+r)(i_2+i_3)+2i_4}$} & r & m & $\frac{i_3}{N-1}$ & \multirow{2}[2]{*}{$(i_1+1,i_2,i_3-1,i_4)$} \\[0.5ex] 
  \cline{4-6}
  \rule{0pt}{11pt}& & & r & m & $\frac{i_3}{N-1}$ & \\[0.5ex]
  \cline{4-6}
  \hline
  \rule{0pt}{11pt}\multirow{2}[2]{*}{m} & \multirow{2}[2]{*}{m/r} & \multirow{2}[2]{*}{$\frac{2ri_1+(1+r)i_2}{2ri_1+(1+r)(i_2+i_3)+2i_4}$} & r & r & $\frac{i_4}{N-1}$ & \multirow{2}[2]{*}{$(i_1,i_2+1,i_3,i_4-1)$} \\[0.5ex] 
  \cline{4-6}
  \rule{0pt}{11pt}& & & r & r & $\frac{i_4}{N-1}$ & \\[0.5ex]
  \cline{4-6}
  \hline
  \rule{0pt}{11pt}\multirow{3}[3]{*}{r} & \multirow{3}[3]{*}{m/r} & \multirow{3}[3]{*}{$\frac{(1+r)i_3+2i_4}{2ri_1+(1+r)(i_2+i_3)+2i_4}$} & m & m & $\frac{i_1}{N-1}$ & \multirow{3}[3]{*}{$(i_1,i_2-1,i_3+1,i_4)$} \\[0.5ex] 
  \cline{4-6}
  \rule{0pt}{11pt}& & & m & r & $\frac{i_2}{N-1}$ & \\[0.5ex]
  \cline{4-6}
  \rule{0pt}{11pt}& & & m & r & $\frac{i_2}{N-1}$ & \\[0.5ex]
  \cline{4-6}
  \hline
  \rule{0pt}{13pt}r & m/r & $\frac{(1+r)i_3+2i_4}{2ri_1+(1+r)(i_2+i_3)+2i_4}$ & m & m & $\frac{i_1}{N-1}$ & $(i_1-1,i_2,i_3+2,i_4-1)$\\[1ex]
  \hline
  \rule{0pt}{13pt}r & m/r & $\frac{(1+r)i_3+2i_4}{2ri_1+(1+r)(i_2+i_3)+2i_4}$ & m & r & $\frac{i_2}{N-1}$ & $(i_1+1,i_2-2,i_3,i_4+1)$\\[1ex]
  \hline
  \rule{0pt}{11pt}\multirow{3}[3]{*}{r} & \multirow{3}[3]{*}{m/r} & \multirow{3}[3]{*}{$\frac{(1+r)i_3+2i_4}{2ri_1+(1+r)(i_2+i_3)+2i_4}$} & m & m & $\frac{i_1}{N-1}$ & \multirow{3}[3]{*}{$(i_1-1,i_2,i_3,i_4+1)$} \\[0.5ex] 
  \cline{4-6}
  \rule{0pt}{11pt}& & & m & m & $\frac{i_1}{N-1}$ & \\[0.5ex]
  \cline{4-6}
  \rule{0pt}{11pt}& & & m & r & $\frac{i_2}{N-1}$ & \\[0.5ex]
  \cline{4-6}
  \hline
  \rule{0pt}{13pt}r & m/r & $\frac{(1+r)i_3+2i_4}{2ri_1+(1+r)(i_2+i_3)+2i_4}$ & m & m & $\frac{i_1}{N-1}$ & $(i_1-2,i_2+1,i_3+1,i_4)$\\[1ex]
  \hline
  \rule{0pt}{13pt}r & m/r & $\frac{(1+r)i_3+2i_4}{2ri_1+(1+r)(i_2+i_3)+2i_4}$ & m & r & $\frac{i_2}{N-1}$ & $(i_1,i_2-1,i_3-1,i_4+2)$\\[1ex]
  \hline
  \rule{0pt}{11pt}\multirow{2}[2]{*}{r} & \multirow{2}[2]{*}{m/r} & \multirow{2}[2]{*}{$\frac{(1+r)i_3+2i_4}{2ri_1+(1+r)(i_2+i_3)+2i_4}$} & m & m & $\frac{i_1}{N-1}$ & \multirow{2}[2]{*}{$(i_1-1,i_2,i_3+1,i_4)$} \\[0.5ex] 
  \cline{4-6}
  \rule{0pt}{11pt}& & & m & m & $\frac{i_1}{N-1}$ & \\[0.5ex]
  \cline{4-6}
  \hline
  \rule{0pt}{11pt}\multirow{2}[2]{*}{r} & \multirow{2}[2]{*}{m/r} & \multirow{2}[2]{*}{$\frac{(1+r)i_3+2i_4}{2ri_1+(1+r)(i_2+i_3)+2i_4}$} & m & r & $\frac{i_2}{N-1}$ & \multirow{2}[2]{*}{$(i_1,i_2-1,i_3,i_4+1)$} \\[0.5ex] 
  \cline{4-6}
  \rule{0pt}{11pt}& & & m & r & $\frac{i_2}{N-1}$ & \\[0.5ex]
  \cline{4-6}
  \hline
  \rule{0pt}{13pt}m & m/r & $\frac{2ri_1+(1+r)i_2}{2ri_1+(1+r)(i_2+i_3)+2i_4}$ & m & m/r & $\frac{i_1+i_2-1}{N-1}$ & \multirow{2}[3]{*}{$(i_1+1,i_2-1,i_3,i_4)$} \\[1ex] 
  \cline{1-3}
  \cline{4-6}
  \rule{0pt}{13pt}r & m/r & $\frac{(1+r)i_3+2i_4}{2ri_1+(1+r)(i_2+i_3)+2i_4}$ & r & m/r & $\frac{i_3+i_4-1}{N-1}$ & \\[1ex]
  \cline{1-3}
  \cline{4-6}
  \hline
\end{tabular}
\captionof{table}{\label{table-complete-m2-layer1}Twenty types of state transition in $\ell_1$ from state $(i_1,i_2,i_3,i_4)$ for model 2 on the coupled complete graph.}
\end{table}

\begin{table}[H]
\begin{tabular}{ | c | c | c | c | c | c | M{10em} |} 
  \hline
  \multicolumn{3}{|c|}{parent} & \multicolumn{3}{c|}{child} & \multirow{3}{*}{state after transition} \\
  \cline{1-3}
  \cline{4-6}
    \multicolumn{2}{|c|}{type} & \multirow{2}{*}{probability} & \multicolumn{2}{c|}{type} & \multirow{2}{*}{prob.} & \\
   \cline{1-2}
   \cline{4-5}      
    $\ell_1$ & $\ell_2$ &  & $\ell_1$ & $\ell_2$ &  & \\
  \hline
  \rule{0pt}{13pt}m & m/r & $\frac{2ri_1+(1+r)i_2}{2ri_1+(1+r)(i_2+i_3)+2i_4}$ & m & m/r & $\frac{i_1+i_2-1}{N-1}$ & \multirow{2}[3]{*}{$(i_1,i_2,i_3+1,i_4-1)$} \\[1ex] 
  \cline{1-3}
  \cline{4-6}
  \rule{0pt}{13pt}r & m/r & $\frac{(1+r)i_3+2i_4}{2ri_1+(1+r)(i_2+i_3)+2i_4}$ & r & m/r & $\frac{i_3+i_4-1}{N-1}$ & \\[1ex]
  \cline{1-3}
  \cline{4-6}
  \hline
  \rule{0pt}{13pt}m & m/r & $\frac{2ri_1+(1+r)i_2}{2ri_1+(1+r)(i_2+i_3)+2i_4}$ & m & m/r & $\frac{i_1+i_2-1}{N-1}$ & \multirow{2}[3]{*}{$(i_1-1,i_2+1,i_3,i_4)$} \\[1ex] 
  \cline{1-3}
  \cline{4-6}
  \rule{0pt}{13pt}r & m/r & $\frac{(1+r)i_3+2i_4}{2ri_1+(1+r)(i_2+i_3)+2i_4}$ & r & m/r & $\frac{i_3+i_4-1}{N-1}$ & \\[1ex]
  \cline{1-3}
  \cline{4-6}
  \hline
  \rule{0pt}{13pt}m & m/r & $\frac{2ri_1+(1+r)i_2}{2ri_1+(1+r)(i_2+i_3)+2i_4}$ & m & m/r & $\frac{i_1+i_2-1}{N-1}$ & \multirow{2}[3]{*}{$(i_1,i_2,i_3-1,i_4+1)$} \\[1ex] 
  \cline{1-3}
  \cline{4-6}
  \rule{0pt}{13pt}r & m/r & $\frac{(1+r)i_3+2i_4}{2ri_1+(1+r)(i_2+i_3)+2i_4}$ & r & m/r & $\frac{i_3+i_4-1}{N-1}$ & \\[1ex]
  \cline{1-3}
  \cline{4-6}
  \hline
\end{tabular}
\end{table}

\begin{table}[H]
\begin{tabular}{ | c | c | c | c | c | c | c | M{15em} |} 
  \hline
  \multicolumn{3}{|c|}{parent} & \multicolumn{4}{c|}{child} & \multirow{3}{*}{transition probability} \\
  \cline{1-3}
  \cline{4-7}
    \multicolumn{2}{|c|}{type} & \multirow{2}{*}{probability} & \multicolumn{2}{c|}{type} & \multirow{2}{*}{same as $\ell_1$} & \multirow{2}{*}{prob.} & \\
   \cline{1-2}
   \cline{4-5}      
    $\ell_1$ & $\ell_2$ &  & $\ell_1$ & $\ell_2$ &  &  & \\
  \hline
  \rule{0pt}{11pt}\multirow{3}[3]{*}{m/r} & \multirow{3}[3]{*}{m} & \multirow{3}[3]{*}{$\frac{2ri_1+(1+r)i_3}{2ri_1+(1+r)(i_2+i_3)+2i_4}$} & r & r & Y & $\frac{1}{N-1}$ & \multirow{3}[3]{*}{$\frac{[2ri_1+(1+r)i_2][2ri_1+(1+r)i_3]i_4(i_2+i_3+1)}{[2ri_1+(1+r)(i_2+i_3)+2i_4]^2(N-1)^2}$} \\[0.5ex] 
  \cline{4-7}
  \rule{0pt}{11pt}& & & r & r & N & $\frac{i_4}{N-1}$ & \\[0.5ex]
  \cline{4-7}
  \rule{0pt}{11pt}& & & m & r & N & $\frac{i_2}{N-1}$ & \\[0.5ex]
  \cline{4-7}
  \hline
  \rule{0pt}{13pt}m/r & m & $\frac{2ri_1+(1+r)i_3}{2ri_1+(1+r)(i_2+i_3)+2i_4}$ & r & r & N & $\frac{i_4-1}{N-1}$ & $\frac{[2ri_1+(1+r)i_2][2ri_1+(1+r)i_3]i_4(i_4-1)}{[2ri_1+(1+r)(i_2+i_3)+2i_4]^2(N-1)^2}$\\[1ex]
  \hline
  \rule{0pt}{13pt}m/r & m & $\frac{2ri_1+(1+r)i_3}{2ri_1+(1+r)(i_2+i_3)+2i_4}$ & m & r & N & $\frac{i_2}{N-1}$ & $\frac{[2ri_1+(1+r)i_2][2ri_1+(1+r)i_3]i_2i_3}{[2ri_1+(1+r)(i_2+i_3)+2i_4]^2(N-1)^2}$\\[1ex]
  \hline
  \rule{0pt}{11pt}\multirow{3}[3]{*}{m/r} & \multirow{3}[3]{*}{r} & \multirow{3}[3]{*}{$\frac{(1+r)i_2+2i_4}{2ri_1+(1+r)(i_2+i_3)+2i_4}$} & m & m & N & $\frac{i_1}{N-1}$ & \multirow{3}[3]{*}{$\frac{[2ri_1+(1+r)i_2][(1+r)i_2+2i_4]i_3(i_1+i_4+1)}{[2ri_1+(1+r)(i_2+i_3)+2i_4]^2(N-1)^2}$} \\[0.5ex] 
  \cline{4-7}
  \rule{0pt}{11pt}& & & r & m & Y & $\frac{1}{N-1}$ & \\[0.5ex]
  \cline{4-7}
  \rule{0pt}{11pt}& & & r & m & N & $\frac{i_3}{N-1}$ & \\[0.5ex]
  \cline{4-7}
  \hline
  \rule{0pt}{13pt}m/r & r & $\frac{(1+r)i_2+2i_4}{2ri_1+(1+r)(i_2+i_3)+2i_4}$ & r & m & N & $\frac{i_3-1}{N-1}$ & $\frac{[2ri_1+(1+r)i_2][(1+r)i_2+2i_4]i_3(i_3-1)}{[2ri_1+(1+r)(i_2+i_3)+2i_4]^2(N-1)^2}$\\[1ex]
  \hline
  \rule{0pt}{13pt}m/r & r & $\frac{(1+r)i_2+2i_4}{2ri_1+(1+r)(i_2+i_3)+2i_4}$ & m & m & N & $\frac{i_1}{N-1}$ & $\frac{[2ri_1+(1+r)i_2][(1+r)i_2+2i_4]i_1i_4}{[2ri_1+(1+r)(i_2+i_3)+2i_4]^2(N-1)^2}$\\[1ex]
  \hline
  \rule{0pt}{16pt}m/r & m & $\frac{2ri_1+(1+r)i_3}{2ri_1+(1+r)(i_2+i_3)+2i_4}$ & m/r & m & Y/N & $\frac{i_1+i_3-1}{N-1}$ & \multirow{2}{*}{\shortstack{$\frac{[2ri_1+(1+r)i_2]i_3}{[2ri_1+(1+r)(i_2+i_3)+2i_4]^2(N-1)^2}\times$\\$\{[2ri_1+(1+r)i_3](i_1+i_3-1)+$\\$[(1+r)i_2+2i_4](i_2+i_4-1)\}$}} \\[1.8ex] 
  \cline{1-3}
  \cline{4-7}
  \rule{0pt}{16pt}m/r & r & $\frac{(1+r)i_2+2i_4}{2ri_1+(1+r)(i_2+i_3)+2i_4}$ & m/r & r & N & $\frac{i_2+i_4-1}{N-1}$ & \\[1.8ex]
  \cline{1-3}
  \cline{4-7}
  \hline
  \rule{0pt}{16pt}m/r & m & $\frac{2ri_1+(1+r)i_3}{2ri_1+(1+r)(i_2+i_3)+2i_4}$ & m/r & m & N & $\frac{i_1+i_3-1}{N-1}$ & \multirow{2}{*}{\shortstack{$\frac{[2ri_1+(1+r)i_2]i_4}{[2ri_1+(1+r)(i_2+i_3)+2i_4]^2(N-1)^2}\times$\\$\{[2ri_1+(1+r)i_3](i_1+i_3-1)+$\\$[(1+r)i_2+2i_4](i_2+i_4-1)\}$}} \\[1.8ex] 
  \cline{1-3}
  \cline{4-7}
  \rule{0pt}{16pt}m/r & r & $\frac{(1+r)i_2+2i_4}{2ri_1+(1+r)(i_2+i_3)+2i_4}$ & m/r & r & Y/N & $\frac{i_2+i_4-1}{N-1}$ & \\[1.8ex]
  \cline{1-3}
  \cline{4-7}
  \hline
  \rule{0pt}{11pt}\multirow{3}[3]{*}{m/r} & \multirow{3}[3]{*}{m} & \multirow{3}[3]{*}{$\frac{2ri_1+(1+r)i_3}{2ri_1+(1+r)(i_2+i_3)+2i_4}$} & m & r & N & $\frac{i_2}{N-1}$ & \multirow{3}[3]{*}{$\frac{[(1+r)i_3+2i_4][2ri_1+(1+r)i_3]i_2(i_1+i_4+1)}{[2ri_1+(1+r)(i_2+i_3)+2i_4]^2(N-1)^2}$} \\[0.5ex] 
  \cline{4-7}
  \rule{0pt}{11pt}& & & m & r & Y & $\frac{1}{N-1}$ & \\[0.5ex] 
  \cline{4-7}
  \rule{0pt}{11pt}& & & r & r & N & $\frac{i_4}{N-1}$ & \\[0.5ex] 
  \cline{4-7}
  \hline
  \rule{0pt}{13pt}m/r & m & $\frac{2ri_1+(1+r)i_3}{2ri_1+(1+r)(i_2+i_3)+2i_4}$ & r & r & N & $\frac{i_4}{N-1}$ & $\frac{[(1+r)i_3+2i_4][2ri_1+(1+r)i_3]i_1i_4}{[2ri_1+(1+r)(i_2+i_3)+2i_4]^2(N-1)^2}$\\[1ex]
  \hline
  \rule{0pt}{13pt}m/r & m & $\frac{2ri_1+(1+r)i_3}{2ri_1+(1+r)(i_2+i_3)+2i_4}$ & m & r & N & $\frac{i_2-1}{N-1}$ & $\frac{[(1+r)i_3+2i_4][2ri_1+(1+r)i_3]i_2(i_2-1)}{[2ri_1+(1+r)(i_2+i_3)+2i_4]^2(N-1)^2}$\\[1ex]
  \hline
  \rule{0pt}{11pt}\multirow{3}[3]{*}{m/r} & \multirow{3}[3]{*}{r} & \multirow{3}[3]{*}{$\frac{(1+r)i_2+2i_4}{2ri_1+(1+r)(i_2+i_3)+2i_4}$} & m & m & Y & $\frac{1}{N-1}$ & \multirow{3}[3]{*}{$\frac{[(1+r)i_3+2i_4][(1+r)i_2+2i_4]i_1(i_2+i_3+1)}{[2ri_1+(1+r)(i_2+i_3)+2i_4]^2(N-1)^2}$} \\[0.5ex] 
  \cline{4-7}
  \rule{0pt}{11pt}& & & r & m & N & $\frac{i_3}{N-1}$ & \\[0.5ex] 
  \cline{4-7}
  \rule{0pt}{11pt}& & & m & m & N & $\frac{i_1}{N-1}$ & \\[0.5ex] 
  \cline{4-7}
  \hline
  \rule{0pt}{13pt}m/r & r & $\frac{(1+r)i_2+2i_4}{2ri_1+(1+r)(i_2+i_3)+2i_4}$ & m & m & N & $\frac{i_1-1}{N-1}$ & $\frac{[(1+r)i_3+2i_4][(1+r)i_2+2i_4]i_1(i_1-1)}{[2ri_1+(1+r)(i_2+i_3)+2i_4]^2(N-1)^2}$\\[1ex]
  \hline
  \rule{0pt}{13pt}m/r & r & $\frac{(1+r)i_2+2i_4}{2ri_1+(1+r)(i_2+i_3)+2i_4}$ & r & m & N & $\frac{i_3}{N-1}$ & $\frac{[(1+r)i_3+2i_4][(1+r)i_2+2i_4]i_2i_3}{[2ri_1+(1+r)(i_2+i_3)+2i_4]^2(N-1)^2}$\\[1ex]
  \hline
  \rule{0pt}{16pt}m/r & m & $\frac{2ri_1+(1+r)i_3}{2ri_1+(1+r)(i_2+i_3)+2i_4}$ & m/r & m & Y/N & $\frac{i_1+i_3-1}{N-1}$ & \multirow{2}{*}{\shortstack{$\frac{[(1+r)i_3+2i_4]i_1}{[2ri_1+(1+r)(i_2+i_3)+2i_4]^2(N-1)^2}\times$\\$\{[2ri_1+(1+r)i_3](i_1+i_3-1)+$\\$[(1+r)i_2+2i_4](i_2+i_4-1)\}$}} \\[1.8ex] 
  \cline{1-3}
  \cline{4-7}
  \rule{0pt}{16pt}m/r & r & $\frac{(1+r)i_2+2i_4}{2ri_1+(1+r)(i_2+i_3)+2i_4}$ & m/r & r & N & $\frac{i_2+i_4-1}{N-1}$ & \\[1.8ex] 
  \cline{1-3}
  \cline{4-7}
  \hline
  \end{tabular}
\captionof{table}{\label{table-complete-m2-layer2}Twenty types of state transition in $\ell_2$ from state $(i_1,i_2,i_3,i_4)$ for model 2 on the coupled complete graph.}
\end{table}

\begin{table}[H]
\begin{tabular}{ | c | c | c | c | c | c | c | M{15em} |} 
  \hline
  \multicolumn{3}{|c|}{parent} & \multicolumn{4}{c|}{child} & \multirow{3}{*}{transition probability} \\
  \cline{1-3}
  \cline{4-7}
    \multicolumn{2}{|c|}{type} & \multirow{2}{*}{probability} & \multicolumn{2}{c|}{type} & \multirow{2}{*}{same as $\ell_1$} & \multirow{2}{*}{prob.} & \\
   \cline{1-2}
   \cline{4-5}      
    $\ell_1$ & $\ell_2$ &  & $\ell_1$ & $\ell_2$ &  &  & \\
\hline
  \rule{0pt}{16pt}m/r & m & $\frac{2ri_1+(1+r)i_3}{2ri_1+(1+r)(i_2+i_3)+2i_4}$ & m/r & m & N & $\frac{i_1+i_3-1}{N-1}$ & \multirow{2}{*}{\shortstack{$\frac{[(1+r)i_3+2i_4]i_2}{[2ri_1+(1+r)(i_2+i_3)+2i_4]^2(N-1)^2}\times$\\$\{[2ri_1+(1+r)i_3](i_1+i_3-1)+$\\$[(1+r)i_2+2i_4](i_2+i_4-1)\}$}} \\[1.8ex]  
  \cline{1-3}
  \cline{4-7}
  \rule{0pt}{16pt}m/r & r & $\frac{(1+r)i_2+2i_4}{2ri_1+(1+r)(i_2+i_3)+2i_4}$ & m/r & r & Y/N & $\frac{i_2+i_4-1}{N-1}$ & \\[1.8ex] 
  \cline{1-3}
  \cline{4-7}
  \hline
  \rule{0pt}{16pt}\multirow{2}[5]{*}{m/r} & \multirow{2}[5]{*}{m} & \multirow{2}[5]{*}{$\frac{2ri_1+(1+r)i_3}{2ri_1+(1+r)(i_2+i_3)+2i_4}$} & m & r & Y/N & $\frac{i_2}{N-1}$ & \multirow{2}{*}{\shortstack{$\frac{[2ri_1+(1+r)i_3]i_2}{[2ri_1+(1+r)(i_2+i_3)+2i_4]^2(N-1)^2}\times$\\$\{[2ri_1+(1+r)i_2](i_1+i_2-1)+$\\$[(1+r)i_3+2i_4](i_3+i_4-1)\}$}} \\[1.8ex]  
  \cline{4-7}
  \rule{0pt}{16pt}& & & m & r & N & $\frac{i_2}{N-1}$ & \\[1.8ex] 
  \cline{4-7}    
  \hline
\rule{0pt}{16pt}\multirow{2}[5]{*}{m/r} & \multirow{2}[5]{*}{m} & \multirow{2}[5]{*}{$\frac{2ri_1+(1+r)i_3}{2ri_1+(1+r)(i_2+i_3)+2i_4}$} & r & r & N & $\frac{i_4}{N-1}$ & \multirow{2}{*}{\shortstack{$\frac{[2ri_1+(1+r)i_3]i_4}{[2ri_1+(1+r)(i_2+i_3)+2i_4]^2(N-1)^2}\times$\\$\{[2ri_1+(1+r)i_2](i_1+i_2-1)+$\\$[(1+r)i_3+2i_4](i_3+i_4-1)\}$}} \\ [1.8ex] 
  \cline{4-7}
  \rule{0pt}{16pt}& & & r & r & Y/N & $\frac{i_4}{N-1}$ & \\[1.8ex] 
  \cline{4-7}
  \hline
  \rule{0pt}{16pt}\multirow{2}[5]{*}{m/r} & \multirow{2}[5]{*}{r} & \multirow{2}[5]{*}{$\frac{(1+r)i_2+2i_4}{2ri_1+(1+r)(i_2+i_3)+2i_4}$} & m & m & Y/N & $\frac{i_1}{N-1}$ & \multirow{2}{*}{\shortstack{$\frac{[2ri_1+(1+r)i_3]i_4}{[2ri_1+(1+r)(i_2+i_3)+2i_4]^2(N-1)^2}\times$\\$\{[2ri_1+(1+r)i_2](i_1+i_2-1)+$\\$[(1+r)i_3+2i_4](i_3+i_4-1)\}$}} \\[1.8ex]  
  \cline{4-7}
  \rule{0pt}{16pt}& & & m & m & N & $\frac{i_1}{N-1}$ & \\[1.8ex] 
  \cline{4-7}
  \hline
  \rule{0pt}{16pt}\multirow{2}[5]{*}{m/r} & \multirow{2}[5]{*}{r} & \multirow{2}[5]{*}{$\frac{(1+r)i_2+2i_4}{2ri_1+(1+r)(i_2+i_3)+2i_4}$} & r & m & N & $\frac{i_3}{N-1}$ & \multirow{2}{*}{\shortstack{$\frac{[(1+r)i_2+2i_4]i_3}{[2ri_1+(1+r)(i_2+i_3)+2i_4]^2(N-1)^2}\times$\\$\{[2ri_1+(1+r)i_2](i_1+i_2-1)+$\\$[(1+r)i_3+2i_4](i_3+i_4-1)\}$}} \\[1.8ex]  
  \cline{4-7}
  \rule{0pt}{16pt}& & & r & m & Y/N & $\frac{i_3}{N-1}$ & \\[1.8ex] 
  \cline{4-7}
  \hline
\end{tabular}
\end{table}

\end{document}